\providecommand{\R}{\mathbb{R}}
\providecommand{\Q}{\mathbb{Q}}
\providecommand{\C}{\mathbb{C}}
\renewcommand{\C}{\mathbb{C}}
\providecommand{\T}{\mathbb{T}}
\providecommand{\N}{\mathbb{N}}
\providecommand{\Z}{\mathbb{Z}}
\providecommand{\eps}{\varepsilon}
\providecommand{\ran}{\mathrm{ran} \, }
\providecommand{\abs}[1]{\left \lvert #1 \right \rvert}
\providecommand{\sabs}[1]{\lvert #1 \vert}
\providecommand{\babs}[1]{\bigl \lvert #1 \bigr \rvert}
\providecommand{\Babs}[1]{\Bigl \lvert #1 \Bigr \rvert}
\providecommand{\norm}[1]{\left \lVert #1 \right \rVert}
\providecommand{\snorm}[1]{\lVert #1 \rVert}
\providecommand{\bnorm}[1]{\bigl \lVert #1 \bigr \rVert}
\providecommand{\scpro}[2]{\left \langle #1 , #2 \right \rangle}
\providecommand{\sscpro}[2]{\langle #1 , #2 \rangle}
\providecommand{\bscpro}[2]{\bigl \langle #1 , #2 \bigr \rangle}
\providecommand{\sket}[1]{\vert #1 \rangle}
\providecommand{\sbra}[1]{\langle #1 \vert}
\providecommand{\opro}[2]{\left \vert #1 \right \rangle \left \langle #2 \right \vert}
\providecommand{\dd}{\mathrm{d}}
\providecommand{\id}{\mathrm{id}}
\renewcommand{\Im}{\mathrm{Im} \,}
\providecommand{\ie}{i.~e.~}
\providecommand{\eg}{e.~g.~}
\theoremstyle{plain}
\newtheorem{thm}{Theorem}[section]
\newtheorem{defn}[thm]{Definition}
\newtheorem{lem}[thm]{Lemma}
\newtheorem{cor}[thm]{Corrolary}
\newtheorem{prop}[thm]{Proposition}
\newtheorem{assumption}[thm]{Assumption}
\newtheorem{remark}[thm]{Remark}
\theoremstyle{nonumberplain}
\newtheorem{proof}{Proof}
\SetMathAlphabet{\mathcal}{normal}{OMS}{cmsy}{m}{n} % fixes ugly \mathcals
\SetMathAlphabet{\mathcal}{bold}{OMS}{cmsy}{m}{n} % fixes ugly \mathcals
\numberwithin{equation}{section}
\providecommand{\Proof}{\begin{proof}[\textsc{\bf{Proof}}]}
\providecommand{\CVD}{\end{proof}}
\renewcommand{\T}{\mathbb{T}}
\renewcommand{\C}{\mathbb{C}}
\providecommand{\bbb}[1]{\mathcal{#1}}
\providecommand{\ncint}{\mathrel{{\ooalign{$\int$\cr\kern+.07em\raise.15ex\hbox{$\pmb{\scriptstyle-}$}\cr}}}}           %NC OPER
\providecommand{\ncpartial}{\mathrel{{\ooalign{$\partial$\cr\kern+.29em\raise.79ex\hbox{$\pmb{\scriptstyle-}$}\cr}}}}
\newcommand{\ii}{\,{\rm i}\,}				%TIPOGRAPHIC
\def\dd{{\rm d}}
\providecommand{\Fourier}{\mathcal{F}}
\providecommand{\Cont}{\mathcal{C}}
\providecommand{\WS}{\mathcal{W}}
\providecommand{\BZ}{\mathcal{B}^d}
\providecommand{\BF}{\mathcal{U}_{\mathrm{BF}}}
\providecommand{\Usplit}{\mathcal{U}_{\mathrm{split}}}
\providecommand{\spec}{\mathrm{Spec}}
\providecommand{\specrel}{\mathsf{S}}
\providecommand{\Hil}{\mathcal{H}}
\providecommand{\Hrel}{{\Hil_{\specrel}}}
\providecommand{\bundle}{\xi}
\providecommand{\bspace}{\mathcal{E}}
\providecommand{\parity}{\mathcal{P}}
\renewcommand{\ran}{\mathrm{Ran} \, }
\providecommand{\e}{\mathrm{e}}
\providecommand{\vecm}{\mathrm{Vec}_{\C}^m}
\providecommand{\cf}{cf.~}
\theoremstyle{nonumberplain}
\newtheorem{question}{Question}
\title{\LARGE Exponentially Localized Wannier Functions\\ in Periodic Zero Flux Magnetic Fields}
\author{$\text{G. De Nittis}^*$ \& $\text{M. Lein}^{**}$}
\date{\today}
\begin{document}

%\pagenumbering{roman}           % Roman numerals from abstract to text
\maketitle              % you need to define \title{..}
\vspace{-9mm}
\begin{center}
	$^\ast$ LAGA, Institut Galil\'{e}e, Universit\'{e} Paris 13 \linebreak
	99, avenue J.-B. Cl\'{e}ment, F-93430 Villetaneuse, France. \linebreak
	{\footnotesize \texttt{denittis@math.univ-paris13.fr}}\linebreak
	\linebreak
	$^{\ast\ast}$ Eberhard Karls Universität Tübingen, Mathematisches Institut \linebreak
	Auf der Morgenstelle 10, 72076 Tübingen, Germany. \linebreak
	{\footnotesize \texttt{lein@ma.tum.de}}
\end{center}
\thispagestyle{empty}          % no page number on THIS page
\begin{abstract}
	In this work, we investigate conditions which ensure the existence of an exponentially localized Wannier basis for a given periodic hamiltonian. We extend previous results in \cite{Panati:triviality_Bloch_bundle:2006} to include periodic zero flux magnetic fields which is the setting also investigated in \cite{Kuchment:exponential_decaying_wannier:2009}.
	The new notion of \emph{magnetic symmetry} plays a crucial rôle; to a large class of symmetries for a non-magnetic system, one can associate ``magnetic'' symmetries of the related magnetic system. 
	Observing that the existence of an exponentially localized Wannier basis is equivalent to the triviality of the so-called Bloch bundle, a rank $m$ hermitian vector bundle over the Brillouin zone, we prove that \emph{magnetic} time-reversal symmetry is sufficient to ensure the triviality of the Bloch bundle in spatial dimension $d=1,2,3$.
	For $d = 4$,  an exponentially localized  Wannier basis exists provided that the \emph{trace per unit volume} of a suitable function of the Fermi projection vanishes.
	For $d>4$ and $d\leqslant 2m$ (stable rank regime) only the exponential localization of a subset of Wannier functions is shown; this improves part of the analysis of \cite{Kuchment:exponential_decaying_wannier:2009}.
	Finally, for $d>4$ and $d> 2m$ (unstable rank regime) we show that the mere analysis of Chern classes does not suffice in order to prove trivility and thus exponential localization.
\end{abstract}
\noindent{\scriptsize \textbf{Key words: Wannier functions, magnetic symmetries, Bloch vector bundle, Chern classes.}} \\ 
{\scriptsize \textbf{MSC 2010:}  35P05; 53C80; 57R22; 81Q10}

\tableofcontents

\pagestyle{headings}

\section{Introduction} % (fold)
\label{intro}
A simple, but standard quantum model to study the conductance properties of a crystalline solid is provided by the one-particle hamiltonian 
\begin{align}\label{intro:eqn:HA}
	H^A = \bigl ( - \ii \nabla_x - A(\hat{x}) \bigr )^2 + V(\hat{x})
\end{align}
on $L^2(\R^d)$ where $V$ and $A$ are effective electric and magnetic potentials generated by the nuclei and all other electrons. The vector potential $A = (A_1 , \ldots , A_d)$ is associated to the magnetic field $B: = (B_{ij})$ with components $B_{ij} := \partial_{x_i} A_j - \partial_{x_j} A_i$ and $1 \leqslant i,j \leqslant d$. We will always assume the functions $V$, $A$ and consequently $B$ are periodic with respect to the lattice 
\begin{align*}
	\Gamma := \Bigl \{ \gamma \in \R^d \; \; \big \vert \; \; \mbox{$\gamma = \sum_{j = 1}^d n_j \, e_j$, } \quad n_j \in \Z, \; j = 1 , \ldots , d \Bigr \} 
\end{align*}
where $\{e_1 , \ldots , e_d \}$ is a basis of $\R^d$. Throughout the paper, we will make the following 
\begin{assumption}\label{bloch_floquet:assumption:HA}
	\begin{enumerate}[(i)]
		\item $V \in L^1_{\mathrm{loc}}(\R^d)$ is $\Gamma$-periodic. 
		\item The components of $B$ are $\Gamma$-periodic and $\Cont^1$. Furthermore, $B$ satisfies the following \emph{zero flux} condition: for all $1 \leqslant j < l \leqslant d$
		\begin{align}
			\int_{C_{jl}} B = 0 
			\label{bloch_floquet:eqn:zero_flux_condition}
		\end{align}
		holds where $C_{jl}$ denotes the parallelogram with vertices $0$, $e_j$, $e_j + e_l$ and $e_l$. 
	\end{enumerate}
\end{assumption}
\begin{remark}[Vector potentials]\label{rem:vec_pot}
	The assumptions obviously include the case $B = 0$. To any magnetic field satisfying (ii), we can associate a $\Gamma$-periodic vector potential $A$ such that $\dd A = B$ \cite[Proposition 1]{Hempel_Herbst:band_gaps_periodic_mag_hamiltonians:1994}. \emph{In fact, unless explicitly stated otherwise, we will always assume the vector potential of choice $A$ is $\Gamma$-periodic.} 
\end{remark}
Under these assumptions, $H^A$ defines a selfadjoint operator on a suitable domain in $L^2(\R^d)$. 
Moreover, its spectrum is absolutely continuous \cite{Suslina:ac_spectra_periodic_operators:2000} and has a \emph{band structure}, \ie it can be seen as the locally finite union of closed intervals called \emph{bands} which are separated from each other by \emph{gaps} \cite{Hempel_Herbst:band_gaps_periodic_mag_hamiltonians:1994,Bruening_Sunada:gauge-periodic_elliptic_operators:1992,Bruening_Sunada:periodic_elliptic_operators:1992,Kuchment:Floquet_theory:1993}. 
\medskip

\noindent
Any efficient description of physical properties of crystalline solids makes use of the translational symmetry. In quantum mechanics, ``efficient'' often means two things: (1) one focusses on a ``relevant'' energy regime and (2) one chooses a ``good'' basis for the states that lie in the energy range of interest. In solid state physics, the relevant energy range is typically the energies below the Fermi energy.  There are two standard choices of a basis: \emph{Bloch functions} are more appropriate to describe delocalized phenomena such as conduction as they are the analogs of plane waves \cite{Wilcox:theory_Bloch_waves:1978}. Or one can use \emph{Wannier functions} which are the analog of Dirac $\delta$ functions or molecular orbitals and are typically well-localized in a region around a specific fundamental cell \cite{Marzari_Vanderbilt:maximally_localized_Wannier_functions:1997,Marzari_Souza_Vanderbilt:maximally_localized_Wannier_functions:2003}. Hence, Wannier functions are more appropriate to justify tight-binding models, for instance \cite{Ashcroft_Mermin:solid_state_physics:2001,Grosso_Parravicini:solid_state_physics:2003,Wannier:Wannier_functions:1937}. 

A characteristic of gapped semiconductors and insulators is the existence of gaps which separate the relevant part of the spectrum $\specrel \subset \spec(H^A)$ from the remainder, namely 
\begin{align}
	\mathrm{dist} \bigl ( \specrel , \spec(H^A) \setminus \specrel \bigr ) > 0
	.
	\label{eq:G1}
\end{align}
If the Fermi energy $E_{\mathrm{F}}$ lies in a spectral gap, then the material is an insulator or a semiconductor depending on the width of the gap. In this case, one generally focusses on the portion of the spectrum that lies below $E_{\mathrm{F}}$, namely one chooses $\specrel = \spec(H^A) \cap (-\infty,E_{\mathrm{F}}]$ as relevant part of the spectrum. Later on, we will see that a \emph{global} gap in the sense of equation~\eqref{eq:G1} is not necessary, a \emph{local} gap as in Assumption~\ref{bloch_floquet:assumption:spectral_gap} suffices. 

Since $\spec(H^A)$ consists of closed intervals, $\specrel$ is also the union of closed intervals. Due to the spectral gap, the spectral projection 
\begin{align}
	P_{\specrel} := 1_{\specrel}(H^A) 
	= \frac{\ii}{2 \pi} \int_{\mathsf{C}} \dd \zeta \, \bigl ( H^A - \zeta \bigr )^{-1} 
\end{align}
can be written either in terms of the characteristic function $1_{\specrel}$ associated to the Borel set $\specrel \subset \C$ or as a Cauchy integral involving a countour $\mathsf{C} \subset \C$ enclosing $\specrel$. To simplify notation, we have suppressed the dependence of $P_{\specrel}$ on the vector potential $A$. Associated to $\specrel$, we can define a special family of basis functions: 
\begin{defn}[Wannier system]\label{intro:defn:Wannier_system}
	A Wannier system $\{ w_1 , \ldots , w_m \} \subset L^2(\R^d)$ associated to the projection $P_{\specrel}$ is a family of orthonormal functions so that their translates $w_{j,\gamma}(\cdot) := w_j(\cdot - \gamma)$ are mutually orthonormal, $\sscpro{w_{j,\gamma}}{w_{j',\gamma'}} = \delta_{j j'} \, \delta_{\gamma \gamma'}$, and one can write the projection as 
	\begin{align*}
		P_{\specrel} &= \sum_{j = 1}^m \sum_{\gamma \in \Gamma} \sket{w_{j,\gamma}}\sbra{w_{j,\gamma}}
		. 
	\end{align*}
	The functions $w_j$ are known as \emph{Wannier functions} and the integer $m$ is referred to as \emph{geometric rank of $P_{\specrel}$.} 
\end{defn}
The geometric rank $m$ of any Wannier system associated to $\specrel$ turns out to be independent of the particular choice of Wannier system (cf.~Section~\ref{bloch_floquet:wannier_functions}). 

Numerically, one would like to work with ``quickly decaying'' Wannier functions since then, many computational methods scale linearly with system size \cite{Marzari_Vanderbilt:maximally_localized_Wannier_functions:1997,Marzari_Souza_Vanderbilt:maximally_localized_Wannier_functions:2003,wannier.org:2011}. 
As we will see later on, the relevant notion is that of exponential decay in a $L^2$ sense, namely:% 
\begin{defn}[Exponential decay]\label{dfn:ex_dec}
	Let $K \subset \R^d$ be any compact subsetset,  $ K - \gamma$ its translate by $\gamma \in \Gamma$ and $1_{K-\gamma}$ the associated characteristic function. We say that $w \in L^2(\R^d)$ is exponentially localized if and only if there exists a $b \in (0,+\infty)$ such that 
	\begin{align*}
		\sup_{\gamma \in \Gamma} \bnorm{1_{K-\gamma}\ w}_{L^2(\R^d)} \, \e^{b \abs{\gamma}} < \infty 
	\end{align*}
	holds. 
\end{defn}
Obviously, this definition does not depend on the particular choice of $K$ and we may use the Wigner-Seitz cell $\WS = K$ (cf.~Section~\ref{bloch_floquet}), for instance. 
\medskip

\noindent
This paper concerns itself with the following 
\begin{question}
	Under which conditions on the spatial dimension $d$ and the geometric rank $m$ of $P_{\specrel}$ can we guarantee the existence of an exponentially localized Wannier system?
\end{question}
We will adopt the commonly used strategy which is to rephrase this question in terms of Bloch functions and make use of the rich tool set provided by complex analysis and fiber bundle theory. 
\medskip

\noindent
Let us first recall the connection between Wannier and Bloch functions (further details can be found in Section~\ref{bloch_floquet}). The lattice symmetry of $H^A$ allows us to fiber decompose the hamiltonian in terms of Bloch momentum $k$ by means of the Bloch-Floquet transform $\BF$, and one obtains a family of operators 
\begin{align*}
	H^A(k) = \bigl ( - \ii \nabla_y - A(\hat{y}) \bigr )^2 + V(\hat{y}) 
\end{align*}
on $L^2(\WS)$ with $k$-dependent domains. Here $\WS$ is the Wigner-Seitz cell and $\hat{y}$ denotes the position operator. For each $k \in \R^d$, the operator $H^A(k)$ has compact resolvent and thus its spectrum consists of eigenvalues that accumulate at infinity. A Bloch function $\varphi_n^A(k)$ is then the solution to the eigenvalue equation 
\begin{align*}
	H^A(k) \varphi_n^A(k) &= E_n^B(k) \, \varphi_n^A(k)
\end{align*}
where $E_n^B(k)$ is the $n$-th eigenvalue. As is customary, we will order the $E_n^B(k)$ by magnitude and the functions $k \mapsto E_n^B(k)$ are called \emph{Bloch bands.} As the symbol $E_n^B$ indicates, each Bloch band depends only on the magnetic field. This is a direct consequence of the \emph{gauge covariance} of $H^A(k)$ (cf.~Section~\ref{bloch_floquet:gauge_covariance}). Since $H^A(k)$ is analytic in $k$, the energy band functions $k \mapsto E_n^B(k)$ are analytic away from band crossings. Furthermore, since for each $k$ we are free to multiply $\varphi_n^A(k)$ by a phase, we can choose Bloch functions so that the $k \mapsto \varphi_n^A(k)$ are piecewise analytic \cite{Wilcox:theory_Bloch_waves:1978}. 

Now we can set 
\begin{align}
	w_n^A(x) := (\BF^{-1} \varphi_n^A)(x) := \int_{\BZ} \dd k \, \e^{- \ii k \cdot (x - [x]_{\WS})} \, \varphi_n^A(k,[x]_{\WS}) 
	\label{intro:eqn:wannier_function}
\end{align}
where we have split  $x = [x]_{\WS} + (x - [x]_{\WS})$ into a contribution contained within the fundamental cell $[x]_{\WS} \in \WS$ and a lattice vector $x - [x]_{\WS} \in \Gamma$. The domain $\BZ$ is the usual \emph{Brillouin zone} which is the fundamental cell of the \emph{dual lattice}  $\Gamma^\ast$ whose faces have been identified (cf.~Section~\ref{bloch_floquet:bloch_floquet_transform}) and $\dd k$ denotes the normalized Lebesgue measure on $\BZ$. One can show that $\{ w_1^A , \ldots,  w_m^A \}$ forms a Wannier system in the sense of Definition~\ref{intro:defn:Wannier_system} and the geometric rank $m$ of $P_{\specrel}$ is the number of Bloch bands including multiplicity which make up $\specrel$. Since the $\{ w_1^A,\ldots,  w_m^A \}$ are essentially Fourier transforms of Bloch functions, regularity properties of Bloch functions translate into decay properties of Wannier functions. If the Bloch function $\varphi_n^A(k)$ associated to the $n$-th band is analytic, then by standard arguments (a variant of the \emph{Paley-Wiener Theorem}) this translates into exponential decay in the sense of Definition~\ref{dfn:ex_dec}. 

A priori, it is not even clear whether we can make a choice of phase so that Bloch functions $\varphi_n^A$ are  continuous on all of $\R^d$ \emph{and} $\Gamma^\ast$-periodic in $k$. Even if one were to assume continuity and $\Gamma^\ast$-periodicity, we know that in general, we cannot expect the Bloch functions to be more regular as they are only continuous at band crossings. 
Hence, the associated Wannier functions $w_n^A$ typically decay slowly, a consequence of the discontinuities of $k \mapsto \varphi_n^A(k)$ or its derivatives. However, there still may be another family of analytic and $\Gamma^*$-periodic functions 
\begin{align*}
	\Bigl \{ \psi_j : \BZ \longrightarrow L^2(\WS) \; \big \vert \; j = 1 , \ldots , m \Bigr \}
\end{align*}
such that for each $k$, they are an orthonormal basis of 
\begin{align}
	\Hil_{\specrel}(k) := \ran P_{\specrel}(k)
	% = \mathrm{span} \, \bigl \{ \varphi_1^A(k) , \ldots , \varphi_m^A(k) \bigr \} 
	= \mathrm{span} \, \bigl \{ \psi_1(k) , \ldots , \psi_m(k) \bigr \} 
	\subset L^2(\WS)
	\label{intro:eqn:fiber_space}
	. 
\end{align}
Then by virtue of the Paley-Wiener theorem, the associated Wannier system $\{ w_1,\ldots,w_m \}$, $ w_j:= \BF^{-1} \psi_j$, consists of exponentially decaying functions. 
The dimension $m$ of the spaces $\Hil_{\specrel}(k)$ is independent of $k$ due to the analyticity of the spectral projection $P_{\specrel}(k)$ (given by equation~\eqref{bloch_floquet:eqn:P_k}) and coincides with the geometric rank of $P_{\specrel}$ given in Definition~\ref{intro:defn:Wannier_system}. 
\medskip

\noindent
Before we proceed, let us discuss whether some conditions in the problem of proving the existence of localized Wannier functions can be relaxed or whether asking for exponential decay is the only sensible localization criterion. In principle, one can think of two ways to simplify the question: one can either give up exponential decay or orthonormality. Let us start with decay: one may think that the requirement of exponential decay can be relaxed to polynomial decay, \ie 
\begin{align*}
	\bnorm{1_{\WS - \gamma} \, w_j}_{L^2(\R^d)} \leqslant C (1 + \sabs{\gamma})^{-\beta}
	,
	&&
	\forall \; j = 1 , \ldots , m
	, \;
	\gamma \in \Gamma
	, 
\end{align*}
holds for some $C > 0$ and $\beta > 0$. But interestingly, Kuchment has shown \cite[Theorem~5.4]{Kuchment:exponential_decaying_wannier:2009} that this is not the case: if one can show that the $w_j$ are polynomially localized for $\beta > d$, then one can find a Wannier system which is exponentially localized. If on the other hand $\beta \leqslant d$, then the decay is considered to be too slow to be useful for numerics. 

A second option would be to give up the requirement that the $w_j$ be mutually orthogonal and one works with an overcomplete set of exponentially localized functions. Kuchment has shown that it is always possible to find a family of exponentially localized functions $\{ \tilde{w}_1 , \ldots , \tilde{w}_{m'} \}$ for some $m \leqslant m' \leqslant 2^d m$ such that their translates $\tilde{w}_{j , \gamma} := \tilde{w}_j(\cdot - \gamma)$ span $\ran P_{\specrel}$ \cite[Theorem~5.7]{Kuchment:exponential_decaying_wannier:2009}. So even in cases, where we are unable to find an exponentially localized \emph{basis}, one can still choose $m' > m$ exponentially localized functions $\tilde{w}_j$ that form a so-called $1$-tight (or Parseval) frame rather than an orthonormal basis. Parseval frames, even though they are overcomplete and do not consist of orthonormal functions, retain many of the advantages of a true orthonormal basis. 

%%% history %%%
\subsubsection*{State of the art} % (fold)
\label{intro:previous_works}
Historically, the question of existence of exponentially localized Wannier functions has received a lot of attention, especially in the absence of magnetic fields. 

Let us focus on the results concerning the case $B = 0$. The first rigorous work by Kohn dates back to 1959 \cite{Kohn:analytic_properties_Bloch_Wannier:1959} and treats the $d = 1$ case. Due to the dimensionality, it is always possible to choose analytic Bloch functions and thus find exponentially decaying Wannier functions. Note that since there are no magnetic fields, Kohn's result covers all possibilities in $d = 1$.

Since Kohn's ideas do not extend to higher spatial dimensions, it took until 1983 to see progress for the case $d \geqslant 2$ and $m = 1$: Nenciu \cite{Nenciu:exponential_loc_Wannier:1983} used \emph{time-reversal symmetry} to prove the existence of exponentially localized Wannier functions associated to isolated bands. An independent and elegant constructive proof was given by Helffer and Sjöstrand in 1989 \cite{Helffer_Sjoestrand:mag_Schroedinger_equation:1989}. 

The first to realize the rôle of topological obstructions in proving existence of exponentially localized Wannier functions was Thouless \cite{Thouless:Wannier_functions_mag_subbands:1984}. He noticed that it is not possible to choose exponentially localized Wannier functions if the first Chern class of an associated vector bundle (\ie the \emph{Bloch bundle}, cf.~Section~\ref{bloch_bundle}) does not vanish. Simon \cite{Simon:holonomy_Berrys_phase:1983} had recognized that one such condition which guarantees the vanishing of the first Chern class is the absence of magnetic fields, \ie in the presence of time-reversal symmetry. Nenciu \cite{Nenciu:effective_dynamics_Bloch:1991} stressed the significance of the \emph{Oka principle} (cf.~Section~\ref{bloch_floquet:the_oka_principle}) which meant that proving the existence of \emph{continuous} $\Gamma^*$-periodic Bloch functions implied the existence of \emph{analytic} $\Gamma^*$-periodic Bloch functions. 

In the works of Panati \cite{Panati:triviality_Bloch_bundle:2006}, all these pieces are put together and the triviality of the Bloch bundle for $m \geqslant 1$ and $d \leqslant 3$ is shown, thus covering most physical situations. He shows that the vanishing of the first Chern class suffices to ensure the triviality of the Bloch bundle for such low dimensions  $d$. The link to exponential localization of Wannier functions was explained in a subsequent publication \cite{brouder-panati-et.al-07}. 
\medskip

\noindent
The geometric content of Panati's proof is tied to an important result from the classification theory of vector bundles by Peterson \cite{peterson-59}: it assures  that the vanishing of all Chern classes is a necessary and sufficient condition for the triviality of the Bloch bundles only if $d \leqslant 2 m$, \ie if the fibers are of high enough dimension; this is called the \emph{stable rank condition.} Since only Chern classes $\mathrm{c}_j$ for which $2j \leqslant d$ holds can be non-trivial, for $d = 2 , 3$, one only needs to investigate the first Chern class. By the fortunate coincidence that line bundles, \ie $m = 1$, are completely characterized by their first Chern class, and that for $m \geqslant 2$, the stable rank condition $d \leqslant 2m$ is always satisfied as long as $d \leqslant 3$, the vanishing of the first Chern class is equivalent to the triviality of the bundle. Thus, Panati's result is included as a special case of a deeper fact from the classification theory of vector bundles. 

The above analysis also justifies the absence of any result in higher dimensions. Indeed, to show exponential localization for the relevant case $d = 4$ and $m \geqslant 2$, for instance, one needs to control the second Chern class, and time-reversal symmetry is of no help here (\cf Theorem~\ref{geo_analysis:thm:triviality_odd_c_j}). 
Moreover, if $d \geqslant 5$ and the fibers are low-dimensional, $d > 2m$, then there are examples of non-trivial vector bundles whose Chern classes all vanish (cf.~Section~\ref{geom_analysis:limits_characteristic_classes}). In this sense, the stable rank condition is not a mere technical, but an essential condition, and Peterson's result shows the limitations of using characteristic classes to prove exponential localization of Wannier functions. 
\medskip

\noindent
For the magnetic case $B\neq0$, the literature is more scarce: there are early works by Dubrovin and Novikov \cite{Dubrovin_Novikov:2d_particle_periodic_B:1980} and Novikov \cite{Novikov:magnetic_Bloch_functions_vector_bundles:1980} which treated the case of periodic magnetic field with rational flux perturbed by a weak potential. They recognize that properties of magnetic Bloch functions are determined by the geometry of the Bloch bundle, and that in general, the presence of the magnetic fields makes this vector bundle topologically non-trivial. This is in accord with the previously mentioned paper by Thouless \cite{Thouless:Wannier_functions_mag_subbands:1984}. 

Without making use of bundle theory, Nenciu proved the existence of exponentially localized Wannier functions for a weak constant magnetic field \cite[Theorem~5.2]{Nenciu:effective_dynamics_Bloch:1991}: he has shown that for a single isolated band, the projection associated to magnetic translations of the exponentially localized \emph{non}-magnetic Wannier function is close to the true spectral projection for $B \neq 0$ and then proceeds to prove the existence of an exponentially localized \emph{magnetic} Wannier function by a perturbative argument. His result is not in contradiction to the results mentioned earlier: the weak field condition ensures that one is in the regime where the first Chern class is zero. 

Lastly, we would like to mention again the important work by Kuchment \cite{Kuchment:exponential_decaying_wannier:2009} which is complementary: for magnetic fields which admit periodic vector potentials (our setting), he does not show the triviality of the Bloch bundle, but instead that if one gives up the orthonormality of the Wannier functions: one can always find an overcomplete set of exponentially localized Wannier functions. Indeed, as we shall see in a moment, if the spatial dimension is small enough, it is still possible to have a genuine system of $m$ exponentially localized Wannier functions. 
% subsubsection previous_works (end)

\subsubsection*{New results: magnetic symmetries and geometric conditions on triviality}
The presence of magnetic fields in general changes the topology of the Bloch bundle. In this work though, we will show that for magnetic fields which admit periodic vector potentials, the topology of the Bloch bundle is the same as in the case when $B = 0$. 

In the absence of magnetic fields, time-reversal symmetry is at the heart of most proofs which show the existence of exponentially localized Wannier functions. While magnetic fields break time-reversal symmetry, we show the hamiltonian $H^A$ has \emph{magnetic} time-reversal symmetry. 
\begin{thm}[Magnetic time-reversal symmetry]\label{intro:thm:mag_time-reversal_symmetry}
    Let Assumption~\ref{bloch_floquet:assumption:HA} on $H^A$ be satisfied. Then the operator $H^A$ given by \eqref{intro:eqn:HA} has magnetic time-reversal symmetry, that is, it commutes with the anti-unitary operator $C^A$ defined by 
    \begin{align}
        (C^A \Psi)(x) := \e^{+ \ii 2 \int_{[0,x]} A} \, \Psi^*(x)
    \end{align}
    where $\Psi \in L^2(\R^d)$ and $x \in \R^d$. Furthermore, magnetic time-reversal commutes with lattice translations. 
\end{thm}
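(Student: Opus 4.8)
The plan is to factor the candidate symmetry as $C^A = U \, C_0$, where $C_0$ denotes ordinary complex conjugation, $(C_0 \Psi)(x) = \Psi^*(x)$, and $U$ is multiplication by the unimodular phase $u(x) = \e^{\ii 2 \int_{[0,x]} A}$. Since $C_0$ is antiunitary and $U$ is unitary, the product $C^A = U \, C_0$ is automatically antiunitary, so only the commutation $C^A H^A = H^A C^A$ needs proof. Writing $H^A = \sum_{j=1}^d \pi_j^2 + V$ with magnetic momenta $\pi_j := -\ii \partial_{x_j} - A_j$, I would first reduce everything to the two intertwining relations $C^A \pi_j = -\pi_j \, C^A$ for each $j$ and $C^A V = V C^A$. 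The latter is immediate because $V$ is real and $U$, $V$ both act by multiplication, hence commute; squaring the former then yields $C^A \pi_j^2 = \pi_j^2 C^A$ and therefore the theorem.

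The content sits in the relation $C^A \pi_j = -\pi_j \, C^A$, which I would analyse by separating the two effects of $C^A$. Complex conjugation flips the sign of the differential part while fixing the real multiplication operators, so that $C_0 \, H^A \, C_0 = H^{-A}$, \ie conjugation reverses the magnetic field. The phase $U$ is then required to act as the gauge transformation carrying $-A$ back to $A$. Since $U (-\ii \partial_{x_j}) U^{-1} = -\ii \partial_{x_j} - 2 \, \partial_{x_j} \phi$ with $\phi(x) := \int_{[0,x]} A$, a short computation gives $C^A H^A (C^A)^{-1} = U H^{-A} U^{-1} = \sum_{j=1}^d \bigl ( -\ii \partial_{x_j} - A_j - 2 \, [\partial_{x_j}\phi - A_j] \bigr )^2 + V$, which collapses to $H^A$ exactly when $\nabla \phi = A$.

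I expect the gauge identity $\nabla_x \int_{[0,x]} A = A$ to be the main obstacle. In $d = 1$ it is just the fundamental theorem of calculus. For $d \geqslant 2$ with $B \neq 0$, differentiating the line integral produces, by Stokes' theorem, a correction controlled by $B = \dd A$, so the heart of the proof is to show that this correction does not survive for the integration prescription entering the definition of $C^A$. This is precisely the step at which Assumption~\ref{bloch_floquet:assumption:HA}(ii) — the zero-flux condition~\eqref{bloch_floquet:eqn:zero_flux_condition} together with the existence of a $\Gamma$-periodic vector potential from Remark~\ref{rem:vec_pot} — must be invoked, and I would isolate it as a separate lemma rather than bury it in the main computation.

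Finally, for the commutation with lattice translations $T_\gamma$, $(T_\gamma \Psi)(x) := \Psi(x - \gamma)$, I would reduce $C^A T_\gamma = T_\gamma C^A$ to the requirement that $\phi(x) - \phi(x - \gamma) \in \pi \Z$ for all $\gamma \in \Gamma$, so that the phases $\e^{\ii 2 \phi(x)}$ and $\e^{\ii 2 \phi(x-\gamma)}$ agree. Expressing this difference via Stokes as a line integral of $A$ along a $\Gamma$-translated segment together with a flux of $B = \dd A$, the $\Gamma$-periodicity of $A$ and the zero-flux condition~\eqref{bloch_floquet:eqn:zero_flux_condition} force the result into $\pi \Z$, which closes the argument.
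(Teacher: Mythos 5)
Your overall strategy is the same as the paper's: factor $C^A = U\,C_0$ into a unimodular phase times conjugation, get antiunitarity for free, dispose of $V$ by reality, and reduce the whole theorem to the statement that $U$ implements the gauge transformation carrying $-A$ back to $A$, i.e.\ to the identity $\nabla_x \int_{[0,x]} A = A$. This is precisely the paper's route via Definition~\ref{mag_sym:defn:mag_S-transform} and Theorem~\ref{mag_sym:thm:properties_mag_S-transform}, specialized to $R = -\mathbf{1}$, $U_R = C$, $\hat{A} = -2A$; there your gauge identity appears as the commutator step $\bigl[(-\ii\partial_{x_j}),\lambda^{-1}\bigr] = \hat{A}_j\,\lambda^{-1}$, dispatched as ``a simple computation'', and the translation claim is Proposition~\ref{mag_symm:prop:consequence_mag_time-reversal}~(i). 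The genuine gap in your proposal is exactly the lemma you defer: that the curl correction ``does not survive'' thanks to the zero-flux condition \eqref{bloch_floquet:eqn:zero_flux_condition} and $\Gamma$-periodicity. You never prove this, and it is in fact false.

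Concretely, differentiating the straight-line integral gives
\begin{align*}
	\partial_{x_j} \int_{[0,x]} A
	= A_j(x) + \sum_{k=1}^d x_k \int_0^1 t \, B_{jk}(tx) \, \dd t
	,
\end{align*}
and the correction term is (up to sign) the Poincar\'{e}-gauge potential of $B$: its exterior derivative equals $-B$, so it vanishes identically if and only if $B \equiv 0$. More bluntly, $\nabla_x \int_{[0,x]} A = A$ would exhibit $A$ as a gradient, forcing $B = \dd A = 0$; the zero-flux condition only constrains integrals of $B$ over the cells $C_{jl}$ and has no bearing on this pointwise identity. An explicit counterexample satisfying Assumption~\ref{bloch_floquet:assumption:HA}: in $d=2$ take $A = \bigl(0,\sin(2\pi x_1)\bigr)$, which is periodic with zero flux, yet $\partial_{x_2}\int_{[0,x]} A = \frac{1-\cos(2\pi x_1)}{2\pi x_1} \neq \sin(2\pi x_1)$. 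The same example defeats your translation step: by Stokes, $\phi(x)-\phi(x-\gamma) = \int_{[0,\gamma]} A + \Phi_B(\Delta)$, with $\Phi_B(\Delta)$ the flux through the triangle with vertices $0$, $\gamma$, $x$; this is a continuous, non-constant function of $x$ and so cannot take values in $\pi\Z$. Worse, the gap is not repairable within the ansatz: conjugating $H^A$ by \emph{any} operator of the form $\e^{\ii\theta(\hat{x})}\,C$ produces $H^{\nabla\theta - A}$, and $\nabla\theta - A = A$ again forces $B=0$, so for $B \neq 0$ no operator of the stated form commutes with $H^A$. You should know that the paper's own proof conceals the identical false identity in the commutator step quoted above (it is valid only when $\dd\hat{A}=0$, whereas here $\dd\hat{A} = -2B$), and Proposition~\ref{mag_symm:prop:consequence_mag_time-reversal}~(i) likewise asserts, incorrectly, that $x \mapsto \int_{[0,x]}A$ is $\Gamma$-periodic because $A$ is. So your instinct to isolate the gradient identity as a separate lemma put your finger on exactly the right spot --- but the lemma fails, and with it both your proposal and the paper's argument for $B \neq 0$.
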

The idea of ``magnetic symmetries'' can be put in a larger context: if $H^0 := H^{A = 0}$ has a symmetry with certain natural properties (cf.~Definition~\ref{mag_sym:defn:S-transform}), then we show how to associate a magnetic symmetry to $H^A$ in a canonical way (cf.~Definition~\ref{mag_sym:defn:mag_S-transform}). Gallilean symmetries (\eg translations and rotations), for instance, have a magnetic version. We reckon that magnetic symmetries could be useful in different contexts and this direction will be explored in a future work \cite{DeNittis_Lein:magnetic_symmetries:2011}. 

The presence of this newly found symmetry $C^A$ of $H^A$ is the crucial ingredient to prove 
\begin{thm}[Existence of an exponentially localized Wannier system]\label{intro:thm:existence_exp_loc_Wannier}
    Let Assump-\linebreak tions~\ref{bloch_floquet:assumption:HA} and \ref{bloch_floquet:assumption:spectral_gap} be satisfied, and let us denote the dimension of $\BZ$ and the geometric rank of the Wannier system with $d$ and $m$. Then an exponentially localized Wannier system associated to $P_{\specrel}$ exists if 
    \begin{enumerate}[(i)]
        \item $d \geqslant 1$ and $m = 1$ or 
        \item $d = 1 , 2, 3$ and $m \geqslant 2$. 
    \end{enumerate}
\end{thm}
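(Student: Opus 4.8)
The plan is to run the standard ``geometry'' strategy, now powered by the magnetic symmetry $C^A$ of Theorem~\ref{intro:thm:mag_time-reversal_symmetry}. First I would reduce the statement to a triviality question for the Bloch bundle $\bundle$, the rank $m$ hermitian vector bundle over the Brillouin zone $\BZ$. As explained in the introduction and developed in Section~\ref{bloch_bundle}, the existence of an exponentially localized Wannier system is equivalent (via Paley–Wiener) to the existence of a global \emph{analytic} $\Gamma^\ast$-periodic orthonormal frame of Bloch functions, i.e.\ to the triviality of $\bundle$ as an analytic bundle. By the Oka principle (Section~\ref{bloch_floquet:the_oka_principle}), analytic triviality is equivalent to topological triviality, so it suffices to prove that $\bundle$ is a trivial continuous vector bundle over $\BZ\cong\T^d$.

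The core of the argument is to use $C^A$ to force the first Chern class to vanish. Since $C^A$ is anti-unitary, commutes with $H^A$, and commutes with lattice translations (Theorem~\ref{intro:thm:mag_time-reversal_symmetry}), it fiber-decomposes under the Bloch–Floquet transform $\BF$. As time reversal flips momentum, it induces for every $k$ an anti-linear isomorphism $\Hil_{\specrel}(k)\to\Hil_{\specrel}(-k)$ which, because it intertwines the fiber Hamiltonians, preserves the spectral subspaces $\ran P_{\specrel}(k)$. One checks $(C^A)^2=+\id$, so this is a genuine real (rather than quaternionic) structure and carries no hidden $\Z_2$ obstruction. Packaged together over the involution $\theta\colon k\mapsto -k$ of $\T^d$, these maps give a continuous anti-linear bundle isomorphism, equivalently a complex-linear isomorphism $\bundle\cong\theta^\ast\overline{\bundle}$.

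Taking first Chern classes and using $c_1(\overline{\bundle})=-c_1(\bundle)$ and naturality,
\begin{align*}
	c_1(\bundle) = c_1(\theta^\ast\overline{\bundle}) = \theta^\ast c_1(\overline{\bundle}) = -\,\theta^\ast c_1(\bundle).
\end{align*}
Now $\theta$ acts by $-1$ on $H^1(\T^d;\Z)\cong\Z^d$, and since the cohomology ring of the torus is the exterior algebra on $H^1$, the induced map on $H^2(\T^d;\Z)\cong\wedge^2\Z^d$ is $(-1)^2=+1$. Hence $2\,c_1(\bundle)=0$; as $H^2(\T^d;\Z)$ is torsion-free, we conclude $c_1(\bundle)=0$. (This is exactly the mechanism by which magnetic time-reversal kills odd Chern classes, cf.\ Theorem~\ref{geo_analysis:thm:triviality_odd_c_j}.)

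It remains to invoke bundle classification. In case (i), $m=1$, a complex line bundle over any base is classified up to isomorphism by $c_1$, so $c_1(\bundle)=0$ yields triviality in every dimension $d$. In case (ii), $d\leqslant 3$ with $m\geqslant 2$, the stable rank condition $d\leqslant 2m$ holds automatically, so by Peterson's theorem \cite{peterson-59} triviality of $\bundle$ is equivalent to the vanishing of \emph{all} its Chern classes; but $H^{2j}(\T^d;\Z)=0$ whenever $2j>d$, i.e.\ for $j\geqslant 2$, so only $c_1$ can be nonzero, and it vanishes. In both cases $\bundle$ is topologically trivial, hence analytically trivial by Oka, and an exponentially localized Wannier system exists. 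I expect the only genuinely delicate step to be the middle one: verifying with care that the real-space operator $C^A$ transports through $\BF$ to a \emph{continuous} anti-linear bundle map covering $k\mapsto-k$ that preserves $\ran P_{\specrel}(k)$ fiberwise; once this is in place, the cohomological computation and the appeal to line-bundle/Peterson classification are essentially formal.
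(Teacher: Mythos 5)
Your proposal is correct, and its overall architecture is the same as the paper's: reduce to topological triviality of the Bloch bundle via Paley--Wiener and the Oka principle (Propositions~\ref{bloch_floquet:prop:continuous_family_bloch_exp_loc_wannier} and \ref{bloch_bundle:prop:main_question_bundle_language}), use magnetic time-reversal to identify the conjugate bundle with the pullback under $k \mapsto -k$ (Theorem~\ref{mag_symm:thm:consequence_mag_symmetry_bundle_geometry}), conclude $c_1(\bundle_{\specrel}) = 0$, and then invoke classification theory; for part (ii) with $d = 2,3$ your argument is essentially identical to Corollary~\ref{geo_analysis:cor:low_d}. There are, however, three genuine differences in route. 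First, you derive $c_1 = -c_1$ by letting the involution act on the cohomology ring directly: it is $-\id$ on $H^1(\T^d,\Z)$, hence $+\id$ on $H^2(\T^d,\Z)$ by naturality of the cup product in the exterior algebra; the paper (Theorem~\ref{geo_analysis:thm:triviality_odd_c_j}) instead computes the induced map on \emph{homology} via the K\"{u}nneth formula and concludes through exactness of the Kronecker pairing. Both hinge on torsion-freeness of the torus cohomology and are equivalent; yours is shorter. Second, and more substantively, for part (i) you appeal to the classification of line bundles by their first Chern class, which settles $m=1$ in every dimension $d$ abstractly. The paper's official proof of (i) is instead the \emph{constructive} Proposition~\ref{geom_analysis:prop:triviality_m_1}, which adapts Helffer--Sj\"{o}strand \cite{Helffer_Sjoestrand:mag_Schroedinger_equation:1989}: a global continuous section is built inductively over the dimensions, using $J^A$ to extend local sections and explicit phase corrections to match them across the faces of the Brillouin zone. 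Your abstract argument is exactly what the paper cites as the conceptual explanation of the $m=1$ case, so it is valid; what the constructive proof buys is an explicit section (hence, in principle, an explicit Wannier function) and independence from line-bundle classification. Third, you fold $d=1$, $m \geqslant 2$ into Peterson's theorem \cite{peterson-59}, where the Chern-class condition is vacuous, while the paper uses $[\mathcal{B}^1,G^m_{\C}] = \pi_1 \bigl ( G^m_{\C} \bigr ) = \{0\}$, equation~\eqref{geom_analysis:eqn:homotopy_classes_T1}; both are correct. Finally, the step you flag as delicate --- that $C^A$ descends through $\BF$ to a \emph{continuous} fiberwise anti-linear map covering $k \mapsto -k$ and preserving $\ran P_{\specrel}(k)$ --- is precisely what Proposition~\ref{mag_symm:prop:consequence_mag_time-reversal} and Theorem~\ref{mag_symm:thm:consequence_mag_symmetry_bundle_geometry} establish, so your assessment of where the technical work lies is accurate; note also that your aside on $(C^A)^2 = +\id$ versus a quaternionic structure, while true, is not needed, since the Chern-class argument uses only the complex isomorphism $\bundle_{\specrel} \simeq f^{\ast}(\bundle_{\specrel}^{\ast})$ regardless of that sign.
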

This theorem is proven by combining Proposition~\ref{bloch_bundle:prop:main_question_bundle_language} with Proposition~\ref{geom_analysis:prop:triviality_m_1} for (i) and Corollary~\ref{geo_analysis:cor:low_d} for (ii), respectively. Our proof is very close in spirit to Panati's \cite{Panati:triviality_Bloch_bundle:2006}, but there are a few crucial differences: we use more abstract arguments, \eg cohomology classes and the Kronecker pairing rather than differential forms and integration. This gives us access to powerful tools of algebraic topology for the classification of vector bundles. 

In this language, the presence of time-reversal symmetry leads to a geometric constraint connecting the fibers attached to conjugate points (\cf Theorem~\ref{mag_symm:thm:consequence_mag_symmetry_bundle_geometry}), and the \emph{conjugate} Bloch bundle can be seen as the pullback bundle with respect to the function which maps $k \mapsto -k$. Since Chern classes of a bundle and its conjugate agree up to a sign, it follows that \emph{all} odd Chern classes vanish (\cf Theorem~\ref{geo_analysis:thm:triviality_odd_c_j}). Note that the dimensional constraints are necessary since in higher dimensions $d$, additional topological obstructions (measured by Chern classes, cf.~Section~\ref{geom_analysis:characteristic_classes_and_triviality}) appear and the presence of time-reversal symmetry does \emph{not} suffice to ensure the existence of an exponentially localized Wannier system. 

Apart from being of pure mathematical interest, the case $d = 4$ is also relevant in many physical situations: for instance, if one considers periodic deformations of period $T$ of crystalline solids in $3$ spatial dimensions, then space-time is given by $\mathcal{B}^3 \times  \R / T \Z \cong \T^4$ \cite{Lein:polarization:2005,Panati_Sparber_Teufel:polarization:2006}. In this context,
Section~6 of \cite{Panati_Sparber_Teufel:polarization:2006} which covers the rôle of time-reversal symmetry and parity should be compared to Theorem~\ref{mag_symm:thm:consequence_mag_symmetry_bundle_geometry} and its ramifications on the structure of the Bloch bundle (\cf beginning of Section~\ref{geo_analysis:odd_c_j_zero}). Moreover, the geometric aspects related to the case $d=4$ seem to have interesting connections with gauge theories on ``compactified space-times'' $\T^4$  \cite{Nash:gauge_potentials_4_torus:1983,Schenk:generalized_Fourier_trafo_instantons:1988,Braam_Van_Baal:Nahm_trafo_instantons:1989}. Albeit in $d=4$ it is not possible to prove \emph{a priori} the existence of an exponentially localized Wannier system if $m \geqslant 2$, our analysis yields a 
\begin{thm}[Criterion for exponential localization in $d = 4$]\label{intro:thm:d_4_criterion}
	Let Assumptions~\ref{bloch_floquet:assumption:HA} and \ref{bloch_floquet:assumption:spectral_gap} be satisfied, and assume $d = 4$ and $m \geqslant 2$. Then there exists an exponentially localized Wannier system associated to $P_{\specrel}$ if and only if 
	\begin{align}
	    \mathcal{T}(W_{\specrel}) = 0
	    \label{intro:eqn:d_4_criterion}
	\end{align}
	where $\mathcal{T}$ denotes the \emph{trace per unit volume} (\cf equation \eqref{eq:TPUV}) and $W_{\specrel}$ is related to $P_{\specrel}$ by
	\begin{align*}
		W_{\specrel} := Q_{12}(P_{\specrel}) \, Q_{34}(P_{\specrel}) -  Q_{13}(P_{\specrel}) \, Q_{24}(P_{\specrel}) + Q_{14}(P_{\specrel}) \, Q_{23}(P_{\specrel}) 
	\end{align*}
	with $Q_{ij}(P_{\specrel}) := P_{\specrel} \, \bigl [ \delta_i P_{\specrel} , \delta_j P_{\specrel} \bigr ] \, P_{\specrel}$ and $\delta_j := - \tfrac{\ii}{2\pi} [\hat{x}_j , \cdot]$ the \emph{$j$-th derivative} (\cf equation~\eqref{eq:j-th_der}).
\end{thm}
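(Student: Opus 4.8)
The plan is to reduce the analytic criterion \eqref{intro:eqn:d_4_criterion} to a purely topological statement about the second Chern class of the Bloch bundle, and then to identify the trace per unit volume $\mathcal{T}(W_{\specrel})$ with the corresponding second Chern number. First I would invoke Proposition~\ref{bloch_bundle:prop:main_question_bundle_language} to replace the existence of an exponentially localized Wannier system by the triviality of the rank $m$ Bloch bundle over $\BZ \cong \T^4$. Since $m \geqslant 2$ forces the stable rank condition $d = 4 \leqslant 2m$, Peterson's theorem tells me that triviality is equivalent to the simultaneous vanishing of all Chern classes. In dimension $d = 4$ only $\mathrm{c}_1$ and $\mathrm{c}_2$ can be non-zero, and magnetic time-reversal symmetry already forces every odd Chern class, in particular $\mathrm{c}_1$, to vanish (\cf Theorem~\ref{geo_analysis:thm:triviality_odd_c_j}). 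Hence triviality of the Bloch bundle---and therefore the existence of an exponentially localized Wannier system---is equivalent to $\mathrm{c}_2 = 0$.

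Next I would translate $\mathrm{c}_2 = 0$ into the vanishing of a single number. Because $H^4(\T^4;\Z) \cong \Z$ is torsion-free and the Kronecker pairing with the fundamental class $[\T^4]$ is perfect in top degree, the class $\mathrm{c}_2$ vanishes if and only if the second Chern number $\langle \mathrm{c}_2 , [\T^4] \rangle$ is zero. Using $\mathrm{c}_1 = 0$, the second component of the Chern character reduces to $\mathrm{ch}_2 = -\mathrm{c}_2$, so by Chern--Weil theory this integer is computed, up to a universal constant, by integrating the curvature four-form $\tr(F \wedge F)$ of the Berry connection of $P_{\specrel}$ over $\BZ$. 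The whole theorem thus hinges on the identity $\mathcal{T}(W_{\specrel}) = \langle \mathrm{c}_2 , [\T^4] \rangle$.

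To establish this last identity I would unwind the definitions of $\mathcal{T}$ and of the $\delta_j$. After the Bloch--Floquet transform the trace per unit volume becomes $\mathcal{T}(\,\cdot\,) = \int_{\BZ} \dd k \, \tr_{L^2(\WS)}(\,\cdot\,)$, while the derivative $\delta_j = -\tfrac{\ii}{2\pi}[\hat{x}_j,\cdot]$ turns into $\tfrac{1}{2\pi}\partial_{k_j}$ acting on the fibered projection $P_{\specrel}(k)$. Consequently $Q_{ij}(P_{\specrel})$ becomes the fibered endomorphism $\tfrac{1}{(2\pi)^2} P_{\specrel}(k)\,[\partial_{k_i} P_{\specrel}(k),\partial_{k_j}P_{\specrel}(k)]\,P_{\specrel}(k)$, which is exactly the $(i,j)$-component $F_{ij}$ of the Berry curvature realized as an operator on the fiber $\Hil_{\specrel}(k)$. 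The signed sum $Q_{12}Q_{34} - Q_{13}Q_{24} + Q_{14}Q_{23}$ then reproduces the antisymmetrized contraction $\tfrac14\,\eps_{ijkl}F_{ij}F_{kl}$, i.e. the four-form $F \wedge F$, so that applying $\mathcal{T}$ and taking the fiber trace yields the integral of $\tr(F\wedge F)$ over $\BZ$. The factors of $\ii$ and $2\pi$ built into $\delta_j$ are chosen precisely so that this integral equals the integer $\langle \mathrm{c}_2,[\T^4]\rangle$, closing the chain of equivalences.

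The main obstacle I anticipate is this last step: pinning down all signs and normalizations so that $\mathcal{T}(W_{\specrel})$ is \emph{exactly} the integer second Chern number rather than some nonzero multiple of it, and rigorously justifying the passage from the noncommutative Chern--Weil expression to the topological invariant. This requires controlling the regularity and trace-class properties of $P_{\specrel}(k)$ and its $k$-derivatives (so that the fiber traces converge and the interchange of trace and integration is legitimate), verifying the combinatorial Levi-Civita identity that matches $W_{\specrel}$ to $F \wedge F$, and checking that $\mathrm{c}_1 = 0$ really does eliminate the $\mathrm{c}_1^2$ contribution to $\mathrm{ch}_2$. The reduction steps, by contrast, are immediate consequences of results already in hand.
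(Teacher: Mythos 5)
Your proposal is correct and follows essentially the same route as the paper: Proposition~\ref{bloch_bundle:prop:main_question_bundle_language} reduces the question to triviality of the Bloch bundle, Peterson's theorem (Corollary~\ref{geo_analysis:cor:triviality_Bloch_bundle_Peterson}) combined with the vanishing of odd Chern classes (Theorem~\ref{geo_analysis:thm:triviality_odd_c_j}) reduces triviality to $c_2(\bundle_{\specrel}) = 0$, Chern--Weil theory plus the exact Kronecker pairing turn this into the vanishing of the curvature integral (Theorem~\ref{theo:CC_d4}), and the Bloch--Floquet decomposition of the trace per unit volume and of the derivations $\delta_j$ (Lemmas~\ref{lem_int_BF}, \ref{lem_dev_BF} and Proposition~\ref{prop:trace_deriv}) identifies that integral with $\mathcal{T}(W_{\specrel})$ up to a nonzero constant. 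The one ``main obstacle'' you anticipate---pinning down signs and normalizations so that $\mathcal{T}(W_{\specrel})$ equals the integer second Chern number \emph{exactly}---is actually unnecessary, since the criterion only concerns vanishing, so proportionality up to a nonzero constant (which is all the paper establishes, and which makes your stray factors of $2\pi$ and $\abs{\WS}$ harmless) already closes the argument.
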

This result is an immediate corollary of Theorem \ref{theo:CC_d4} and Proposition \ref{prop:trace_deriv}. The main point is that the bounded operator $W_{\specrel}$ is related to the second differential Chern class of the Bloch bundle and condition~\eqref{intro:eqn:d_4_criterion} implies the vanishing of the latter.
Prodan has proposed an efficient numerical scheme to evaluate quantities like $\mathcal{T}(W_{\specrel})$ \cite{Prodan:topological_insulators:2011}, so the numerical verification of equation~\eqref{intro:eqn:d_4_criterion} provides a criterion to decide whether or not an exponentially localized Wannier system exists in $d=4$. 

Apart from proving the existence of an exponentially localized Wannier system, our results can be applied to space-adiabatic perturbation theory \cite{PST:effDynamics:2003,DeNittis_Lein:Bloch_electron:2009,DeNittis_Panati:effective_models_conductance:2010}: here, one needs a smooth Bloch basis in order to satisfy a technical assumption in the construction (Assumption~$\mathrm{A}_2$ in \cite{PST:effDynamics:2003} or Assumption~3.2 in \cite{DeNittis_Lein:Bloch_electron:2009}). Our result here allows us to extend \cite{DeNittis_Lein:Bloch_electron:2009} to include periodic magnetic fields: any $\Gamma$-periodic magnetic field $B = B_{\mathrm{flux}} + B_0$ can be decomposed into a constant magnetic field $B_{\mathrm{flux}}$ whose flux through the Wigner Seitz cell $\WS$ coincides with the total flux of $B$ and a magnetic field $B_0$ with zero flux through $\WS$. If the total flux of $B$ is small, then $B_{\mathrm{flux}}$ is small and one can regard $B_{\mathrm{flux}}$ as a perturbation of $B_0$. The zero flux field enters the unperturbed hamiltonian by a $\Gamma$-periodic vector potential while the constant field is seen as a perturbation. Thus, one can repeat the arguments in \cite{DeNittis_Lein:Bloch_electron:2009} verbatim, but in this case the unperturbed objects (symbols, opertators, etc.) include the periodic magnetic field and vector potential. 
\medskip

\noindent
In our approach, it is clear how to generalize Theorem~\ref{intro:thm:existence_exp_loc_Wannier} to $d > 4$: in view of Corollary~\ref{geo_analysis:cor:triviality_Bloch_bundle_Peterson} and Theorem~\ref{geo_analysis:thm:triviality_odd_c_j}, an exponentially localized Wannier system exists, provided that $\lfloor \nicefrac{d}{4} \rfloor$ (\ie the \emph{integer part} of  $\nicefrac{d}{4}$) extra conditions are verified, namely the vanishing of the even Chern classes. Otherwise, we can only ensure the exponential localization of some of the Wannier functions. 
\begin{thm}[Partially localized Wannier systems]\label{intro:thm:partial_loc_Wannier}
    Let Assumptions~\ref{bloch_floquet:assumption:HA} and \ref{bloch_floquet:assumption:spectral_gap} be satisfied. Define $\sigma := \max \bigl \{ 0 , m - \lfloor \nicefrac{d}{2} \rfloor \bigr \}$. Then the following statements hold true: 
    \begin{enumerate}[(i)]
        \item At least $\sigma$ generators of the Wannier system are exponentially localized. 
        \item In the special case $d = 4k+2$, at least $\sigma + 1$ generators of the Wannier system are exponentially localized. 
    \end{enumerate}
\end{thm}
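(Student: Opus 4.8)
The plan is to translate the statement into the language of vector bundles and then read off the obstructions from elementary obstruction theory, using Theorem~\ref{geo_analysis:thm:triviality_odd_c_j} to kill the one obstruction that survives into the critical dimension. By the dictionary of Proposition~\ref{bloch_bundle:prop:main_question_bundle_language} (together with the Oka principle of Section~\ref{bloch_floquet:the_oka_principle} and the Paley--Wiener argument), exhibiting $r$ exponentially localized generators of a Wannier system is equivalent to splitting off a trivial rank-$r$ subbundle from the Bloch bundle $\bundle \to \BZ$: a continuous, hence by Oka analytic, orthonormal $r$-frame of $\bundle$ spans a trivial subbundle, its $\BF^{-1}$-images are exponentially localized, and one completes to a full Wannier system by any measurable orthonormal frame of the complement. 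Thus it suffices to prove that the rank-$m$ hermitian bundle $\bundle$ over the $d$-dimensional torus $\BZ \cong \T^d$ admits $\sigma$ everywhere linearly independent sections in case (i), and $\sigma + 1$ in case (ii).

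For (i) I would argue by iterated sections. A nowhere-vanishing section of a rank-$s$ complex bundle over a CW complex of dimension $d$ is a section of the associated $S^{2s-1}$-bundle; since $\pi_q(S^{2s-1}) = 0$ for $q < 2s-1$, the obstructions live in $H^{q+1}(\BZ; \pi_q(S^{2s-1}))$ with $q \geqslant 2s-1$, the primary one being the Euler class $\mathrm{c}_s \in H^{2s}(\BZ; \Z)$. When $2s > d$ every such group vanishes for dimensional reasons, so a section exists unconditionally and, splitting it off, reduces the rank by one. Starting from rank $m$ and iterating down to bundles of rank $\lfloor \nicefrac{d}{2} \rfloor + 1$ keeps us in the range $2s > d$ at every step --- the final step uses $2(\lfloor \nicefrac{d}{2} \rfloor + 1) > d$ --- so we split off $\sigma = m - \lfloor \nicefrac{d}{2} \rfloor$ trivial line bundles whenever this number is positive, which is exactly (i). This is the stable-range philosophy behind Corollary~\ref{geo_analysis:cor:triviality_Bloch_bundle_Peterson}, applied one rank at a time.

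For (ii) with $d = 4k+2$ I would push the iteration one step further. After splitting $\bundle \cong \underline{\C}^{\sigma} \oplus \bundle_\sigma$ the remaining bundle $\bundle_\sigma$ has rank $\lfloor \nicefrac{d}{2} \rfloor = 2k+1$, so the next section lands in the critical dimension $2(2k+1) = d$. Here the top cell is the only source of obstruction: the higher obstructions sit in $H^{q+1}(\BZ;\pi_q(S^{2s-1}))$ with $q + 1 > d$ and vanish, leaving solely the Euler class $\mathrm{c}_{2k+1}(\bundle_\sigma) \in H^{d}(\BZ;\Z)$. This is an \emph{odd} Chern class, and by multiplicativity $\mathrm{c}(\bundle) = \mathrm{c}(\underline{\C}^\sigma)\,\mathrm{c}(\bundle_\sigma) = \mathrm{c}(\bundle_\sigma)$, whence $\mathrm{c}_{2k+1}(\bundle_\sigma) = \mathrm{c}_{2k+1}(\bundle)$, which vanishes by Theorem~\ref{geo_analysis:thm:triviality_odd_c_j}. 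Hence the additional section exists and we obtain $\sigma + 1$ localized generators.

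The main obstacle is the critical-dimension bookkeeping in (ii): one must verify that the Euler class really is the sole obstruction to the final section (top-dimensional obstruction theory on $\T^d$) and, crucially, that the parity of $\lfloor \nicefrac{d}{2}\rfloor$ makes this class odd precisely when $d \equiv 2 \pmod 4$. For $d = 4k$ the same binding class would be the even Chern class $\mathrm{c}_{2k}$, on which magnetic time-reversal symmetry gives no information --- which is exactly why (ii) is confined to $d = 4k+2$. A secondary technical point is to make the passage from a trivial subbundle to a genuine \emph{partial} Wannier system precise, i.e. to check that the non-localized complementary generators can always be chosen so that the full family remains an orthonormal Wannier system reproducing $P_{\specrel}$.
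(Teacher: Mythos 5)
Your proposal is correct and follows essentially the same route as the paper: reduce to continuous sections via Proposition~\ref{bloch_bundle:prop:main_question_bundle_language} (Oka principle plus Paley--Wiener), split off a trivial subbundle $\epsilon^{\sigma}$ from $\bundle_{\specrel}$ in the stable range for (i), and for $d = 4k+2$ gain one more section because the sole remaining obstruction is the top Chern class $c_{2k+1}$ of the rank-$(2k+1)$ complement, which equals $c_{2k+1}(\bundle_{\specrel})$ by stability and vanishes by Theorem~\ref{geo_analysis:thm:triviality_odd_c_j}. The only difference is presentational: where the paper cites Lemma~\ref{geom_analysis:cor:one_extra_section_0} (Husemoller) and Lemma~\ref{geom_analysis:cor:one_extra_section} (Luke--Mishchenko), you re-derive exactly these two facts by hand through the standard obstruction-theoretic argument with $\pi_q(S^{2s-1})$, which is the content of those references.
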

This result follows from Proposition~\ref{bloch_bundle:prop:main_question_bundle_language} and some general facts concerning the classification of vector bundles. The proof will be given in Section~\ref{geom_analysis:partial_loc}. With the above theorem, we can improve the results of Kuchment: we can give a smaller upper bound on the number $m'$ of functions spanning the $1$-tight frame, namely $m \leqslant m' \leqslant M$ where $M = 2^d (m - \sigma) + \sigma$
or $M = 2^d (m -  \sigma - 1) + \sigma + 1$ in the special case $d = 4 k + 2$. 

Finally, if $d > 2m$, then the vanishing of all Chern classes is only a necessary but not sufficient condition for the triviality of the Bloch bundle. Indeed, in Section~\ref{geom_analysis:limits_characteristic_classes} we will construct a rank $2$ bundle over $\mathcal{B}^5$ which is non-trivial but whose Chern classes all vanish. This means that in the unstable rank regime, one needs to complement the analysis of characteristic classes with other techniques in order to prove the existence of an exponentially localized Wannier system. 
% subsubsection new_results (end)

\subsubsection*{Content} % (fold)
\label{intro:content}
The paper is organized as follows: we first give a brief introduction to Bloch-Floquet theory in \textbf{Section~\ref{bloch_floquet}} and discuss how the Oka principle ties in with the main result. The trace per unit volume and its relation with the Bloch-Floquet decomposition will be discussed.
The notion of magnetic symmetry is given in \textbf{Section~\ref{sec_magn_sym}}. 
We cover not only the important case of magnetic time-reversal and parity, but also give other examples of other Galilean symmetries. 
In \textbf{Section~\ref{bloch_bundle}} we give a primer on vector bundle theory and construct the Bloch bundle. 
We explain the link between exponential localization of the Wannier system and the triviality of the Bloch bundle. Moreover, the consequences of the presence of magnetic time-reversal symmetry or parity on the structure of the Bloch bundle are explored.
Finally, in \textbf{Section~\ref{geo_analysis}} the geometry of the Bloch bundle is characterized by the Chern classes. In particular, we prove the vanishing of all odd Chern classes (not only the first!)
and we deduce conditions for the triviality of the Bloch bundle. 
% subsubsection content (end)

\subsubsection*{Acknowledgements} % (fold)
\label{intro:acknowledgements}
G.~D.~is supported by the grant ANR-08-BLAN-0261-01. M.~L.~would like to thank the German-Israeli Foundation for kind support. Furthermore, G.~D.~and M.~L. thank the Erwin Schrödinger Institute for its kind hospitality
during the final drafting of this manuscript. We are also grateful to C.~Rojas Molina, S.~Teufel and P.~Kuchment for their helpful comments. 
% subsubsection acknowledgements (end)

% section intro (end)

\section{The Bloch-Floquet theory} % (fold)
\label{bloch_floquet}
To fix notation and make this work self-contained, we will give a short overview over Bloch-Floquet theory starting from a slightly non-standard angle which allows for interesting generalizations \cite{Bellissard_De_Nittis_Milani:aperiodic_Wannier:2011}.
For a more detailed account, we refer to  \cite{Wilcox:theory_Bloch_waves:1978,Reed_Simon:M_cap_Phi_4:1978,Berezin_Shubin:Schroeinger_eq:1991,Kuchment:Floquet_theory:1993}. 

\subsection{The Bloch-Floquet transform} % (fold)
\label{bloch_floquet:bloch_floquet_transform}
Since $H^A$ is $\Gamma$-periodic, we will first decompose $\R^d$ into $\Gamma \times \WS$, \ie we write each $\R^d \ni x = \gamma + y$ where $\gamma \in \Gamma$ is a lattice vector and $y \in \WS$ is a vector attached to the origin of the \emph{fundamental cell} $\WS$. In physics, one often chooses the \emph{Wigner-Seitz cell} (also known as \emph{Varanoi cell}), but any other open convex polytope $\WS$ satisfying (i) $0 \in \WS$, (ii) $\bigcup_{\gamma \in \Gamma} \overline{\WS + \gamma} = \R^d$ and (iii) $(\WS + \gamma_1) \cap (\WS + \gamma_2) = \emptyset$ for all $\gamma_1 \neq \gamma_2$ will do. The family $\{ \overline{\WS + \gamma} \}_{\gamma \in \Gamma}$ defines a periodic tiling of $\R^d$. 

The splitting $\R^d \cong \Gamma \times \WS$ induces the decomposition 
\begin{align*}
	\Usplit : L^2(\R^d) \longrightarrow \ell^2(\Gamma) \otimes L^2(\WS)
	, 
	\qquad\quad
	\Usplit \Psi := \sum_{\gamma \in \Gamma} \delta_{\gamma} \otimes T_{-\gamma} \Psi \vert_{\WS} 
	, 
\end{align*}
where $\{ \delta_{\gamma} \}_{\gamma \in \Gamma}$ is the canonical basis of $\ell^2(\Gamma)$ and $T_{\gamma}$ denotes lattice translations by $\gamma \in \Gamma$, $(T_{\gamma} \Psi)(x) := \Psi(x - \gamma)$. The splitting of real space induces a splitting of momenta $p \in {\R^d}^* \cong \Gamma^* \times \BZ$ into a dual lattice vector $\gamma^* \in \Gamma^*$ and an element of the first Brillouin zone $k \in \BZ$. The \emph{dual lattice} is spanned by the vectors $\{ e_1^* , \ldots , e_d^* \}$ which are defined through the relation $e_l \cdot e_j^* = 2 \pi \delta_{lj}$ while the \emph{Brillouin zone} is the quotient group $\BZ := {{\R^d}^*} / {\Gamma^*}$. Note that this definition deviates from the more common one where $\BZ$ is a fundamental cell associated to the dual lattice $\Gamma^*$. Instead, we glue opposite faces of the fundamental cell together, because then we can identify the quotient $\BZ$ with the torus 
\begin{align*}
	% \T^d = \Bigl \{ z = \e^{\ii k} := \bigl ( \e^{\ii k \cdot e_1} , \ldots , \e^{\ii k \cdot e_d} \bigr ) \in \C^d \; \vert \; k \in \R^d \Bigr \} 
	\T^d = \Bigl \{ z = (z_1 , \ldots , z_d) \in \C^d \; \big \vert \; \abs{z_j} = 1 , \; j = 1 , \ldots , d \Bigr \} 
\end{align*}
by means of so-called \emph{Floquet multipliers} 
\begin{align*}
	z = \e^{\ii k} := \bigl ( \e^{\ii k \cdot e_1} , \ldots , \e^{\ii k \cdot e_d} \bigr ) 
	= \bigl ( \e^{\ii 2 \pi \, k_1} , \ldots , \e^{\ii 2 \pi \, k_d} \bigr ) 
\end{align*}
This identification is also used when introducing the Fourier transform 
\begin{align}
	(\Fourier c)(k) = \sum_{\gamma \in \Gamma} \e^{- \ii k \cdot \gamma} \, c(\gamma) 
	, 
	&&
	c \in \ell^2(\Gamma) \cap \ell^1(\Gamma) 
	, 
\end{align}
as a map $\Fourier : \ell^2(\Gamma) \longrightarrow L^2(\BZ,\dd k)$ where the measure $\dd k$ is normalized so that $\Fourier$ is unitary. 
We will not overly stress that elements of $\BZ$ are really equivalence classes, and the fact $(\Fourier c)(k - \gamma^*) = (\Fourier c)(k)$ can either be interpreted as $\Gamma^*$-periodicity of the transformed function $\Fourier c$ or equivalently that $\Fourier c$ is well-defined on $\BZ$ since its value is independent of the choice of representative $k\in\BZ$. 

The \emph{Bloch-Floquet transform} $\BF := (\Fourier \otimes \id_{L^2(\WS)}) \circ \Usplit$ is a unitary map 
\begin{align*}
	\BF : L^2(\R^d) \longrightarrow L^2(\BZ) \otimes L^2(\WS) 
\end{align*}
which acts on $\Psi \in L^2(\R^d)$ as 
\begin{align}
	\bigl ( \BF \Psi \bigr )(k) = \sum_{\gamma \in \Gamma} \e^{- \ii k \cdot \gamma} \, T_{-\gamma} \Psi \vert_{\WS}
	% = \sum_{\gamma \in \Gamma} e^{- i k \cdot \gamma} \, \Psi(y + \gamma)
	\label{bloch_floquet:eqn:gamma_star_periodicity_Bloch_functions0}
\end{align}
and inherits the $\Gamma^*$-periodicity of $\Fourier$ in $k$, 
\begin{align}
	\bigl ( \BF \Psi \bigr )(k - \gamma^*) &= \bigl ( \BF \Psi \bigr )(k) 
	\label{bloch_floquet:eqn:gamma_star_periodicity_Bloch_functions1}
	. 
\end{align}
%
% subsection the_bloch_floquet_transfor (end)

\subsection{Recovering Bloch bands} % (fold)
\label{bloch_floquet:bloch_bands}
Now we turn back to the discussion of the operator $H^A$.  Assumption~\ref{bloch_floquet:assumption:HA} and Remark~\ref{rem:vec_pot} allow us to choose a vector potential $A$ representing $B$ whose components are $\Gamma$-periodic with bounded first-order derivatives and we conclude from standard arguments (cf.~\cite[Theorem~XIII.96]{Reed_Simon:M_cap_Phi_4:1978}, for instance) that $H^A$ defines a selfadjoint operator on $H^2(\R^d)$. Note that we do not strive for utmost generality here, presumably, our arguments can be adapted to more general situations. Only lattice periodicity is crucial since it leads to a direct integral decomposition of $H^A$ in crystal momentum $k$, 
\begin{align}
	\BF H^A \BF^{-1} = \int_{\BZ}^{\oplus} \dd k \, H^A(k) := \int_{\BZ}^{\oplus} \dd k \, \Bigl ( \bigl ( -i \nabla_y - A(\hat{y}) \bigr )^2 + V(\hat{y}) \Bigr ) 
	. 
\end{align}
While the operator prescription of $H^A(k)$ is independent of $k \in \BZ$, its domain 
\begin{align*}
	H^2_k(\WS) := \Bigl \{ \varphi \vert_{\WS} \; \big \vert \; &\varphi \in H^2_{\mathrm{loc}}(\R^d) 
	, \;  
	 T_{\gamma} \partial_y^a \varphi = \e^{- \ii k \cdot \gamma} \partial_y^a \varphi 
	\; \forall a \in \N^d , \; \abs{a} \leqslant 1 , \; \forall \gamma \in \Gamma
	\Bigr \}
\end{align*}
is not. More precisely, the Bloch-Floquet transform decomposes the domain of $H^A$, 
\begin{align*}
	\BF : H^2(\R^d) \longrightarrow \int_{\BZ}^{\oplus} \dd k \, H^2_k(\WS)
	, 
\end{align*}
and each $H^A(k)$ is a selfadjoint operator on $H^2_k(\WS)$. It is readily seen that the $k$-dependent Bloch boundary conditions on elements of $H^2_k(\WS)$ and their first-order derivatives are a direct consequence of equation~\eqref{bloch_floquet:eqn:gamma_star_periodicity_Bloch_functions0}. These boundary conditions are well-posed: the Wigner-Seitz cell $\WS$ is a Lipschitz domain and thus, the trace theorem \cite{Lion_Magenes:non_hom_bv_problems:1972} ensures that restricting a $H^2_{\mathrm{loc}}(\R^d)$ function to $\partial \WS$ yields a sufficiently regular function. 

By standard theory \cite{Reed_Simon:M_cap_Phi_4:1978,Berezin_Shubin:Schroeinger_eq:1991}, for each $k \in \BZ$, the operator $H^A(k)$ has purely discrete spectrum accumulating at infinity, $\spec \bigl ( H^A(k) \bigr ) = \{ E_n^B(k) \}_{n \in \N}$. As is customary, we will order the $E_n^B(k)$ in non-decreasing order, \ie we have $E_n^B(k) \leqslant E_{n+1}^B(k)$ for all $n \in \N$, and repeat each according to its multiplicity. The corresponding eigenfunctions $k\mapsto \varphi_n^A(k)$ which satisfy 
\begin{align}
	H^A(k) \varphi_n(k) = E_n^B(k) \, \varphi_n^A(k) 
\end{align}
for each $k$ are called \emph{Bloch functions.} The analyticity of $k \mapsto H^A(k)$ \cite{Kato:perturbation_theory:1995,Reed_Simon:M_cap_Phi_4:1978} implies that the maps $k \mapsto E_n^B(k)$ are continuous everywhere and analytic away from band crossings. The band functions also inherit the $\Gamma^*$-periodicity of $H^A(k) = H^A(k - \gamma^*)$ which follows directly from the definition of the domain $H^2_k(\WS)$. The spectrum of $H^A$ can be expressed in terms of the band functions as 
\begin{align*}
	\spec(H^A) = \bigcup_{n \in \N} E_n^B(\BZ) 
	. 
\end{align*}
Similarly, if we choose the phase of $\varphi_n^A(k)$ for each $k$ properly, we can also ensure that $k \mapsto \varphi_n^A(k)$ are piecewise analytic functions with values in $L^2(\WS)$. However, in general it may not be possible to choose the phases in such a way that Bloch functions are analytic or even just continuous on all of $\BZ$, a fact that is related to the main point of our work. 
\begin{assumption}[Local spectral gap]\label{bloch_floquet:assumption:spectral_gap}
	There exists a family of \emph{relevant bands} $\{ E_n^B \}_{n \in \mathcal{I}}$, with $\mathcal{I} \subset \N$ of finite cardinality $\abs{\mathcal{I}}=m$,  such that 
	\begin{align*}
		\inf_{k \in \BZ} \mathrm{dist} \, \Bigl ( \bigcup_{n \in \mathcal{I}} \{ E_n^B(k) \} , \bigcup_{n \in \N \setminus \mathcal{I}} \{ E_n^B(k) \} \Bigr ) =: C_g > 0
	\end{align*}
	holds. For brevity, we denote the collection of eigenvalues with $\specrel(k) := \bigcup_{n \in \mathcal{I}} \bigl \{ E_n^B(k) \bigr \}$. 
\end{assumption}
The relevant part of the spectrum 
\begin{align*}
	\specrel = \bigcup_{k \in \BZ} \specrel(k) 
\end{align*}
is then recovered as the union of the local spectra. 

Fiber-wise, we can define the projection 
\begin{align}
	P_{\specrel}(k) := 1_{\specrel(k)} \bigl ( H^A(k) \bigr ) 
	= \frac{\ii}{2\pi} \int_{\mathsf{C}(k)} \dd \zeta \, \bigl ( H^A(k) - \zeta \bigr )^{-1} 
	\label{bloch_floquet:eqn:P_k}
\end{align}
onto the relevant eigenvalues $\{ E_n^B(k) \}_{n \in \mathcal{I}}$; it can also be written in terms of a Cauchy integral around the $k$-dependent contour $\mathsf{C}(k)$ which encloses $\specrel(k)$. Alternatively, we can express $P_{\specrel}(k)$ in terms of normalized Bloch functions as 
\begin{align*}
	P_{\specrel}(k) = \sum_{n \in \mathcal{I}} \opro{\varphi_n^A(k)}{\varphi^A_n(k)} 
	. 
	% \label{bloch_floquet:eqn:P_k}
\end{align*}
Even though the Bloch functions need not be continuous, the gap condition described in Assumption~\ref{bloch_floquet:assumption:spectral_gap} ensures $P_{\specrel}(k)$ depends analytically on $k$. We can combine all the fiber operators to the projection
\begin{align*}
	P_{\specrel} := \BF^{-1} \, \left(\int_{\BZ}^{\oplus} \dd k \, P_{\specrel}(k)  \right) \, \BF 
\end{align*}
on $L^2(\R^d)$. However, unless there is a global gap, $P_{\specrel}$ cannot be written as $1_{\specrel}(H^A)$ for some set $\specrel$, namely it is not a spectral projection of $H^A$. This means that Assumption~\ref{bloch_floquet:assumption:spectral_gap} is stronger than the existence of a global spectral gap as in equation~\eqref{eq:G1}. Finally, the \emph{geometric rank} of $P_{\specrel}$ defined in Definition 1.1 coincides with $\abs{\mathcal{I}}$. 
% subsection recovering_bloch_bands (end)

\subsection{Gauge-covariance} % (fold)
\label{bloch_floquet:gauge_covariance}
We have previously indicated that the family of Bloch functions $\{ E_n^B \}_{n \in \N}$ depends on the magnetic field $B$ rather than a particular choice of vector potential. This is due to covariance, \ie if $A' = A + \nabla \chi$ is an equivalent gauge for $B$, then $H^A$ and
\begin{align*}
       H^{A + \nabla \chi} = \e^{- \ii \chi(\hat{x})} \, H^A \, \e^{+ \ii \chi(\hat{x})}
\end{align*}
are unitarily equivalent and as such isospectral. Hence, the spectrum of $H^A$ depends only on the magnetic field $B$. If $\chi$ is a $\Gamma$-periodic gauge function, then
\begin{align*}
       \BF H^{A+\nabla \chi} \BF^{-1} = \int_{\BZ}^\oplus \dd k \, H^{A + \nabla \chi}(k) = \int_{\BZ}^{\oplus} \dd k \, \e^{-\ii \chi(\hat{y})} H^A(k) \e^{+\ii \chi(\hat{y})}
\end{align*}
also fibers in $k$ and $H^A(k)$ and $H^{A + \nabla \chi}(k)$ are also related via $\e^{- \ii \chi(\hat{y})}$. This implies that also the fiber hamiltonians are unitarily equivalent and thus, the band functions depend only on $B$.
% subsection gauge_covariance_and_the_zero_flux_condition (end)

\subsection{Integration and derivation}
\label{bloch_floquet:integration_derivation}
Later on, we will need the notions of trace per unit volume and that of a derivation of bounded operators on $L^2(\R^d)$. These notions have been studied extensively in the last decades in the context of quasi-periodic or random operators. Albeit we are interested only in the periodic case, we refer the reader to \cite{Bellissard:gap_labelling:1993} and references therein for a more detailed discussion. 

Consider an increasing sequence $0\in\Gamma_1\subset\Gamma_2\subset\ldots\subset \Gamma$ of bounded subsets of the lattice $\Gamma$ with the property that $\Gamma_n \nearrow \Gamma$. To this sequence we can associate the sequence $\{\WS_n\}_{n\in\N}$ of subsets of $\R^d$ defined by $\WS_n := \bigcup_{\gamma\in\Gamma_n} \overline{\WS+\gamma}$. A bounded operator $Y\in\bbb{B}(L^2(\R^d))$ admits a \emph{trace per unit volume}  if
\begin{align}
 	\mathcal{T}(Y) := \lim_{n\to \infty} \frac{1}{|\Gamma_n|\ |\WS|}\ \text{Tr}_{L^2(\R^d)} \big(1_{\WS_n}\ Y\ 1_{\WS_n}\big) < \infty
	\label{eq:TPUV}
\end{align}
where $|\Gamma_n|$ denotes the cardinality of $\Gamma_n$, $|\WS|$ is the volume of the Wigner-Sitz cell
and $1_{\WS_n}$ is the characteristic function of $\WS_n$ which acts as a projection $1_{\WS_n}:L^2(\R^d) \hookrightarrow L^2(\WS_n)$. One can show that $\mathcal{T}(A)$ is independent of the particular choice of $\WS_n \nearrow \R^d$, we could have used any other \emph{F{\o}lner sequence} of subsets for $\R^d$. 

The significance of the trace per unit volume to periodic operators is provided by the following 
\begin{lem}\label{lem_int_BF}
	Let $Y$ be a bounded $\Gamma$-periodic operator acting on $L^2(\R^d)$ and 
	\begin{align*}
		\BF \, Y \, \BF^{-1}= \int_{\BZ}^\oplus \dd k \, Y(k) 
	\end{align*}
	be its Bloch-Floquet decomposition. Suppose in addition that $Y(k)$ is trace-class in $L^2(\WS)$ for almost all $k\in\BZ$ and
	\begin{align*}
		k \mapsto \text{Tr}_{L^2(\WS)} \bigl ( Y(k) \bigr ) \in L^1(\BZ)
		. 
	\end{align*}
	Then the trace per unit volume of $Y$ is finite and given by 
	\begin{align*}
		\mathcal{T}(A) = \int_{\BZ} \dd k \, \frac{1}{|\WS|} \, \text{Tr}_{L^2(\WS)}\big ( A(k) \big ) 
		. 
	\end{align*}
\end{lem}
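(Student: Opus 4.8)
The plan is to transport everything to the $\Usplit$-picture $\ell^2(\Gamma)\otimes L^2(\WS)$, where both the spatial cutoff $1_{\WS_n}$ and the periodicity of $Y$ take a transparent form. First I would observe that, since $\WS_n=\bigcup_{\gamma\in\Gamma_n}\overline{\WS+\gamma}$ is a union of full fundamental cells indexed by $\Gamma_n$, the multiplication operator $1_{\WS_n}$ conjugates under $\Usplit$ to a block projection,
\begin{align*}
	\Usplit \, 1_{\WS_n} \, \Usplit^{-1} = P_{\Gamma_n} \otimes \id_{L^2(\WS)}
	,
\end{align*}
where $P_{\Gamma_n}$ is the orthogonal projection of $\ell^2(\Gamma)$ onto $\span \{ \delta_\gamma : \gamma \in \Gamma_n \}$. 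This is a direct check from the definition $(\Usplit\Psi)(\gamma)=(T_{-\gamma}\Psi)\vert_{\WS}$: the $\gamma$-component of $\Usplit\Psi$ only sees $\Psi$ restricted to $\WS+\gamma$, and multiplication by $1_{\WS_n}$ keeps exactly those components with $\gamma\in\Gamma_n$.

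Next I would unravel the structure of $\tilde{Y} := \Usplit \, Y \, \Usplit^{-1}$. Because $Y$ is $\Gamma$-periodic it commutes with all lattice translations $T_\gamma$, which become shifts $S_\gamma\otimes\id$ on $\ell^2(\Gamma)\otimes L^2(\WS)$; hence $\tilde{Y}$ is translation invariant, meaning its operator-valued matrix elements $\tilde{Y}_{\gamma\gamma'}\in\bbb{B}(L^2(\WS))$ depend only on $\gamma-\gamma'$. Passing through the remaining Fourier factor $\Fourier\otimes\id$ in $\BF=(\Fourier\otimes\id)\circ\Usplit$ identifies these with the Fourier coefficients of the fiber family, $\tilde{Y}_{\gamma\gamma'}=\int_{\BZ}\dd k \, \e^{\ii k\cdot(\gamma-\gamma')} \, Y(k)$; in particular every diagonal block equals
\begin{align*}
	\tilde{Y}_{\gamma\gamma} = Y_0 := \int_{\BZ} \dd k \, Y(k)
	,
\end{align*}
independently of $\gamma$.

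With these two facts the finite-volume trace collapses. Using unitary invariance of the trace under $\Usplit$ and the block form,
\begin{align*}
	\Tr_{L^2(\R^d)} \bigl ( 1_{\WS_n} \, Y \, 1_{\WS_n} \bigr ) = \Tr \bigl ( (P_{\Gamma_n}\otimes\id) \, \tilde{Y} \, (P_{\Gamma_n}\otimes\id) \bigr ) = \sum_{\gamma\in\Gamma_n} \Tr_{L^2(\WS)} \bigl ( \tilde{Y}_{\gamma\gamma} \bigr ) = \abs{\Gamma_n} \, \Tr_{L^2(\WS)} ( Y_0 )
	.
\end{align*}
Dividing by $\abs{\Gamma_n}\,\abs{\WS}$ produces a quantity manifestly independent of $n$, so the defining limit~\eqref{eq:TPUV} trivially exists and equals $\tfrac{1}{\abs{\WS}}\Tr_{L^2(\WS)}(Y_0)$; moreover, since the finite-volume operator is a sum of $\abs{\Gamma_n}$ trace-class blocks, its trace is genuinely well defined.

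It then remains to exchange the trace with the fiber integral, $\Tr_{L^2(\WS)}(Y_0)=\int_{\BZ}\dd k\,\Tr_{L^2(\WS)}(Y(k))$, which turns the right-hand side into the claimed expression. This interchange is the only genuinely analytic step and the one I expect to be the main obstacle: it is precisely where the hypotheses that each $Y(k)$ be trace-class and that $k\mapsto\Tr_{L^2(\WS)}(Y(k))$ lie in $L^1(\BZ)$ enter. I would justify it by the standard theory of direct integrals of trace-class operators, namely dominated convergence in trace norm combined with the continuity of $\Tr$ on the trace ideal (using $\abs{\Tr(\,\cdot\,)}\leqslant\norm{\,\cdot\,}_1$), reducing if necessary to the positive case $Y(k)\geqslant 0$ where $\Tr_{L^2(\WS)}(Y(k))=\norm{Y(k)}_1$ so that the stated $L^1$-hypothesis supplies the needed integrable bound.
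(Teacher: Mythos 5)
Your proposal follows essentially the same route as the paper's proof. The paper's argument consists of two relations cited from \cite[Lemma~3.3]{Panati_Sparber_Teufel:polarization:2006} --- the single-cell identity $\Tr_{L^2(\R^d)}\bigl(1_{\WS}\, Y\, 1_{\WS}\bigr) = \int_{\BZ} \dd k\, \Tr_{L^2(\WS)}\bigl(Y(k)\bigr)$ and the translation invariance of the per-cell traces --- followed by the implicit summation over the $\abs{\Gamma_n}$ cells of $\WS_n$. Your $\Usplit$-picture renders exactly these two facts explicit: the statement that all diagonal blocks equal $Y_0 = \int_{\BZ}\dd k\, Y(k)$ \emph{is} the translation invariance, and your final interchange $\Tr_{L^2(\WS)}(Y_0) = \int_{\BZ}\dd k\,\Tr_{L^2(\WS)}\bigl(Y(k)\bigr)$, combined with the block-diagonal trace formula, \emph{is} the single-cell identity. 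So structurally this is the same proof, made self-contained rather than cited.

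The step you rightly single out as the main obstacle is, however, not repaired by the argument you sketch. The reduction to the positive case fails: writing $Y(k) = Y(k)_+ - Y(k)_-$, integrability of $\Tr Y(k) = \Tr Y(k)_+ - \Tr Y(k)_-$ gives no control on the two positive summands separately, \ie no integrable bound on $\norm{Y(k)}_1 = \Tr Y(k)_+ + \Tr Y(k)_-$, so there is nothing to dominate with. In fact, under the literal hypotheses of the lemma the conclusion can fail outright: on $\mathcal{B}^1$, with $\{ e_j \}_{j \in \N}$ an orthonormal basis of $L^2(\WS)$, take $Y(k) := \sum_{j \geqslant 1} 1_{[0,\nicefrac{1}{j}]}(k)\,\bigl(\sopro{e_{2j}}{e_{2j}} - \sopro{e_{2j+1}}{e_{2j+1}}\bigr)$. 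Every fiber is finite rank with $\Tr Y(k) = 0$ and $\snorm{Y(k)} \leqslant 1$, yet $Y_0$ is the diagonal operator with entries $\pm\nicefrac{1}{j}$, which is \emph{not} trace class; consequently $1_{\WS_n}\, Y\, 1_{\WS_n}$ is not trace class and the finite-volume traces in \eqref{eq:TPUV} are not even defined. What your argument (and the lemma) actually needs is the stronger hypothesis $k \mapsto \norm{Y(k)}_1 \in L^1(\BZ)$: then every block $\tilde{Y}_{\gamma\gamma'}$ is trace class and dominated convergence in trace norm yields the interchange exactly as you describe. To be fair, this imprecision sits in the statement itself and is equally present in the paper's proof, hidden behind the citation; in the paper's only application, Proposition~\ref{prop:trace_deriv}, the stronger bound does hold, since there the fibers are compressed by rank-$m$ projections and hence satisfy $\norm{P(k) Y(k) P(k)}_1 \leqslant m\,\snorm{Y}$ uniformly in $k$.
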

\begin{proof}
	The proof of this result is based on the following two relations which can be easily checked (see \cite[Lemma~3.3]{Panati_Sparber_Teufel:polarization:2006} for more details):
	\begin{align*}
		\text{Tr}_{L^2(\R^d)}\big(1_{\WS} \, Y \, 1_{\WS}\big) = \int_{\BZ} \dd k \, \text{Tr}_{L^2(\WS)} \big( Y(k) \big) < +\infty
	\end{align*}
	and
	\begin{align*}
		\text{Tr}_{L^2(\R^d)}\big(1_{\WS+\gamma} \, Y \, 1_{\WS+\gamma}\big) = \text{Tr}_{L^2(\R^d)}\big(1_{\WS} \, T_{\gamma}^{-1} \, Y \, T_{\gamma} \, 1_{\WS}\big) 
		= \text{Tr}_{L^2(\R^d)}\big(1_{\WS} \, Y \, 1_{\WS}\big) 
	\end{align*}
	holds for all $\gamma\in\Gamma$.
\end{proof}
We denote by $\mathcal{K}_\Gamma^1$ the subset of $\bbb{B} \bigl ( L^2(\R^d) \bigr )$ which consists of $\Gamma$-periodic operators which satisfy the assumptions of Lemma \ref{lem_int_BF}. By means of standard arguments for trace-class operator \cite[Chapter VI]{Reed_Simon:M_cap_Phi_1:1972}, one can prove that $\mathcal{K}^1_{\Gamma}$ is an ideal: if $Y\in \mathcal{K}_\Gamma^1$ and $X$ is bounded and $\Gamma$-periodic then both $XY$ and $YX$ are in $\mathcal{K}_\Gamma^1$. Thus, $\mathcal{T}$ has the trace property, \ie $\mathcal{T}\big([Y,X]\big) = 0$ holds for any $Y \in\mathcal{K}_\Gamma^1$ and $\Gamma$-periodic $X \in \mathcal{B} \bigl ( L^2(\R^d) \bigr )$. 

The second notion that we need is that of \emph{derivatives}: if $Y\in\bbb{B}(L^2(\R^d))$ is a bounded operator on $L^2(\R^d)$ we define its \emph{$j$-th derivative} as
\begin{align}
	\delta_j Y:=-\tfrac{\ii}{ 2\pi} \ [\hat{x}_j , Y]
	,
	&&
	j=1,\ldots,d
	, 
	\label{eq:j-th_der}
\end{align}
where $\hat{x}_j$ denotes the position operator projected in the $j$-th direction of the lattice, namely the multiplication by the function $x\cdot e_j$. It is easy to check that the $\delta_j$ verify all the formal properties of a derivation, \ie they are linear and satisfy the Leibniz rule. We say that $Y$ is of class $\mathcal{C}^s$ if $\delta_{j_1} \circ \cdots \circ \delta_{j_s} (Y) \in \bbb{B}(L^2(\R^d))$ for any choice of the $s$ derivatives. 
\begin{lem}\label{lem_dev_BF}
	Let $Y$ be a bounded operator on $L^2(\R^d)$ which is $\Gamma$-periodic and $\mathcal{C}^1$. Then the following statements hold true:
	\begin{enumerate}[(i)]
		\item $\delta_j Y$ is $\Gamma$-periodic for any $ j=1,\ldots,d$. 
		\item Let $k \mapsto Y(k)$ be the map associated to the Bloch-Floquet decomposition of $Y$. Then the derivative $\partial_{k_j} Y (k)$ is a well-defined bounded operator on $L^2(\WS)$ for almost all $k\in\BZ$ and $ j=1,\ldots,d$. Moreover, we have 
		\begin{align}
			\widetilde{\mathcal{U}}_{\mathrm{BF}} \ \delta_j Y \ \widetilde{\mathcal{U}}_{\mathrm{BF}}^{-1} = \int_{\BZ}^\oplus \dd k \, \partial_{k_j} Y (k)
			\label{eq:deriv1}
		\end{align}
		where $\widetilde{\mathcal{U}}_{\mathrm{BF}} := G \, \BF$ and $G := \int_{\BZ}^\oplus \dd k \, G(k)$ is the unitary operator on $L^2(\BZ) \otimes L^2(\WS)$ defined fiberwise by $G(k)\psi(k,y):=\e^{-\ii k \cdot y} \psi(k,y)$.
	\end{enumerate}
\end{lem}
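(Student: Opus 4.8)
The plan is to handle the two parts separately: part (i) reduces to a one-line commutator identity, while part (ii) carries the analytic content and is where I expect the real work to lie.

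For part (i) I would show directly that $\delta_j Y$ commutes with every lattice translation $T_\gamma$. The only input is the covariance of the position operator: a direct computation gives $T_\gamma\,\hat{x}_j\,T_\gamma^{-1} = \hat{x}_j - (\gamma\cdot e_j)\,\id$, so conjugation by $T_\gamma$ shifts $\hat{x}_j$ only by the \emph{scalar} $\gamma\cdot e_j$. Conjugating the commutator and using $\Gamma$-periodicity of $Y$ (i.e. $T_\gamma\,Y\,T_\gamma^{-1} = Y$) I obtain $T_\gamma\,[\hat{x}_j,Y]\,T_\gamma^{-1} = [\hat{x}_j-(\gamma\cdot e_j),Y] = [\hat{x}_j,Y]$, because the scalar term commutes with $Y$ and cancels. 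Hence $T_\gamma\,\delta_j Y\,T_\gamma^{-1} = \delta_j Y$ for all $\gamma\in\Gamma$. Since $\hat{x}_j$ is unbounded this should be carried out on the dense core on which $[\hat{x}_j,Y]$ is initially defined, and the $\mathcal{C}^1$ hypothesis then lets me pass to its bounded extension.

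For part (ii) the decisive observation is that $\widetilde{\mathcal{U}}_{\mathrm{BF}} = G\,\BF$ is a \emph{Zak-type} transform and that the extra phase $G(k) = \e^{-\ii k\cdot y}$ is exactly what removes the $k$-dependence of the Bloch boundary conditions. Explicitly, $(\widetilde{\mathcal{U}}_{\mathrm{BF}}\Psi)(k,y) = \sum_{\gamma\in\Gamma}\e^{-\ii k\cdot(y+\gamma)}\,\Psi(y+\gamma)$ is $\Gamma$-periodic in $y$, so in this representation all fibres act on one and the same Hilbert space with one and the same ($k$-independent) domain; this is precisely what makes fibrewise differentiation $\partial_{k_j}$ of an operator-valued map meaningful, and it is the reason the tilde-transform rather than $\BF$ appears in the statement. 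I would then differentiate the plane-wave phase: $\partial_{k_j}\e^{-\ii k\cdot(y+\gamma)}$ brings down a factor proportional to $(y+\gamma)\cdot e_j$, which gives $\widetilde{\mathcal{U}}_{\mathrm{BF}}\,\hat{x}_j\,\widetilde{\mathcal{U}}_{\mathrm{BF}}^{-1} = 2\pi\,\ii\,\partial_{k_j}$, the constant $2\pi$ being dictated by the Floquet-multiplier normalisation $k\cdot e_j = 2\pi k_j$. This $2\pi$ is exactly cancelled by the prefactor $-\ii/2\pi$ in the definition of $\delta_j$, so that $\delta_j$ is intertwined with the bare derivative $\partial_{k_j}$.

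With this identity the intertwining relation is algebraic. Writing $\widetilde{\mathcal{U}}_{\mathrm{BF}}\,Y\,\widetilde{\mathcal{U}}_{\mathrm{BF}}^{-1} = \int_{\BZ}^\oplus\dd k\,Y(k)$ as a fibrewise multiplication operator in $k$ and using the elementary rule $[\partial_{k_j}, M_{Y(\cdot)}] = M_{\partial_{k_j}Y(\cdot)}$, I get
\[
\widetilde{\mathcal{U}}_{\mathrm{BF}}\,\delta_j Y\,\widetilde{\mathcal{U}}_{\mathrm{BF}}^{-1} = -\tfrac{\ii}{2\pi}\,\bigl[\,2\pi\,\ii\,\partial_{k_j}\,,\,\textstyle\int_{\BZ}^\oplus\dd k\,Y(k)\,\bigr] = \int_{\BZ}^\oplus\dd k\,\partial_{k_j}Y(k).
\]
Boundedness of $\partial_{k_j}Y(k)$ for almost every $k$ is then inherited rather than proved by hand: the $\mathcal{C}^1$ hypothesis makes $\delta_j Y$ bounded, and since $\widetilde{\mathcal{U}}_{\mathrm{BF}}$ is unitary the right-hand side is a bounded decomposable operator, whence $\mathrm{ess\,sup}_{k}\snorm{\partial_{k_j}Y(k)} < \infty$. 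I expect the main obstacle to be the rigorous bookkeeping around the unbounded $\hat{x}_j$: the commutator $[\hat{x}_j,Y]$ must be defined on a suitable dense core (e.g. Schwartz or compactly supported functions), only the $\mathcal{C}^1$ assumption guarantees a bounded extension to which the formal manipulations apply, and every ``$\partial_{k_j}$'' assertion is an almost-everywhere statement that must be matched to the genuine fibre decomposition of the bounded operator $\delta_j Y$. The single conceptual point underlying all of this is that only after conjugating by $G$ do the fibres share a fixed domain, so that $k\mapsto Y(k)$ becomes an honest operator-valued function on one Hilbert space that one is entitled to differentiate.
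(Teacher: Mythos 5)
Your part (i) is exactly the paper's argument: the covariance relation $T_\gamma \, \hat{x}_j \, T_\gamma^{-1} = \hat{x}_j - (\gamma \cdot e_j) \, \id_{L^2(\R^d)}$ together with $T_\gamma \, Y \, T_\gamma^{-1} = Y$, so nothing to add there. In part (ii) you have also isolated the correct mechanism, with the right constants in the paper's conventions: the Zak transform intertwines the position operator with $2\pi \ii \, \partial_{k_j}$, and the prefactor $-\tfrac{\ii}{2\pi}$ in $\delta_j$ cancels the $2\pi\ii$. The paper, however, implements this at the level of the unitary \emph{group} rather than the generator: it introduces the phases $\bigl ( L(b) \Psi \bigr )(x) := \e^{-\ii b \cdot x} \, \Psi(x)$, shows that after the Zak transform conjugation by $L(b)$ becomes the translation $k \mapsto k + b$ of the fibres, i.e. $\widetilde{\mathcal{U}}_{\mathrm{BF}} \, L(b) Y L(b)^{-1} \, \widetilde{\mathcal{U}}_{\mathrm{BF}}^{-1} = \int_{\BZ}^{\oplus} \dd k \, Y(k+b)$, and then obtains \eqref{eq:deriv1} by taking the strong limit of the difference quotients of $L(\varepsilon e_j) \, Y \, L(\varepsilon e_j)^{-1}$, which converge to $2\pi \, \delta_j Y$ by the $\mathcal{C}^1$ hypothesis.

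This difference is not merely stylistic; it is what closes the one genuine gap in your write-up. Your ``elementary rule'' $[\partial_{k_j}, M_{Y(\cdot)}] = M_{\partial_{k_j} Y(\cdot)}$ \emph{presupposes} that the fibre map $k \mapsto Y(k)$ is differentiable almost everywhere --- but that existence claim is half of what part (ii) asserts, and a priori $Y(\cdot)$ is only a measurable family of bounded operators defined for almost all $k$. Consequently your closing step, in which existence and boundedness of $\partial_{k_j} Y(k)$ are ``inherited'' from boundedness of $\delta_j Y$, is circular: you identify the fibres of $\widetilde{\mathcal{U}}_{\mathrm{BF}} \, \delta_j Y \, \widetilde{\mathcal{U}}_{\mathrm{BF}}^{-1}$ with $\partial_{k_j} Y(k)$ by means of a rule whose validity requires those derivatives to exist in the first place. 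What your hypotheses give you directly is only that the commutator $[\partial_{k_j}, M_{Y(\cdot)}]$, defined on a core, extends to a bounded decomposable operator; upgrading this distributional statement to almost-everywhere existence of the fibrewise derivative requires passing through difference quotients, which is precisely the paper's route: since conjugation by $L(\varepsilon e_j)$ becomes translation of the fibre argument, the strong convergence of $\varepsilon^{-1} \bigl ( L(\varepsilon e_j) Y L(\varepsilon e_j)^{-1} - Y \bigr )$ to $2\pi \, \delta_j Y$ forces the existence, for almost every $k$, of the fibrewise strong limit defining $2\pi \, (\partial_{k_j} Y)(k)$, and \eqref{eq:deriv1} follows. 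With that insertion --- replacing the formal commutator rule by the group-level statement and its differentiation --- your argument coincides with the paper's proof.
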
 
\begin{proof} 
	\begin{enumerate}[(i)]
		\item follows simply observing that $T_\gamma\ \hat{x}_j\ T_\gamma^{-1}=\hat{x}_j-(\gamma\cdot e_j)\ \id_{L^2(\R^d)}$. 
		\item A simple computation shows that
		\begin{align*}
			\bigl ( \widetilde{\mathcal{U}}_{\mathrm{BF}} \Psi \bigr )(k,y) = \bigl ( G \, \BF\Psi \bigr )(k,y)=\sum_{\gamma\in\Gamma}{\rm e}^{-\ii k \cdot (y+\gamma)}\ \Psi(y+\gamma)
		\end{align*}
		Let $L(b)$ be the unitary operator on $L^2(\R^d)$ defined by $\big(L(b)\Psi\big) (x):={\rm e}^{-\ii b\cdot x}\ \Psi(x)$ with $b \in \R^d$. After Bloch-Floquet transform, $L(b)$ acts as a translation in crystal momentum, 
		\begin{align*}
			\bigl ( \widetilde{\mathcal{U}}_{\mathrm{BF}} L(b) \Psi \bigr )(k,y) = \sum_{\gamma\in\Gamma}{\rm e}^{-\ii k \cdot (y+\gamma)}\ {\rm e}^{-\ii b\cdot(y+\gamma)}\ \Psi(y+\gamma) 
			= (\widetilde{\mathcal{U}}_{\mathrm{BF}}\Psi)(k+b,y)
			, 
		\end{align*}
		which in particular implies 
		\begin{align}\label{eq:deriv2}
			\widetilde{\mathcal{U}}_{\mathrm{BF}} \, \big( L(b) Y L(b)^{-1} \big)\widetilde{\mathcal{U}}_{\mathrm{BF}}^{-1} 
			= \int_{\BZ}^\oplus\dd k \, Y(k+b)
			.
		\end{align}
		Then from the strong limit
		\begin{align}\label{eq:deriv3}
			\lim_{\varepsilon\to0} \frac{L(\varepsilon \, e_j) \, Y \, L(\varepsilon\, e_j)^{-1} - Y}{\varepsilon} = 2\pi \, \delta_j(Y) 
		\end{align}
		and using the modified Bloch-Floquet transform $\widetilde{\mathcal{U}}_{\mathrm{BF}}$, one deduces the existence of
		\begin{align}\label{eq:deriv4}
			\lim_{\varepsilon\to0} \frac{Y(k_1,\ldots,k_j+\varepsilon2\pi,\ldots,k_d)-Y(k_1,\ldots,k_j,\ldots,k_d)}{\varepsilon}:=2\pi\ (\partial_{ k_j}Y)(k)
		\end{align}
		in the strong sense on $L^2(\WS)$ for almost all $k \in \BZ$. Equation \eqref{eq:deriv1} then follows immediately from \eqref{eq:deriv2} and the definition of the limits \eqref{eq:deriv3} and \eqref{eq:deriv4}.
	\end{enumerate}
\end{proof}
The unitary $\widetilde{\mathcal{U}}_{\mathrm{BF}}$ is also known as the Zak-Bloch-Floquet transform (\cf \cite[Section~3.2]{Panati:triviality_Bloch_bundle:2006} for a comparison to the usual Bloch-Floquet transform). 

A local spectral gap for $\specrel$ (\cf Assumption \ref{bloch_floquet:assumption:spectral_gap}) assures that the map $k\mapsto P_\specrel(k)$ is smooth. The fact that the $\partial_{k_j} P_{\specrel} (k)$ are well-defined bounded operators on $L^2(\WS)$, and the arguments in Lemma \ref{lem_dev_BF} show that the projection $P_\specrel$ is $\mathcal{C}^1$. The link between derivatives and trace per unit volume for projections which share the properties of $P_\specrel$ is established in the next result.%
\begin{prop}\label{prop:trace_deriv}
	Let $P$ be a $\Gamma$-periodic and $\mathcal{C}^1$ orthogonal projection on $L^2(\R^d)$. Assume that after Bloch-Floquet decomposition any fiber projection $P(k)$ has finite constant rank $m$ and fix 
	\begin{align*}
		Q_{ij}(P) := P \, \bigl [ \delta_i P , \delta_j P \bigr ] \, P 
		\quad \text{and} \quad 
		\tilde{Q}_{ij}(P)(k) := P(k) \, \bigl [ \partial_{k_i} P(k) , \partial_{k_j}P (k) \bigr ] \, P(k)
		.
	\end{align*}
	Then $Q_{i_1j_1}(P) \cdots Q_{i_Nj_N}(P) \in \mathcal{K}_\Gamma^1 \,$ for any $i_1,j_1,\ldots,i_N,j_N=1,\ldots,d$ and
	\begin{align*}
		\mathcal{T} \big( Q_{i_1j_1}(P) \cdots Q_{i_Nj_N}(P) \big ) = \frac{1}{|\WS|} \, 
		\int_{\BZ} \dd k \, \text{Tr}_{L^2(\WS)} \big ( \tilde{Q}_{i_1j_1}(P)(k) \cdots \tilde{Q}_{i_Nj_N}(P)(k) \big ) 
		.
	\end{align*}
\end{prop}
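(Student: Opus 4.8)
The plan is to reduce the entire statement to the fiber level via the Bloch--Floquet decomposition and then invoke Lemma~\ref{lem_int_BF}. Since $P$ is $\Gamma$-periodic and of class $\mathcal{C}^1$, Lemma~\ref{lem_dev_BF}~(i) guarantees that each $\delta_i P$ is again bounded and $\Gamma$-periodic, while Lemma~\ref{lem_dev_BF}~(ii), \ie equation~\eqref{eq:deriv1}, identifies its fiber with the genuine momentum derivative $\partial_{k_i} P(k)$ of the fiber projection, $\widetilde{\mathcal{U}}_{\mathrm{BF}} \, \delta_i P \, \widetilde{\mathcal{U}}_{\mathrm{BF}}^{-1} = \int_{\BZ}^\oplus \dd k \, \partial_{k_i} P(k)$. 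Because commutators and products of $\Gamma$-periodic operators are $\Gamma$-periodic, every $Q_{ij}(P)$ and the full product $Q_{i_1 j_1}(P) \cdots Q_{i_N j_N}(P)$ are bounded $\Gamma$-periodic operators. As the Bloch--Floquet decomposition is an algebra homomorphism on decomposable operators, the fiber of this product equals $\tilde{Q}_{i_1 j_1}(P)(k) \cdots \tilde{Q}_{i_N j_N}(P)(k)$; here I use that conjugation by the fiberwise unitary $G(k)$ relating $\BF$ and $\widetilde{\mathcal{U}}_{\mathrm{BF}}$ leaves trace-class membership and fiber traces unchanged, so it is immaterial which of the two transforms is used to compute traces.

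Next I would establish the finite-rank and integrability properties of the fibers. By construction each $\tilde{Q}_{ij}(P)(k) = P(k) \, [\partial_{k_i} P(k), \partial_{k_j} P(k)] \, P(k)$ has range contained in $\ran P(k)$, which has dimension $m$; hence $\tilde{Q}_{ij}(P)(k)$ is of rank at most $m$ and therefore trace class, with $\norm{\tilde{Q}_{ij}(P)(k)}_1 \leqslant m \, \norm{\tilde{Q}_{ij}(P)(k)} \leqslant 2 m \, \norm{\delta_i P} \, \norm{\delta_j P}$, where the last bound uses that the essential supremum of the fiber norms of $\delta_i P$ equals $\norm{\delta_i P} < \infty$, a consequence of the $\mathcal{C}^1$ assumption. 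A product of such operators is again of rank at most $m$, hence trace class, with trace norm bounded uniformly in $k$ by $m \prod_{\ell} \norm{\tilde{Q}_{i_\ell j_\ell}(P)(k)}$. Since $\BZ$ is compact, the map $k \mapsto \text{Tr}_{L^2(\WS)}\bigl( \tilde{Q}_{i_1 j_1}(P)(k) \cdots \tilde{Q}_{i_N j_N}(P)(k) \bigr)$ is therefore in $L^\infty(\BZ) \subset L^1(\BZ)$.

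With these two ingredients the conclusion is immediate: the product $Q_{i_1 j_1}(P) \cdots Q_{i_N j_N}(P)$ is a bounded $\Gamma$-periodic operator whose fibers are trace class for almost every $k$ with integrable fiber trace, so it satisfies the hypotheses of Lemma~\ref{lem_int_BF}. Hence it lies in $\mathcal{K}_\Gamma^1$, and the very same lemma yields the stated identity
\begin{align*}
	\mathcal{T}\bigl( Q_{i_1 j_1}(P) \cdots Q_{i_N j_N}(P) \bigr) = \frac{1}{\abs{\WS}} \int_{\BZ} \dd k \, \text{Tr}_{L^2(\WS)}\bigl( \tilde{Q}_{i_1 j_1}(P)(k) \cdots \tilde{Q}_{i_N j_N}(P)(k) \bigr).
\end{align*}
Alternatively, membership in $\mathcal{K}_\Gamma^1$ can be read off from the ideal property of $\mathcal{K}_\Gamma^1$ noted after Lemma~\ref{lem_int_BF}, since each individual factor $Q_{ij}(P)$ already belongs to $\mathcal{K}_\Gamma^1$.

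I expect the only genuinely delicate point to be the careful handling of the two Bloch--Floquet-type transforms: derivatives diagonalize under the Zak transform $\widetilde{\mathcal{U}}_{\mathrm{BF}}$ (Lemma~\ref{lem_dev_BF}), whereas Lemma~\ref{lem_int_BF} is phrased for the ordinary transform $\BF$, and one must verify that the identification $\partial_{k_i} P(k)$ of the fiber of $\delta_i P$ and the resulting fiber trace do not depend on this choice. This rests on the fiberwise unitarity of $G(k)$ and on the fact that the fiber derivatives $\partial_{k_i} P(k)$ exist only for almost every $k$; both are controlled by the $\mathcal{C}^1$ hypothesis together with the constant finite rank of $P(k)$. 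The remaining steps---finite rank of the sandwiched commutators and the uniform trace-norm bounds---are routine.
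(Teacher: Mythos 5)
Your proof is correct and takes essentially the same route as the paper's: both arguments hinge on the fact that the fibers of the product are supported on the rank-$m$ space $\ran P(k)$, yielding a uniform-in-$k$ trace bound of the form $m \, \bnorm{\cdot}$ that makes Lemma~\ref{lem_int_BF} applicable, and both dispose of the $\BF$-versus-$\widetilde{\mathcal{U}}_{\mathrm{BF}}$ discrepancy by noting that conjugation by the fiberwise unitary $G(k)$ leaves traces unchanged. The paper merely phrases the key estimate as $\babs{\text{Tr}_{L^2(\WS)} \bigl ( P(k) Y(k) P(k) \bigr )} \leqslant m \bnorm{Y(k)}$ via an orthonormal basis of $\ran P(k)$, which is your finite-rank trace-norm bound in different clothing.
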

\begin{proof} 
	The $\Gamma$-periodicity and the boundness of any factor $Q_{ij}(P)$ follows immediately from the definition. By Lemma~\ref{lem_int_BF}, the product $Q_{i_1j_1}(P) \cdots Q_{i_Nj_N}(P)$ is also in $\mathcal{K}^1_{\Gamma}$. Hence, our operators all have the structure $P Y \negthinspace P$ where $Y$ is bounded and $\Gamma$-periodic. Then the estimate 
	\begin{align*}
		\Babs{\text{Tr}_{L^2(\WS)} \big(P(k) \, Y(k) \, P(k)\big)} \leqslant \sum_{l=1}^m \babs{\bscpro{\psi_l(k)}{Y(k)\psi_l(k)}_{L^2(\WS)}} 
		\leqslant m \bnorm{Y(k)}_{L^2(\WS)}
	\end{align*}
	implies that the function $k \mapsto \abs{\text{Tr}_{L^2(\WS)} \big(P(k) \, Y(k) \, P(k)\big)}$ is bounded by $m \snorm{Y}_{L^2(\R^d)}$ for almost all $k\in\BZ$. Hence, $k \mapsto \text{Tr}_{L^2(\WS)} \big(P(k) \, Y(k) \, P(k) \big)$ is integrable and one obtains $P Y P \in \mathcal{K}^1_{\Gamma}$. 
	
	The final part follows from observing that if $Y\in\mathcal{K}_\Gamma^1$, then 
	\begin{align*}
		\text{Tr}_{L^2(\WS)} \big ( Y(k) \big ) = \text{Tr}_{L^2(\WS)} \big ( G(k)^{-1}Y(k)G(k) \big ) 
	\end{align*}
	holds for almost all $k\in\BZ$. 
\end{proof}
% 
% subsection integration_and_derivation (end)

\subsection{Wannier functions} % (fold)
\label{bloch_floquet:wannier_functions}
A rather simple choice of a Wannier system $\{ w_1^A,\ldots, w_m^A \}$ which spans $\ran P_{\specrel} \subset L^2(\R^d)$ is to set 
$w_n^A(x) :=  \BF^{-1} \varphi_n^A$ (cf.~equation \eqref{intro:eqn:wannier_function}), that is the Wannier functions are the partial Fourier transforms of the Bloch functions $\varphi_n^A$ in $k$. According to the theory of Fourier transforms, there is a direct link between regularity of $k \mapsto \varphi_n^A(k)$ and decay of $w_n^A$: by a variant of the Paley-Wiener theorem \cite[Theorem~2.2~(2)]{Kuchment:exponential_decaying_wannier:2009}, $w_n^A$ decays rapidly if and only if $\varphi_n^A$ is smooth and has exponential decay if and only if $\varphi_n^A$ is analytic in $k$ \cite[Lemma~3.3]{Kuchment:exponential_decaying_wannier:2009}. Since Bloch functions $\varphi_n^A$ are only continuous at  band crossings, the corresponding Wannier functions have polynomial decay and not exponential decay. Hence, we have to generalize our question: is it possible to find \emph{a} family 
\begin{align*}
	\Bigl \{ \psi_j : \BZ \longrightarrow L^2(\WS) \; \big \vert \; j = 1 , \ldots , m \Bigr \}
\end{align*}
such that the $k \mapsto \psi_j(k)$ are globally analytic on $\BZ$ and the set $\{ \psi_1(k) , \ldots , \psi_m(k) \}$ forms an orthonormal basis of $\ran P_{\specrel}(k)$ for all $k$? If such an {analytic} family exists, then the corresponding Wannier system $\{ w_1 , \ldots , w_m \}$, 
\begin{align*}
	w_j(x) := \bigl ( \BF^{-1} \psi_j \bigr )(x) 
	= \int_{\BZ} \dd k \, \e^{- \ii k \cdot (x - [x]_{\WS})} \, \psi_j(k,[x]_{\WS}) 
	, 
\end{align*}
is exponentially localized by the aforementioned Paley-Wiener theorem. 
% subsection wnanier_functions (end)

\subsection{Connecting analyticity to continuity: the Oka principle} % (fold)
\label{bloch_floquet:the_oka_principle}
The Oka principle is a ``meta-theorem'' linking complex analysis and homology theory, although its ramifications can be put more simply as \cite[p.~145]{Hoermander:complex_analysis:1990}: \emph{``On a Stein manifold it is `usually' possible to do analytically what one can do continuously.''} In a sense, a Stein manifold $X$ is a complex manifold that supports ``sufficiently many'' holomorphic functions \cite[Definition~5.1.3]{Hoermander:complex_analysis:1990}. 

The Brillouin zone $\BZ \cong \T^d$ is evidently too small, because it is compact and thus cannot support any non-trivial holomorphic functions; thus, in order to use the Oka principle, we have to enlarge the Brillouin zone to $\BZ_a := {\R^d_a}^* / \Gamma^*$ where for $a > 0$, we define 
\begin{align*}
	{\R_a^d}^\ast := \Bigl \{ k \in \C^d \; \big \vert \; \abs{\Im k \cdot e_j} < a , \; j = 1 , \ldots , d \Bigr \} 
	. 
\end{align*}
Using Floquet multipliers, we see that $\BZ_a$ is isomorphic to 
\begin{align*}
	\T_a^d := \Bigl \{ z = (z_1 , \ldots , z_d) \in \C^d \; \big \vert \; \e^{-a} < \abs{z_j} < \e^{+a} , \; j = 1 , \ldots , d \Bigr \} 
	. 
\end{align*}
To see that $\BZ_a$ is a Stein manifold, let us remark that the map $\BZ_a \ni k \mapsto \e^{\ii k} \in \T^d_a$ is holomorphic and one-to-one. Thus, if $\T^d_a$ is a Stein manifold, then so is $\BZ_a$. Since finite carteisan products of Stein manifolds are Stein manifolds \cite[Theorem~1 e), p. 125]{Grauert_Remmert:Stein_spaces:2004}, it suffices to note that the annulus $\T^1_a = \bigl \{ z \in \C \; \vert \; e^{-a} < \abs{z} < e^{+ a} \bigr \}$ is a non-compact Riemannian surface and thus a Stein manifold \cite[p.~134]{Grauert_Remmert:Stein_spaces:2004}. 

Now for $a > 0$ small enough, the projection $P_{\specrel}(k)$ and the relevant band energy functions $E_n^B(k)$ extend analytically from $\BZ$ to $\BZ_a$. Hence, if we can find a family 
\begin{align}
	\Bigl \{ \psi_j : \BZ_a \longrightarrow L^2(\WS) \; \big \vert \; j = 1 , \ldots , d \Bigr \} 
	\label{bloch_floquet:eqn:analytic_family}
\end{align}
of analytic functions such that $\{ \psi_1(k) , \ldots , \psi_m(k) \}$ is an orthonormal basis of $\ran P_{\specrel}(k)$ for each $k \in \BZ_a$, then $\bigl \{ \psi_1 \vert_{\BZ} , \ldots , \psi_m \vert_{\BZ} \bigr \}$ forms an analytic family of functions whose Wannier functions $w_j := \BF^{-1} \psi_j \vert_{\BZ}$ are exponentially localized. 
Thanks to the Oka principle, we need not prove \emph{analyticity}, it suffices to show the existence of a \emph{continuous} family $\{ \tilde{\psi}_1 , \ldots , \tilde{\psi}_m \}$ on $\BZ_a$. Since for $a > 0$ small enough, we can extend any continuous function $\psi : \BZ \longrightarrow L^2(\WS)$ for which $\psi(k) \in \ran P_{\specrel}(k)$ to continuous functions on the enlarged Brillouin zone $\BZ_a$, it suffices to show the existence of a continuous family on $\BZ$ instead of $\BZ_a$. We can summarize the discussion in the following 
\begin{prop}\label{bloch_floquet:prop:continuous_family_bloch_exp_loc_wannier}
	Let Assumptions~\ref{bloch_floquet:assumption:HA} and \ref{bloch_floquet:assumption:spectral_gap} be satisfied. Then there exists an exponentially localized Wannier system associated to $P_{\specrel}$ if and only if there exists a family 
	\begin{align*}
		\Bigl \{ \psi_j : \BZ \longrightarrow L^2(\WS) \; \big \vert \; j = 1 , \ldots , m \Bigr \}
	\end{align*}
	of \emph{continuous} functions forming a rank $m$ orthonormal system of $\ran P_{\specrel}(k) \subset L^2(\WS)$ for any $k \in \BZ$. 
\end{prop}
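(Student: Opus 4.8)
The plan is to establish the two implications separately, using the Bloch-Floquet transform $\BF$ to trade localization of Wannier functions for regularity of sections of the fibered family $k \mapsto \ran P_{\specrel}(k)$, and then to upgrade continuity to analyticity by means of the Oka principle.

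For the direction ``exponentially localized system $\Rightarrow$ continuous frame'', I would start from an exponentially localized Wannier system $\{w_1,\ldots,w_m\}$ and set $\psi_j(k) := (\BF w_j)(k) \in L^2(\WS)$. The easy half of the Paley-Wiener theorem (the one already quoted in Section~\ref{bloch_floquet:wannier_functions}) guarantees that each $\psi_j$ is analytic, hence in particular continuous, on $\BZ$. It then remains to check that $\{\psi_1(k),\ldots,\psi_m(k)\}$ is an orthonormal basis of $\ran P_{\specrel}(k)$ for every $k$. Using $(\BF w_{j,\gamma})(k) = \e^{-\ii k \cdot \gamma}\,\psi_j(k)$ together with the unitarity of $\BF$, the orthonormality relation $\sscpro{w_{j,\gamma}}{w_{j',\gamma'}} = \delta_{jj'}\delta_{\gamma\gamma'}$ translates into the statement that all dual-lattice Fourier coefficients of $k \mapsto \sscpro{\psi_j(k)}{\psi_{j'}(k)}_{L^2(\WS)}$ vanish except the zeroth, which equals $\delta_{jj'}$; by continuity this forces $\sscpro{\psi_j(k)}{\psi_{j'}(k)}_{L^2(\WS)} = \delta_{jj'}$ for all $k$. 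Finally, Bloch-Floquet decomposing the projector formula $P_{\specrel} = \sum_{j,\gamma} \sopro{w_{j,\gamma}}{w_{j,\gamma}}$ yields $P_{\specrel}(k) = \sum_{j} \sopro{\psi_j(k)}{\psi_j(k)}$, so the orthonormal set $\{\psi_j(k)\}$ spans the rank-$m$ space $\ran P_{\specrel}(k)$.

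For the converse, I would argue along the chain sketched in Section~\ref{bloch_floquet:the_oka_principle}. Given a continuous orthonormal frame $\{\psi_j\}$ of $\ran P_{\specrel}(k)$ over $\BZ$, I first extend it to a continuous frame over the enlarged, complex Brillouin zone $\BZ_a$; this is possible for $a>0$ small because the gap condition (Assumption~\ref{bloch_floquet:assumption:spectral_gap}) lets $P_{\specrel}$ extend analytically to $\BZ_a$, and $\BZ_a$ deformation retracts onto $\BZ$. The existence of such a frame says precisely that the holomorphic Bloch bundle over $\BZ_a$ is \emph{topologically} trivial. Since $\BZ_a$ is a Stein manifold, the Oka principle in the form of Grauert's theorem (topological and holomorphic classifications of vector bundles coincide on Stein manifolds) upgrades this to \emph{holomorphic} triviality: there is a global holomorphic frame $\{\chi_1,\ldots,\chi_m\}$ with $\chi_j(k) \in \ran P_{\specrel}(k)$.

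The holomorphic frame $\{\chi_j\}$ need not be orthonormal, so the last step is to orthonormalize it without losing analyticity. Restricting to the real slice $\BZ$, where $P_{\specrel}(k)$ is self-adjoint and the $\chi_j(k)$ are linearly independent, the Gram matrix $G_{ij}(k) := \sscpro{\chi_i(k)}{\chi_j(k)}_{L^2(\WS)}$ is real-analytic and positive definite, so the symmetric orthonormalization $\tilde{\psi}_j := \sum_i \chi_i\,(G^{-1/2})_{ij}$ produces a real-analytic orthonormal frame on $\BZ$. A real-analytic function on the compact torus $\BZ$ extends holomorphically to a strip $\BZ_{a'}$ for some $a'>0$; hence $w_j := \BF^{-1}\tilde{\psi}_j$ is exponentially localized by the Paley-Wiener theorem, and reversing the Parseval computation of the first part shows that $\{w_1,\ldots,w_m\}$ is a genuine Wannier system. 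I expect the only real difficulty to lie in the middle step: justifying the passage from continuous to holomorphic triviality via Grauert's theorem, which is where the complex-geometric content (the Stein property of $\BZ_a$) is genuinely used; the extension and orthonormalization steps are routine once the range condition $\psi_j(k) \in \ran P_{\specrel}(k)$ is kept in mind throughout.
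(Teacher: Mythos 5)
Your proof is correct and follows essentially the same route as the paper: the Paley--Wiener correspondence between exponential decay and analyticity, combined with the Oka principle (Grauert's theorem) on the enlarged Stein Brillouin zone $\BZ_a$, together with the extension of a continuous frame from $\BZ$ to $\BZ_a$. Your explicit symmetric orthonormalization $\tilde{\psi}_j = \sum_i \chi_i \, (G^{-\nicefrac{1}{2}})_{ij}$ of the holomorphic frame on the real slice fills in a detail the paper leaves implicit, but it does not change the overall argument.
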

%
% subsection the_oka_principle (end)

% section the_bloch_floquet_transform (end)

\section{Magnetic symmetries}\label{sec_magn_sym} % (fold)
In this section, we show how to associate to a large class of symmetries of 
\begin{align*}
	H^0 = - \Delta + V = (-\ii \nabla_x)^2 + V(\hat{x})
\end{align*}
\emph{``magnetic''} symmetries of $H^A = {(-\ii\nabla_x^A)}^2 + V(\hat{x})$ where for the sake of brevity, we have introduced the covariant derivative $-\ii \nabla_x^A := - \ii \nabla_x - A(\hat{x})$. 
Throughout this section, we assume that $A$ is ``sufficiently regular'', meaning circulations $\int_{[x,y]} A$ along line segments $[x,y] \subset \R^d$ are well defined. 
Unless specifically stated otherwise, we do \emph{not} assume that the magnetic field $B$ or the vector potential $A$ are necessarily $\Gamma$-periodic.

\subsection{Definitions and properties}%\subsection{General theory} % (fold)
\label{sec_magn_sym:general_theory}
To define our class of symmetries, set $\mathcal{M}$ to be the {abelian} algebra of Borel measurable functions on $\R^d$. Elements $V \in \mathcal{M}$ define multiplication operators on $L^2(\R^d)$; since elements of $\mathcal{M}$ need not be essentially bounded, the corresponding multiplication operators need not be bounded, but may define unbounded operators on some suitable domain. Now we specify the class of symmetries of interest: 
\begin{defn}[S-transform]\label{mag_sym:defn:S-transform}
	Let $R \in O(\R^d)$ be an orthogonal matrix. A \emph{S-transform} of type $R$ is a unitary or anti-unitary operator $U_R : L^2(\R^d) \longrightarrow L^2(\R^d)$ such that
	\begin{enumerate}[(i)]
		\item $U_R \, (-\ii \nabla_x) \, {U_R}^{-1} = R (-\ii \nabla_x)$; 
		\item $U_R \, \mathcal{M} \, {U_R}^{-1} \subseteq \mathcal{M}$, namely the conjugation by $U_R$ preserves the multiplicative character of the elements in $\mathcal{M}$.
	\end{enumerate}
\end{defn}
A wide array of well-known symmetries are S-transforms, \eg rotations, reflections, translations and time-inversion. The two conditions are natural in the discussion of Schrödinger operators: (i) S-transforms preserve the Galilean symmetry of kinetic energy $(-\ii \nabla_x)^2 = - \Delta$, \ie $[U_R , \Delta] = 0$ holds. Item~(ii) ensures that $U_R$ maps a Schr\"{o}dinger-type operator $-\Delta + V(\hat{x})$ onto {another} Schr\"{o}dinger-type operator $-\Delta + V'(\hat{x})$ with $V'(\hat{x}): = U_R V(\hat{x}) {U_R}^{-1}$. If in addition the S-transform commutes with the potential, 
\begin{align*}
	[V,U_R] = 0 
	, 
\end{align*}
then $[H^0,U_R] = 0$ holds as well and we say \emph{$U_R$ is a symmetry of $H^0$.} 

Now let us define magnetic S-transforms: 
\begin{defn}[Magnetic S-transforms]\label{mag_sym:defn:mag_S-transform}
	Let $U_R$ be an S-transform and $A$ a vector potential associated to the magnetic field $B$. Then the magnetic symmetry associated to $U_R$ is given by 
	\begin{align*}
		U_R^A := \e^{- \ii \int_{[0,\hat{x}]} \hat{A}} \, U_R 
	\end{align*}
	where the magnetic phase is the operator of multiplication with the exponential of the line integral of 
	\begin{align*}
		\hat{A} := R^{-1} (U_R A {U_R}^{-1}) - A
	\end{align*}
	along the line segment $[0,x]$. 
\end{defn}
\begin{thm}\label{mag_sym:thm:properties_mag_S-transform}
	Let $U_R$ be an S-transform and let $A$ be a vector potential for the magnetic field $B$. 
	\begin{enumerate}[(i)]
		\item $U_R^A$ is unitary or anti-unitary if $U_R$ is unitary or anti-unitary, respectively. 
		\item $U_R^A (-\ii \nabla_x^A) {U_R^A}^{-1} = R (-\ii \nabla_x^A)$
		\item Assume $V$ and $A$ are such that $H^0$ and $H^A$ define selfadjoint operators. Then for any symmetry $U_R$ of $H^0$, the associated \emph{magnetic} S-transform $U_R^A$ is a symmetry of $H^A$, \ie 
		\begin{align*}
			\bigl [ H^0,U_R \bigr ] = 0 \quad \Rightarrow \quad \bigl [ H^A,U_R^A \bigr ] = 0 
			. 
		\end{align*}
	\end{enumerate}
\end{thm}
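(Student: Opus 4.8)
The plan is to establish the three claims in order, deriving (iii) as a short consequence of (ii), which is the heart of the matter; claim (i) is essentially bookkeeping. For (i), the magnetic phase $\e^{-\ii\int_{[0,\hat x]}\hat A}$ is multiplication by a function of modulus one and hence a \emph{unitary} operator on $L^2(\R^d)$. Since $U_R^A$ is the composition of this unitary with $U_R$, and composing a unitary with a unitary (respectively anti-unitary) operator again yields a unitary (respectively anti-unitary) operator, the claim is immediate.

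For (ii) I would compute the conjugation in two stages using the factorization $U_R^A = G\,U_R$ with $G := \e^{-\ii\Phi(\hat x)}$ and $\Phi(x) := \int_{[0,x]}\hat A$. First conjugate by $U_R$: property~(i) of Definition~\ref{mag_sym:defn:S-transform} gives $U_R(-\ii\nabla_x)U_R^{-1} = R(-\ii\nabla_x)$, property~(ii) guarantees that $U_R A U_R^{-1}$ is again a multiplication operator, and Definition~\ref{mag_sym:defn:mag_S-transform} furnishes $U_R A U_R^{-1} = R(A+\hat A)$, so that
\[
U_R(-\ii\nabla_x^A)U_R^{-1} = R(-\ii\nabla_x) - U_R A U_R^{-1} = R\bigl(-\ii\nabla_x^A - \hat A\bigr).
\]
Then conjugate by the scalar phase $G$: it commutes with every multiplication operator and with the constant matrix $R$, while on the differential part $G(-\ii\nabla_x)G^{-1} = -\ii\nabla_x + \nabla\Phi$, whence
\[
U_R^A(-\ii\nabla_x^A)(U_R^A)^{-1} = R\bigl(-\ii\nabla_x^A + \nabla\Phi - \hat A\bigr).
\]
Thus the whole statement reduces to the single identity $\nabla\Phi = \hat A$.

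The hard part, and the step I expect to require genuine geometric input, is precisely this identity. Differentiating the straight-line integral $\Phi(x) = \int_0^1 \hat A(tx)\cdot x\,\dd t$ produces $\partial_i\Phi = \hat A_i + \sum_k x_k \int_0^1 t\,(\partial_i\hat A_k - \partial_k\hat A_i)(tx)\,\dd t$, so that $\nabla\Phi = \hat A$ holds \emph{if and only if} $\hat A$ is a closed one-form, $\dd\hat A = 0$; on the simply connected $\R^d$ this is then equivalent to exactness, which is what makes the line-integral phase a genuine potential for $\hat A$. I would therefore isolate as the key lemma the closedness of $\hat A = R^{-1}(U_R A U_R^{-1}) - A$, i.e. the statement that $R^{-1}(U_R A U_R^{-1})$ is a vector potential for the \emph{same} field $B$ as $A$. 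Using $[(-\ii\nabla_x^A)_i,(-\ii\nabla_x^A)_j] = \ii B_{ij}$ and conjugating by $U_R$, the field strength of $R^{-1}(U_R A U_R^{-1})$ comes out as the image of $B$ under the orthogonal transformation $R$, carrying the sign dictated by whether $U_R$ is unitary or anti-unitary; the lemma is thus exactly the compatibility of the magnetic field $B$ with the symmetry $U_R$. This is the main obstacle to a purely formal proof and the point at which the structure of the underlying symmetry must be used.

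Finally, (iii) follows from (ii) with little extra work. Because $R$ is orthogonal, $\sum_i R_{ij}R_{ik} = \delta_{jk}$, and therefore $\bigl(R(-\ii\nabla_x^A)\bigr)^2 = \sum_i\bigl(\sum_j R_{ij}(-\ii\nabla_x^A)_j\bigr)^2 = \sum_j\bigl((-\ii\nabla_x^A)_j\bigr)^2 = (-\ii\nabla_x^A)^2$, so (ii) shows that $U_R^A$ preserves the magnetic kinetic term verbatim. For the potential, the magnetic phase commutes with $V(\hat x)$, and the hypothesis $[H^0,U_R]=0$ forces $[V,U_R]=0$, so $U_R^A\,V(\hat x)\,(U_R^A)^{-1} = V(\hat x)$. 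Adding the two contributions gives $U_R^A H^A (U_R^A)^{-1} = H^A$, that is $[H^A,U_R^A] = 0$.
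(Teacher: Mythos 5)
Your parts (i) and (iii) coincide with the paper's proof, and in (ii) you perform the same two-step conjugation (first by $U_R$, then by the scalar phase) that the paper uses; both arguments reduce the claim to the single identity $\nabla\Phi = \hat{A}$ for the straight-line integral $\Phi(x) = \int_{[0,x]}\hat{A}$. The genuine gap in your proposal is that you stop there: the closedness of $\hat{A}$, which you correctly show to be \emph{equivalent} to $\nabla\Phi = \hat{A}$, is announced as ``the key lemma'' and ``the main obstacle'' but never proven, so the proposal does not establish the theorem as stated. For comparison, the paper disposes of this point by asserting $\bigl[ (-\ii\partial_{x_j}) , \lambda^{-1} \bigr] = \hat{A}_j \, \lambda^{-1}$ ``by a simple computation''; since the left-hand side equals $(\partial_{x_j}\Phi) \, \lambda^{-1}$, that assertion \emph{is} the identity $\nabla\Phi = \hat{A}$, adopted without justification. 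So your attempt stalls at exactly the step the paper's proof glosses over.

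The deeper point, which your reduction exposes, is that this gap cannot be closed in the stated generality, because the lemma you isolate is false for the paper's central example. For magnetic time-reversal one has $U_R = C$, $R = -\mathbf{1}$ and $C A_j C = A_j$, hence $\hat{A} = R^{-1}(C A C) - A = -2A$ (up to the paper's sign conventions) and $\dd\hat{A} = \mp 2B \neq 0$ whenever $B \neq 0$; equivalently, the ``compatibility'' of $B$ with the symmetry that your lemma demands is precisely what Remark~\ref{mag_sym:defn:symmetry_B} declares ``impossible by design'' for time-reversal. And indeed (ii)--(iii) fail there: any anti-unitary of the form $\e^{\ii \chi} \, C$ conjugates $H^A$ into $H^{\nabla\chi - A}$, a Schr\"{o}dinger operator whose magnetic field is $-B$, so it can commute with $H^A$ only if $B = 0$; no choice of phase, line integral or otherwise, can repair this. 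Your instinct that the missing step requires input about the symmetry is therefore correct, but for $C^A$ no such input exists. What both your argument and the paper's actually prove is the corrected statement in which $\dd\hat{A} = 0$ is added as a hypothesis --- this holds, e.g., for lattice translations with $\Gamma$-periodic $B$, or for rotations leaving $B$ invariant --- and with that hypothesis your proof is complete and correct.
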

\begin{proof}
	\begin{enumerate}[(i)]
		\item This follows directly from the unitarity of $\e^{-\ii \int_{[0,\hat{x}]} \hat{A}}$. 
		\item To be more concise, we define $\lambda := \e^{- \ii \int_{[0,\hat{x}]} \hat{A}}$. Then, by a simple computation, we get $\bigl [ (- \ii \partial_{x_j}) \, , \, {\lambda}^{-1} \bigr ] = \hat{A}_j \, {\lambda}^{-1} = {\lambda}^{-1} \, \hat{A}_j$ and thus 
		\begin{align*}
			U_R^A (-\ii \partial_{x_j}^A) {U_R^A}^{-1} &= \lambda U_R \, (-\ii \partial_{x_j}^A) \, {U_R}^{-1} {\lambda}^{-1} 
			\\
			&
			= \lambda \bigl ( R (- \ii \nabla_x) \bigr )_j {\lambda}^{-1} - U_R A_j(\hat{x}) {U_R}^{-1} 
			\\
			&= \lambda \Bigl ( {\lambda}^{-1} \, \bigl ( R (-\ii \nabla_x) \bigr )_j + {\lambda}^{-1} \bigl ( R \hat{A}(\hat{x}) \bigr )_j 
			\Bigr ) 
			%+ \\
			%&\qquad \qquad 
			- U_R A_j(\hat{x}) {U_R}^{-1} 
			\\
			&= \bigl ( R (- \ii \nabla_x) \bigr )_j +  \bigl ( R \hat{A}(\hat{x}) \bigr )_j - U_R A_j(\hat{x}) {U_R}^{-1} 
			\\
			&= \bigl ( R (-\ii \nabla_x^A) \bigr )_j 
			. 
		\end{align*}
		\item This is a direct consequence of (ii) and the fact that $\e^{- \ii \int_{[0,\hat{x}]} \hat{A}} , V \in \mathcal{M}$ commute: 
		\begin{align*}
			U_R^A \, H^A \, {U_R^A}^{-1} &= U_R^A \, {(-\ii \nabla_x^A)}^2 \, {U_R^A}^{-1} + U_R^A V(\hat{x}) {U_R^A}^{-1}
			\\
			&
			= \bigl ( R (-\ii \nabla_x^A) \bigr )^2 + U_R V(\hat{x}) {U_R}^{-1} 
			\\
			&= {(-\ii \nabla_x^A)}^2 + V(\hat{x}) 
			= H^A
		\end{align*}
	\end{enumerate}
\end{proof}
\begin{remark}\label{mag_sym:defn:symmetry_B}
	Note that magnetic symmetries are genuinely different from their non-mag\-netic counterpart, unless the magnetic field is \emph{compatible} with the symmetry. Assume the S-transform $U_R$ is a symmetry of $H^0$,
	then we say that $B$ is compatible with the symmetry $U_R$ if and only if there exists a vector potential $A$ such that 
	\begin{align*}
		A(\hat{x}) = R^{-1} \bigl ( U_R A(\hat{x}) {U_R}^{-1} \bigr ) 
	\end{align*}
	is satisfied. In that case, $U_R$ is also a symmetry of $H^A$, namely if the magnetic field $B \neq 0$ is compatible with $U_R$, magnetic and non-magnetic symmetry coincide. Needless to say that this is impossible by design for some symmetries, \eg time-reversal. 
\end{remark}
\begin{remark}
	It is clear that the magnetic phase factor is not unique: we could have added gauge transformations before and after. E.~g.~for any $a , b \in \R^d$, the operator 
	\begin{align*}
		\e^{+ \ii \int_{[a,\hat{x}]} A}\ \e^{- \ii \int_{[b,\hat{x}]} R^{-1} (U_R A(\hat{x}) {U_R}^{-1})} 
	\end{align*}
	would also serve to ``magnetize'' S-transforms. 
\end{remark}
%
% subsection general_theory (end)

\subsection{Applications} % (fold)
\label{mag_sym:examples}
Now let us discuss some relevant examples:

\paragraph{Translations} % (fold)
\label{mag_sym:translations}
Let $T : \R^d \longrightarrow \mathcal{U} \bigl ( L^2(\R^d) \bigr )$ be the unitary representation of translations on $L^2(\R^d)$ via $(T_y \Psi)(x) := \Psi(x - y)$ for all $x , y \in \R^d$. Then for any fixed $y$, the operator $T_y$ is an S-transform of type $\mathbf{1} \in O(d)$. In particular, if $V$ is $\Gamma$-periodic, then for any $\gamma \in \Gamma$, the unitary $T_{\gamma}$ is a symmetry of $H^0$; this symmetry extends to the \emph{magnetic translation} 
\begin{align}\label{eq:sec3_G0}
	T_{\gamma}^A = \e^{-\ii \int_{[\gamma,\hat{x}+\gamma]} A} \, \e^{+ \ii \int_{[0,\hat{x}]} A} \, T_{\gamma}
\end{align}
which is a symmetry of $H^A$. We note that the map $T^A : \Gamma \longrightarrow \mathcal{U} \bigl ( L^2(\R^d) \bigr )$, $\gamma \mapsto T^A_{\gamma}$ is not a group representation, but a \emph{generalized projective} representation of the abelian group $\Gamma$ in the sense of \cite{Mantoiu_Purice_Richard:twisted_X_products:2004}: the composition of two magnetic translations 
\begin{align}
	T^A_{\gamma_1} T^A_{\gamma_2} &= \e^{ -\ii \Phi^B[\hat{x};\gamma_1,\gamma_2]} \, T^A_{\gamma_1 + \gamma_2}
	\label{magn_symm:eqn:composition_mag_translations}
\end{align}
is again a magnetic translation up to a phase factor that is the exponential of a magnetic flux $\Phi_B$. 
Moreover, for constant magnetic fields \eqref{magn_symm:eqn:composition_mag_translations} reduces to the usual magnetic translations \cite{zak-64}. Evidently, if $A$ can be chosen $\Gamma$-periodic, then magnetic and non-magnetic translations coincide as the magnetic factor reduces to the identity. 
% paragraph translations (end)

\paragraph{Rotations and reflections} % (fold)
\label{mag_sym:O_d}
Any $R \in O(d)$ acts on $L^2(\R^d)$ via $(\mathcal{R} \Psi)(x) := \Psi(R^{-1} x)$. The magnetic rotations are then defined as 
\begin{align*}
	\mathcal{R}^A := \e^{- \ii \int_{[0,\hat{x}]} \hat{A}} \, \mathcal{R} 
\end{align*}
where $\hat{A} = R A(R^{-1} \, \cdot \,) - A$. Magnetic and non-magnetic rotations coincide if and only if the vector potential satisfies 
\begin{align*}
	R^{-1} A( \cdot \,) = A( R^{-1} \, \cdot \, ) 
	. 
\end{align*}
An important special case is that of the \emph{parity operator} $\parity$ where $R = - \mathbf{1} \in O(d)$ and which acts as $(\parity \Psi)(x) := \Psi(-x)$. If we can choose an odd vector potential, $A(- x) = - A(x)$, then magnetic parity coincides with ordinary parity. 
% paragraph rotations_and_reflections (end)

\paragraph{Time-reversal} % (fold)
\label{mag_sym:time_reversal}
For a spinless particle, time inversion is implemented  by complex conjugation 
\begin{align*}
	(C \Psi)(x) := \Psi^*(x) 
	. 
\end{align*}
It is an anti-unitary S-transform of type $-\mathbf{1} \in O(d)$. Since $A$ is a real-valued multiplication operator, $C A_j C = A_j$ holds and \emph{magnetic time-reversal symmetry $C^A$} is given by 
\begin{align*}
	C^A := \e^{- \ii \int_{[0,\hat{x}]} (A - (- A))} \, C = \e^{- \ii \, 2 \int_{[0,\hat{x}]} A} \, C 
	. 
\end{align*}
$C$ is clearly a symmetry of $H^0$ and thus by Theorem~\ref{mag_sym:thm:properties_mag_S-transform} \emph{magnetic} time-reversal is a symmetry of $H^A$, \ie $\bigl [ C^A , H^A \bigr ] = 0$. 
Moreover, $C^2 = \id_{L^2(\R^d)}$ and the unitarity of $\e^{- \ii 2 \int_{[0,\hat{x}]} A}$ implies that also ${C^A}^2 = \id_{L^2(\R^d)}$. Obviously, magnetic time-reversal never coincides with non-magnetic time-reversal unless $A = 0$ and thus $B = 0$. 
% paragraph time_reversal (end)
% subsection applications_relevant_examples (end)

\subsection{$\Gamma$-periodic magnetic symmetries and the Bloch-Floquet transform} % (fold)
\label{mag_symm:consequences_of_magnetic_symmetries}
Now we assume again that the vector potential $A$ is $\Gamma$-periodic. This section deals with magnetic symmetries preserving $\Gamma$-periodicity: these factor through the Bloch-Floquet transform and the existence of such $\Gamma$-periodic symmetries induces non-trivial relations between vector spaces associated to different fibers of the direct integral decomposition. In particular, we explore in detail the case of magnetic time-reversal and magnetic parity.
\begin{prop}\label{mag_symm:prop:consequence_mag_time-reversal}
	Let Assumption~\ref{bloch_floquet:assumption:HA} be satisfied and define magnetic time-reversal symmetry $C^A$ as above. Then, the following statements hold true:
	\begin{enumerate}[(i)]
		\item $[C^A , T_{\gamma}] = 0$ for all $\gamma\in\Gamma$ 
		\item Define the anti-unitary $J^A \psi := \e^{- \ii 2 \int_{[0,\hat{y}]} A} \, \psi^*$ on $L^2(\WS)$. Then ${C^A_{\mathrm{BF}}} := \BF \, C^A \, \BF^{-1}$ is the anti-unitary map which acts on $\psi \in L^2(\BZ) \otimes L^2(\WS)$ as 
		\begin{align*}
			\bigl ( C^A_{\mathrm{BF}} \psi \bigr )(k) = \e^{- \ii 2 \int_{[0,\hat{y}]} A} \, \psi^*(-k)
			= J^A \bigl ( \psi(-k) \bigr )
			. 
		\end{align*}
		\item  The fiber hamiltonians $H^A(k)$ and $H^A(-k)$ are related via $J^A$, \ie 
		\begin{align*}
			H^A(k) \, J^A = J^A \, H^A(-k)
		\end{align*}
		holds for all $k \in \BZ$. Moreover, if the relevant bands are separated by a gap in the sense of Assumption~\ref{bloch_floquet:assumption:spectral_gap}, then the fiber projections $P_{\specrel}(k)$ and $P_{\specrel}(-k)$ are also related by $J^A$, \ie 
		\begin{align}
			P_{\specrel}(k) \, J^A = J^A \, P_{\specrel}(-k)
			.
			\label{mag_symm:eqn:mag_time-reversal_projections}
		\end{align}
		holds for all $k \in \BZ$. 
	\end{enumerate}
\end{prop}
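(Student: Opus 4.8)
The plan is to establish the three claims in order, deducing (ii) from (i), and (iii) from (ii) together with the fact that $C^A$ is a symmetry of $H^A$ (Theorem~\ref{mag_sym:thm:properties_mag_S-transform} applied to the time-reversal example).

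For (i), I would compute the commutator directly on $\Psi \in L^2(\R^d)$. Writing $C^A = \mu \, C$ with the magnetic phase $\mu(x) := \e^{- \ii 2 \int_{[0,x]} A}$ and $C$ complex conjugation, and using that $C$ commutes with the lattice translation $T_{\gamma}$, one finds $(C^A T_{\gamma} \Psi)(x) = \mu(x) \, \overline{\Psi(x - \gamma)}$ and $(T_{\gamma} C^A \Psi)(x) = \mu(x - \gamma) \, \overline{\Psi(x - \gamma)}$, so that $[C^A , T_{\gamma}] = 0$ is equivalent to the $\Gamma$-periodicity of $\mu$. The remaining point is therefore to show $\mu(x - \gamma) = \mu(x)$, i.e.\ that $\int_{[0,x]} A - \int_{[0,x-\gamma]} A$ lies in $\pi \Z$. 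Here the $\Gamma$-periodicity of $A$ enters to replace $\int_{[0,x-\gamma]} A$ by a circulation over a translated segment, and the zero-flux condition~\eqref{bloch_floquet:eqn:zero_flux_condition} is used to kill the flux of the lattice loops that appear when comparing the two line integrals. I expect this step --- the careful bookkeeping of circulations and the essential use of the zero-flux hypothesis --- to be the main obstacle; the rest of the proposition is essentially formal once (i) is in place.

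For (ii), I would insert $C^A = \mu \, C$ into the defining formula~\eqref{bloch_floquet:eqn:gamma_star_periodicity_Bloch_functions0} for $\BF$. Using (i), the periodic phase $\mu$ commutes with every $T_{-\gamma}$ and restricts on the fundamental cell to the operator $\e^{- \ii 2 \int_{[0,\hat y]} A}$; pulling it out of the sum produces exactly the factor appearing in $J^A$. The decisive move is that the antilinearity of $C$ conjugates the Floquet exponentials, $\overline{\e^{- \ii k \cdot \gamma}} = \e^{- \ii (-k) \cdot \gamma}$, which turns the sum defining $(\BF \, \cdot \,)(k)$ into the one defining $(\BF \, \cdot \,)(-k)$. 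Collecting the terms yields $(\BF C^A \Psi)(k) = \e^{- \ii 2 \int_{[0,\hat y]} A} \, \overline{(\BF \Psi)(-k)} = J^A\big( (\BF \Psi)(-k) \big)$, which is the claim.

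For (iii), I would read (ii) fiberwise. Since $C^A$ commutes with $H^A$ by Theorem~\ref{mag_sym:thm:properties_mag_S-transform} and $\BF$ diagonalises $H^A$ as $\int_{\BZ}^{\oplus} \dd k \, H^A(k)$, conjugating the direct integral by $C^A_{\mathrm{BF}}$ and comparing fibers gives $J^A \, H^A(-k) = H^A(k) \, J^A$ for almost every $k$, hence for all $k$ by continuity. In particular $H^A(k)$ and $H^A(-k)$ are anti-unitarily equivalent, thus isospectral, so $\specrel(k) = \specrel(-k)$ and a common contour $\mathsf{C}(k) = \mathsf{C}(-k)$ may be used in~\eqref{bloch_floquet:eqn:P_k}. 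Because $1_{\specrel}$ is real-valued, the functional calculus is compatible with the anti-unitary $J^A$: from $H^A(k) = J^A \, H^A(-k) \, (J^A)^{-1}$ one obtains $P_{\specrel}(k) = J^A \, P_{\specrel}(-k) \, (J^A)^{-1}$, which is~\eqref{mag_symm:eqn:mag_time-reversal_projections}. To avoid invoking the spectral theorem for anti-unitaries, I would alternatively transport the intertwining through the resolvent: the relation $H^A(k) \, J^A = J^A \, H^A(-k)$ gives $(H^A(k) - \zeta)^{-1} J^A = J^A \, (H^A(-k) - \overline{\zeta})^{-1}$ --- the conjugation of $\zeta$ being forced by the antilinearity of $J^A$ --- and inserting this into the Cauchy integral~\eqref{bloch_floquet:eqn:P_k}, together with the change of variables $\zeta \mapsto \overline{\zeta}$ reversing the contour orientation, reproduces~\eqref{mag_symm:eqn:mag_time-reversal_projections}.
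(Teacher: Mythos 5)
Your reduction of (i) to the $\Gamma$-periodicity of the phase $\mu(x) := \e^{-\ii 2\int_{[0,x]}A}$ is exactly the right first move, and, \emph{granting} (i), the rest of your proposal is sound: your computation in (ii) is the same as the paper's (which likewise uses (i) to conclude that multiplication by $\mu$ fibers trivially under $\BF$), while in (iii) your resolvent route --- $(H^A(k)-\zeta)^{-1}J^A = J^A(H^A(-k)-\overline{\zeta})^{-1}$ inserted into the Cauchy integral, with the contour conjugation compensating the antilinearity --- is a legitimate alternative to the paper's argument, which instead shows that $C^A_{\mathrm{BF}}\varphi_n^A(\cdot)$ evaluated at $k$ is an eigenfunction of $H^A(k)$ with eigenvalue $E_n^B(-k)$, deduces $\specrel(k)=\specrel(-k)$ by density, and then applies functional calculus together with $(J^A)^2=\id$.

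The genuine gap is in (i), at precisely the step you flag as the main obstacle, and it cannot be closed the way you propose. By $\Gamma$-periodicity of $A$ one has $\int_{[0,x-\gamma]}A = \int_{[\gamma,x]}A$, and closing the path $0\to x\to\gamma\to 0$, Stokes' theorem gives
\begin{align*}
	\int_{[0,x]}A - \int_{[0,x-\gamma]}A
	= \Phi_B\bigl(\Delta(0,x,\gamma)\bigr) + \int_{[0,\gamma]}A
	,
\end{align*}
where $\Phi_B\bigl(\Delta(0,x,\gamma)\bigr)$ is the flux of $B$ through the \emph{triangle} with vertices $0$, $x$, $\gamma$. This is not a lattice loop: the zero-flux condition~\eqref{bloch_floquet:eqn:zero_flux_condition} only controls fluxes through the lattice cells $C_{jl}$, whereas $\Phi_B\bigl(\Delta(0,x,\gamma)\bigr)$ varies continuously with the vertex $x$, so no bookkeeping with lattice cells can force the right-hand side into $\pi\Z$ for all $x$. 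A concrete test case makes this explicit: for $\Gamma=\Z^2$ take $A(x) = \bigl(0,-\tfrac{1}{2\pi}\cos(2\pi x_1)\bigr)$, which is periodic with $B_{12}=\sin(2\pi x_1)$ of zero flux; then $\int_{[0,x]}A = -\tfrac{x_2}{4\pi^2 x_1}\sin(2\pi x_1)$ and the difference for $\gamma=e_2$ equals $-\tfrac{1}{4\pi^2 x_1}\sin(2\pi x_1)$, which is manifestly non-constant modulo $\pi$, so $\mu$ is not $\Gamma$-periodic for this admissible $A$. The underlying issue is that radial integration $x\mapsto\int_{[0,x]}A$ inverts the gradient only on \emph{closed} one-forms, and $2A$ is not closed when $B\neq 0$. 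Be aware that the paper's own proof of (i) does not resolve this either: it simply asserts that multiplication by $\mu$ commutes with $T_\gamma$ ``since $A$ is $\Gamma$-periodic'', which is exactly the unjustified step; your instinct that this is where the real work lies is correct, but the zero-flux hypothesis is not the tool that closes it.
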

\begin{proof}
	\begin{enumerate}[(i)]
		\item follows trivially from the definition of $C^A$ and observing that both the complex conjugation $C$ and the multiplication by the phase $\e^{- \ii 2 \int_{[0,x]} A}$ commute with lattice translations $T_{\gamma}$ since $A$ is $\Gamma$-periodic by assumption. 
		\item From (i), we conclude the operator $\e^{- \ii 2 \int_{[0,x]} A}$ fibers trivially in $k$, 
		\begin{align*}
			\BF \, \e^{- \ii 2 \int_{[0,\hat{x}]} A} \, \BF^{-1} = \int_{\BZ}^{\oplus} \dd k \, \e^{- \ii 2 \int_{[0,\hat{y}]} A} 
			, 
		\end{align*}
		meaning that the fiber operator is independent of $k$. Now the claim follows from direct computation: for any $\Psi \in L^2(\R^d)$, we get 
		\begin{align*}
			\bigl ( \BF C^A \Psi \bigr )(k) &= \sum_{\gamma \in \Gamma} \e^{-\ii k \cdot \gamma} \, \big(T_\gamma C^A \Psi \big) \big \vert_{\WS} 
			= \e^{- \ii 2 \int_{[0,\hat{y}]} A} \; \sum_{\gamma \in \Gamma}\e^{-\ii k \cdot \gamma} \, \big(T_\gamma  \Psi^{\ast}\big) \big \vert_{\WS} 
			\\
			&= \e^{- \ii 2 \int_{[0,\hat{y}]} A} \; \biggl ( \sum_{\gamma \in \Gamma} \e^{+\ii k \cdot \gamma} \, \bigl ( T_\gamma  \Psi \bigr ) \big \vert_{\WS} \biggr )^{\ast}
			\\
			&
			= \e^{- \ii 2 \int_{[0,\hat{y}]} A} \, \bigl ( \BF \Psi \bigr )^*(- k) 
			= J^A \bigl ( (\BF \Psi)(-k) \bigr )
			. 
		\end{align*}
		Since $C^A_{\mathrm{BF}}$ is a composition of two unitary and one anti-unitary operator, it is again anti-unitary. 
		\item If we define 
		\begin{align*}
			H^A_{\mathrm{BF}} := \int_{\BZ}^{\oplus} \dd k \, H^A(k) 
			, 
		\end{align*}
		then $[H^A_{\mathrm{BF}} , C^A_{\mathrm{BF}}] = 0$ follows immediately from $[H^A,C^A] = 0$ (Theorem~\ref{mag_sym:thm:properties_mag_S-transform}). In order to prove the first part of (iii), we remark that if $\varphi_n^A(k)$ is a Bloch function to $E_n^B(k)$, then $(C^A_{\mathrm{BF}} \varphi_n^A)(k)$ is an eigenfunction to $H^A(k)$ with eigenvalue $E_n^B(-k)$ since 
		\begin{align*}
			H^A(k) \bigl ( C^A_{\mathrm{BF}} \varphi_n^A \bigr )(k) &=  \bigl ( C^A_{\mathrm{BF}} H^A_{\mathrm{BF}} \, \varphi_n^A \bigr ) (k) 
			= \e^{- \ii 2 \int_{[0,\hat{y}]} A} \, \bigl ( H^A_{\mathrm{BF}} \varphi_n^A \bigr )^{\ast}(-k)
			\\
			&
			= \e^{- \ii 2 \int_{[0,\hat{y}]} A} \, \bigl ( H^A(-k) \varphi_n^A(-k) \bigr )^{\ast}
			\\
			&
			= \e^{- \ii 2 \int_{[0,\hat{y}]} A} \, E_n^B(-k) \, \bigl ( \varphi_n^A \bigr )^{\ast}(-k) 
			\\
			&
			= E_n^B(-k) \, \bigl ( C^A_{\mathrm{BF}} \varphi_n^A \bigr )(k) 
			. 
		\end{align*}
		The fiberwise relation $H^A(k) \, J^A = J^A \, H^A(-k)$
		follows easily from the relation between $C^A_{\mathrm{BF}}$ and $J^A$ and the density of the basis $\varphi_n^A(-k)$. Morevover, we have also proven $\specrel(-k) = \bigcup_{n \in \mathcal{I}} \bigl \{ E_n^B(-k) \bigr \} \subseteq \specrel(k)$. Exchaning the roles of $k$ and $-k$ yields that the two sets are in fact equal, $\specrel(-k) = \specrel(k)$. Hence, by functional calculus and ${J^A}^2 = \id_{L^2(\WS)}$, it follows 
		\begin{align*}
			P_{\specrel}(k) &= 1_{\specrel(k)} \bigl ( H^A(k) \bigr ) 
			= J^A \, 1_{\specrel(-k)} \bigl ( H^A(-k) \bigr ) \, J^A
			= J^A \, P_{\specrel}(-k) \, J^A
			. 
		\end{align*}
	\end{enumerate}
\end{proof}
Similarly, one can prove the same result for the parity operator $\parity$. 
\begin{prop}\label{mag_symm:prop:consequence_mag_parity}
	Let Assumptions~\ref{bloch_floquet:assumption:HA} and \ref{bloch_floquet:assumption:spectral_gap} be satisfied and assume $\parity$ is a symmetry of $H^0$. The unitary operator ${\parity^A_{\mathrm{BF}}} := \BF \, \parity^A \, \BF^{-1}$ acts on $L^2(\BZ) \otimes L^2(\WS)$ as 
	\begin{align*}
		\bigl ( \parity^A_{\mathrm{BF}} \psi \bigr )(k) = \Pi^A \bigl ( \psi(-k) \bigr ) 
	\end{align*}
	where $\Pi^A$ is the multiplication operator $\e^{+ \ii \int_{[0,\hat{y}]} (A(\cdot) + A(- \, \cdot))}$. 
	Moreover, the fiber hamiltonian and the spectral projection at conjugate points are related by $\Pi^A$, \ie 
	\begin{align*}
		H^A(k) \, \Pi^A &= \Pi^A \, H^A(-k)
		\\
		P_{\specrel}(k) \, \Pi^A &= \Pi^A \, P_{\specrel}(-k)
	\end{align*}
	hold for all $k \in \BZ$. 
\end{prop}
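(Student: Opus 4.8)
The plan is to mirror the proof of Proposition~\ref{mag_symm:prop:consequence_mag_time-reversal} almost verbatim, since magnetic parity $\parity^A$ is constructed by the same ``magnetization'' recipe as $C^A$, only with the unitary $\parity$ in place of the anti-unitary $C$ and with the type $R = -\mathbf{1}$ entering the magnetic phase through $\hat{A} = R A(R^{-1}\,\cdot\,) - A$. First I would record that, because $A$ is $\Gamma$-periodic, the function $A(\cdot) + A(-\,\cdot)$ is $\Gamma$-periodic as well, so the multiplication operator $\e^{+\ii\int_{[0,\hat{x}]}(A(\cdot)+A(-\,\cdot))}$ commutes with the lattice translations $T_\gamma$; this is the analogue of part~(i) of the time-reversal proposition and it guarantees that the magnetic phase fibers trivially in $k$ under $\BF$.

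Next I would compute the Bloch-Floquet conjugate $\parity^A_{\mathrm{BF}} = \BF\,\parity^A\,\BF^{-1}$ by applying it to an arbitrary $\Psi\in L^2(\R^d)$ and using the defining formula~\eqref{bloch_floquet:eqn:gamma_star_periodicity_Bloch_functions0}. The key algebraic fact is that the ordinary parity $\parity$ satisfies $(T_\gamma \parity \Psi)\vert_{\WS} = (T_{-\gamma}\Psi)^{\parity}\vert_{\WS}$ in the appropriate sense, so that the sum $\sum_{\gamma}\e^{-\ii k\cdot\gamma}(T_\gamma \parity\Psi)\vert_{\WS}$ reorganizes — after the index flip $\gamma\mapsto-\gamma$ — into $(\BF\Psi)(-k)$ dressed by the (trivially fibering) magnetic phase. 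This reproduces the claimed action $(\parity^A_{\mathrm{BF}}\psi)(k) = \Pi^A(\psi(-k))$ with $\Pi^A = \e^{+\ii\int_{[0,\hat{y}]}(A(\cdot)+A(-\,\cdot))}$, and since $\parity$ is unitary (unlike $C$) no complex conjugation appears; the resulting $\parity^A_{\mathrm{BF}}$ is unitary.

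Finally, for the intertwining relations I would invoke $[H^A,\parity^A]=0$, which holds by Theorem~\ref{mag_sym:thm:properties_mag_S-transform}(iii) under the hypothesis that $\parity$ is a symmetry of $H^0$, and transport it through $\BF$ to get $[H^A_{\mathrm{BF}},\parity^A_{\mathrm{BF}}]=0$. Reading this commutation relation fiberwise, exactly as in part~(iii) of the time-reversal case, yields $H^A(k)\,\Pi^A = \Pi^A\,H^A(-k)$; then the spectral-gap assumption makes $k\mapsto P_{\specrel}(k)$ analytic and the local spectra at $k$ and $-k$ coincide, so functional calculus applied to $1_{\specrel(k)}$ propagates the intertwining to the projections, giving $P_{\specrel}(k)\,\Pi^A = \Pi^A\,P_{\specrel}(-k)$.

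I expect the only genuinely new bookkeeping — and hence the main obstacle — to be verifying that the magnetic phase for $\parity^A$ really collapses to $\Pi^A = \e^{+\ii\int_{[0,\hat{y}]}(A(\cdot)+A(-\,\cdot))}$ once $R=-\mathbf{1}$ is substituted into $\hat{A} = R A(R^{-1}\,\cdot\,)-A = A(-\,\cdot)-A$, taking care that the line-integral $\int_{[0,\hat{x}]}\hat{A}$ combines with the conjugation $\parity(\,\cdot\,)\parity^{-1}$ of the remaining vector-potential terms to produce the symmetric combination $A(\cdot)+A(-\,\cdot)$ rather than the difference; this is a routine but sign-sensitive computation, and everything downstream is a faithful transcription of the time-reversal argument with the anti-linearity removed.
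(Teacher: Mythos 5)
Your strategy is the paper's own: the paper proves Proposition~\ref{mag_symm:prop:consequence_mag_time-reversal} in detail and then disposes of Proposition~\ref{mag_symm:prop:consequence_mag_parity} with the single remark that one proves it ``similarly'', so transcribing the time-reversal argument with $\parity$ in place of $C$ is exactly the intended route. But your transcription drops the one feature that makes it work: after the index flip $\gamma \mapsto -\gamma$, the reflection in the \emph{fiber} variable does not go away. Your own ``key algebraic fact'' reads $T_{\gamma}\,\parity = \parity\, T_{-\gamma}$, and carrying it through the sum gives $\bigl(\BF \parity^A \Psi\bigr)(k,y) = \e^{+\ii\int_{[0,y]}(A(\cdot)+A(-\,\cdot\,))}\,\bigl[(\BF\Psi)(-k)\bigr](-y)$, not $\e^{+\ii\int_{[0,y]}(A(\cdot)+A(-\,\cdot\,))}\,\bigl[(\BF\Psi)(-k)\bigr](y)$. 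Hence $\Pi^A$ must be the composition of the phase with the parity operator on $L^2(\WS)$, namely $(\Pi^A\psi)(y) = \e^{+\ii\int_{[0,y]}(A(\cdot)+A(-\,\cdot\,))}\,\psi(-y)$ (well defined since the Wigner-Seitz cell is centrally symmetric); this is the true analogue of $J^A$, which contains complex conjugation and not merely a phase. The paper's phrase ``the multiplication operator'' is loose on this point, but the reflection cannot be omitted: multiplication by a $\Gamma$-periodic phase maps $H^2_{-k}(\WS)$ into itself, whereas the domain of $H^A(k)$ is $H^2_k(\WS)$, so with a purely multiplicative $\Pi^A$ the left-hand side of $H^A(k)\,\Pi^A = \Pi^A\,H^A(-k)$ is not even defined on the domain of $H^A(-k)$; concretely, for $A=0$ and even $V$ your conclusion would read $H^0(k) = H^0(-k)$, which is false because the Bloch boundary conditions at $k$ and $-k$ differ. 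It is precisely the reflection (like conjugation in the time-reversal case) that exchanges $H^2_{-k}(\WS)$ and $H^2_{k}(\WS)$ and makes the fiberwise intertwining meaningful.

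A smaller point: the step you single out as ``the main obstacle'' is in fact immediate, and your displayed formula for it is wrong. With $R=-\mathbf{1}$ the matrix $R$ must act on the vector, so $\hat{A} = R\bigl(A(R^{-1}\,\cdot\,)\bigr) - A = -A(-\,\cdot\,) - A = -\bigl(A(\cdot)+A(-\,\cdot\,)\bigr)$, not $A(-\,\cdot\,)-A$; therefore $\e^{-\ii\int_{[0,\hat{x}]}\hat{A}} = \e^{+\ii\int_{[0,\hat{x}]}(A(\cdot)+A(-\,\cdot\,))}$ falls straight out of Definition~\ref{mag_sym:defn:mag_S-transform}, and no cancellation against ``the conjugation of the remaining vector-potential terms'' occurs or is needed. (Your appeal to the $\Gamma$-periodicity of $A(\cdot)+A(-\,\cdot\,)$ so that the phase commutes with $T_\gamma$ and fibers trivially is the same assertion the paper itself makes in part~(i) of the proof of Proposition~\ref{mag_symm:prop:consequence_mag_time-reversal}, so on that point you match the paper's level of rigor.)
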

%
% subsection consequences_of_magnetic_symmetries (end)
% section magnetic_symmetries (end)

\section{The Bloch bundle} % (fold)
\label{bloch_bundle}
By the Oka principle (Theorem~\ref{bloch_floquet:prop:continuous_family_bloch_exp_loc_wannier}), we are left to prove the existence of a family of $m$ \emph{continuous} functions $\psi_j : \BZ \longrightarrow L^2(\WS)$ which for fixed $k$ form an orthonormal basis of $\Hrel(k)$; our tool of choice is the theory of vector bundles. In that language, the central question can be rephrased as ``Is the Bloch bundle trivial?'' The explanation of this phrase will be the core of this section.

\subsection{A preliminary definition} % (fold)
\label{bloch_bundle:int_def}
Let $\specrel$ be a part of the spectrum of $H^A$ which satisfies  Assumption~\ref{bloch_floquet:assumption:spectral_gap}. Then the family of projections $P_{\specrel}(k)$ is well defined on $\BZ$ (\ie it is $\Gamma^{\ast}$-periodic) and it induces a $k$-dependent decomposition
\begin{align*}
	L^2(\WS) &\cong \ran P_{\specrel}(k) \oplus \bigl ( \ran P_{\specrel}(k) \bigr )^{\perp}
\end{align*}
into the complex vector space $\Hrel(k) := \ran P_{\specrel}(k)$ and its orthogonal complement. The relevant subspace $\Hrel(k)$ has fixed dimension $m = \abs{\mathcal{I}}$. We can think of these spaces as being glued onto the Brillouin zone $\BZ$ and the results is the 
\begin{defn}[Bloch bundle]\label{bloch_bundle:defn:bloch_bundle}
	Assume $H^A$ satisfies Assumption~\ref{bloch_floquet:assumption:HA} and the projection $P_{\specrel}$ satisfies the Gap Condition~\ref{bloch_floquet:assumption:spectral_gap}. Then the Bloch bundle $\bundle_{\specrel} = \bigl ( \bspace_{\specrel} , \BZ , \pi_{\specrel} \bigr )$ is the triple consisting of the topological space 
	\begin{align*}
		\bspace_{\specrel} := \bigsqcup_{k \in \BZ} \Hrel(k) = 
		\Bigl \{ (k , \varphi) \in \BZ \times L^2(\WS) \; \big \vert \; \varphi \in \Hrel(k) \Bigr \} 
	\end{align*}
	where $\bigsqcup$ denotes the disjoint union, the Brillouin zone $\BZ$ and the projection $\pi_{\specrel} : \bspace_{\specrel} \to \BZ$ defined by 
	\begin{align*}
		 \bspace_{\specrel} \ni (k,\varphi) \overset{\pi_{\specrel}}{\longmapsto} k \in \BZ
		. 
	\end{align*}
	The topology on $\bspace_{\specrel}$ is generated by the basis of neighborhoods given by 
	\begin{align}
		\mathcal{U}_{(k,\varphi)}(O,\eps) := \Bigl \{ (k',\varphi') \; \big \vert \; k' \in O , \; 
		\varphi' \in \Hrel(k') , \; 
		\norm{\varphi - \varphi'}_{L^2(\WS)} < \eps \Bigr \}
		\label{bloch_bundle:eqn:basis_neighborhoods}
	\end{align}
	where $(k , \varphi) \in \bspace_{\specrel}$ and $O \subset \BZ$ is an open neighborhood of $k$. 
\end{defn}
We will show in Section~\ref{bloch_bundle:primer_bundle_theory} that this really defines a vector bundle in the sense of Definition~\ref{defi_herm_vec_bun}. Roughly speaking, it remains to show local triviality and thus, up to now, $\bundle_{\specrel}$ is only a \emph{Hilbert bundle} \cite{Fell_Doran:basic_rep_theory_groups_algebras:1988}. 

Let us defer this technical detail for a moment and turn back to the main topic of this paper: since the topology on the fibers $\pi_{\specrel}^{-1}(\{ k \}) = \Hrel(k)$ is induced by that of $L^2(\WS)$, the maps $\{ \psi_1 , \ldots , \psi_m \}$ in Proposition~\ref{bloch_floquet:prop:continuous_family_bloch_exp_loc_wannier} can be interpreted as \emph{continuous sections} $\psi_j : \BZ \longrightarrow \bspace_{\specrel}$, \ie continuous maps which satisfy $\pi_{\specrel} \circ \psi_j = \id_{\BZ}$. Hence, in the jargon of vector bundles, Proposition~\ref{bloch_floquet:prop:continuous_family_bloch_exp_loc_wannier} can be restated as: the existence of an exponentially localized Wannier system is equivalent to the existence of $m$ non-vanishing continuous sections $\{ \psi_1 , \ldots , \psi_m \}$ such that for fixed $k$, they are an orthonormal basis of $\ran P_{\specrel}(k)$.

\subsection{Basic notions of vector bundle theory} % (fold)
\label{bloch_bundle:primer_bundle_theory}
In this section, we provide some basic definitions and fundamental facts from the theory of vector bundles. For more background, we refer to the monographs  \cite{milnor-stasheff-74,luke-mishchenko-98,hatcher-09} and the expert reader may skip ahead to Section~\ref{geom_analysis:symmetries_vector_bundles}. 

The fact that the Bloch bundle $\bundle_{\specrel}$ defines a vector bundle is actually the content of a Lemma \ref{lem:BBisVB} saying that it verifies the conditions enumerated in the following 
\begin{defn}[Hermitean vector bundle]\label{defi_herm_vec_bun}
	An rank $m$ hermitean vector bundle $\bundle$ over $X$ is a triple $\bundle := (\bspace,X,\pi)$ consisting of a continuous map $\pi : \bspace \longrightarrow X$ between the topological spaces $\bspace$ (the \emph{total space}) and $X$ (the \emph{base space}) such that 
	\begin{enumerate}[(i)]
		\item $X$ is a $d$-dimensional ($d < \infty$) CW-complex, 
		\item for all $x \in X$, the preimage $\bspace_x := \pi^{-1}(\{ x \})$ carries the structure of a complex vector space, 
		\item there exists an open cover $\{ O_{\alpha} \}$ of $X$ and a family of homeomorphisms 
		\begin{align*}
			h_{\alpha} : \pi^{-1}(O_{\alpha}) \longrightarrow O_{\alpha} \times \C^m
		\end{align*}
		mapping $\pi^{-1}(\{ x \})$ onto $\{ x \} \times \C^m$ for each $x \in O_{\alpha}$, and 
		\item the \emph{transition functions} $g_{\alpha,\beta}:=h_\alpha\circ h_\beta^{-1}$ defined on the overlaps $O_{\alpha,\beta}:=O_{\alpha}\cap O_{\beta}\neq\emptyset$ are continuous functions $g_{\alpha,\beta}:O_{\alpha,\beta} \longrightarrow U(m)$ where $U(m)$ denotes the group of the unitary matrices on $\C^m$.
	\end{enumerate}
\end{defn}
Compared to the standard definition, we have added point~(i): the restriction to base spaces which are also CW-complexes is necessary to have a rich classification theory of vector bundles.  In particular, up to homotopy equivalence, any compact manifold is a CW-complex \cite[Corollary~A.12]{hatcher-02}. 

The most general definition of rank $m$ complex vector bundle requires only properties (ii) and (iii): the former ensures that the fibers are all isomorphic to each other and endowed with the proper structure while the latter, the existence of a \emph{local trivialization}, tells us the fibers are glued together continuously. Thus, we can locally identify a complex rank $m$ vector bundle with $\pi^{-1}(O) \cong O \times \C^m$ where $O \subseteq X$. 

The last item in the definition, property (iv), fixes the \emph{structure group} of $\bundle$ to be $U(m) \subset \mathrm{GL}(m)$. This is always possible for vector bundle over a CW-complex (or more generally over a paracompact space) and it is equivalent to the existence of a scalar product on the fibers which varies in a continuous fashion \cite[Proposition 1.2]{hatcher-09}. This justifies the adjective hermitean.

One way to analyze the structure of hermitean vector bundles is to study maps between vector bundles which are compatible with the bundle structure: let $\bundle_j = \bigl ( \bspace_j , X , \pi_j \bigr )$, $j = 1 , 2$, be two hermitean vector bundles over the same base space $X$. An \emph{$X$-map} is a continuous function $f : \bspace_1 \longrightarrow \bspace_2$ such that the fiber restriction $f_x := f \vert_{\pi_1^{-1}(\{ x \})}$ defines a linear homomorphism between $\pi_1^{-1}(\{ x \})$ and $\pi_2^{-1}(\{ x \})$, \ie it is fiber preserving. The set of such maps is denoted by $\text{Hom}(\bundle_1,\bundle_2)$ while we use the short-hand $\text{End}(\bundle)$ for $\text{Hom}(\bundle,\bundle)$. If $f\in\text{Hom}(\bundle_1,\bundle_2)$ such that the restriction $f_x$ is an isomorphism for any $x\in X$ (which implies that $\bundle_1$ and $\bundle_2$ have same rank), then $f$ is automatically an homeomorphism between $\bspace_1$ and $\bspace_2$ and so it defines an $X$-isomorphism between $\bundle_1$ and $\bundle_2$ \cite[Lemma 1.1]{hatcher-09}. In this case we write $\bundle_1\simeq\bundle_2$. Since isomorphic vector bundles have the same rank we write $\text{Vec}^m_\C(X)$ for the set of the equivalence classes of isomorphic rank $m$ hermitian vector bundles over $X$. Classification theory of vector bundles concerns itself with the description of $\text{Vec}^m_{\C}(X)$ for different $m$ and $X$.

A particularly important element is the \emph{trivial} vector bundle $\epsilon^m := (X \times \C^m , X , \mathrm{proj}_1)$ of rank $m$ where the total space is just the cartesian product of base space and fiber, and the map $\mathrm{proj}_1 : X \times \C^m \longrightarrow X$ is the canonical projection onto the first argument. Thus, we call a rank $m$ vector bundle $\bundle$ topologically trivial if and only if it is isomorphic to $\epsilon^m$. 

The triviality of a hermitean vector bundle $\bundle$ can be characterized in terms of sections: $\bundle$ is trivial if and only if there exists a family of \emph{continuous} sections $\{ \psi_1 , \ldots , \psi_m \}$ such that for each $x$ the vectors $\{ \psi_1(x)  , \ldots , \psi_m(x) \}$ is an orthonormal basis for the fiber $\pi^{-1}(\{ x \}) = \bspace_x$ \cite{hatcher-09}. 
% subsection a_primer_in_the_theory_of_vector_bundles (end)

\subsection{Connecting triviality to localization} % (fold)
\label{bloch_bundle:construction}
This criterion for triviality of vector bundles allows us to rephrase Proposition~\ref{bloch_floquet:prop:continuous_family_bloch_exp_loc_wannier} in the following way: 
\begin{prop}\label{bloch_bundle:prop:main_question_bundle_language}
	There exists an exponentially localized Wannier system associated to $P_{\specrel}$ if and only if the associated Bloch bundle $\bundle_{\specrel}$ is topologically trivial. 
\end{prop}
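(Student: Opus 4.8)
The plan is to read this proposition as a dictionary that translates Proposition~\ref{bloch_floquet:prop:continuous_family_bloch_exp_loc_wannier} into the language of vector bundles; no genuinely new analytic or geometric input is required. I would first invoke Lemma~\ref{lem:BBisVB}, which guarantees that $\bundle_{\specrel}$ is an honest rank $m$ hermitian vector bundle (and not merely a Hilbert bundle), so that the section-theoretic triviality criterion recalled at the end of Section~\ref{bloch_bundle:primer_bundle_theory} applies: $\bundle_{\specrel}$ is topologically trivial if and only if there exist $m$ continuous sections $\psi_1,\ldots,\psi_m : \BZ \to \bspace_{\specrel}$ such that $\{\psi_1(k),\ldots,\psi_m(k)\}$ is an orthonormal basis of the fiber $\pi_{\specrel}^{-1}(\{k\}) = \Hrel(k)$ for each $k \in \BZ$.

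The central (and only nontrivial) step is to identify continuous sections of $\bundle_{\specrel}$ with continuous families of the type appearing in Proposition~\ref{bloch_floquet:prop:continuous_family_bloch_exp_loc_wannier}. A section is a continuous map $s : \BZ \to \bspace_{\specrel}$ with $\pi_{\specrel} \circ s = \id_{\BZ}$; writing $s(k) = (k,\psi(k))$ forces $\psi(k) \in \Hrel(k)$, so $s$ is determined by the map $k \mapsto \psi(k) \in L^2(\WS)$. I would then unwind the definition of the topology on $\bspace_{\specrel}$ generated by the neighborhoods $\mathcal{U}_{(k,\varphi)}(O,\eps)$ of equation~\eqref{bloch_bundle:eqn:basis_neighborhoods}: continuity of $s$ at a point $k_0$ means that for every $\eps > 0$ there is an open $O \ni k_0$ with $\norm{\psi(k_0) - \psi(k)}_{L^2(\WS)} < \eps$ for all $k \in O$, which is precisely continuity of $k \mapsto \psi(k)$ as an $L^2(\WS)$-valued map. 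Hence continuous sections of $\bundle_{\specrel}$ are in bijection with continuous maps $\psi : \BZ \to L^2(\WS)$ satisfying $\psi(k) \in \ran P_{\specrel}(k)$, and an $m$-tuple of such sections forms a fiberwise orthonormal basis exactly when the associated family $\{\psi_1,\ldots,\psi_m\}$ is a continuous rank $m$ orthonormal system in $\ran P_{\specrel}(k)$ for every $k$ (note that $m = \dim \Hrel(k)$, so ``orthonormal system of rank $m$'' and ``orthonormal basis'' coincide here).

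With this identification in hand the proof is a short chain of equivalences. By Proposition~\ref{bloch_floquet:prop:continuous_family_bloch_exp_loc_wannier}, an exponentially localized Wannier system associated to $P_{\specrel}$ exists if and only if such a continuous orthonormal family $\{\psi_1,\ldots,\psi_m\}$ exists; by the previous paragraph this is equivalent to the existence of $m$ continuous sections of $\bundle_{\specrel}$ forming a fiberwise orthonormal basis; and by the triviality criterion this is in turn equivalent to the topological triviality of $\bundle_{\specrel}$. I expect the main obstacle to be entirely contained in the topological bookkeeping of the second paragraph, namely verifying that $L^2(\WS)$-continuity of $\psi$ is the same as continuity of the associated section in the intrinsic topology of $\bspace_{\specrel}$; once that correspondence is pinned down, the remaining argument is purely formal.
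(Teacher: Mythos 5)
Your proposal is correct and follows essentially the same route as the paper: the paper likewise treats this proposition as a direct rephrasing of Proposition~\ref{bloch_floquet:prop:continuous_family_bloch_exp_loc_wannier}, obtained by identifying continuous $L^2(\WS)$-valued families with continuous sections of $\bspace_{\specrel}$ (using that the fiber topology is induced by $L^2(\WS)$), invoking the section-theoretic triviality criterion for hermitean vector bundles, and relying on Lemma~\ref{lem:BBisVB} to ensure $\bundle_{\specrel}$ is a genuine vector bundle rather than merely a Hilbert bundle. Your explicit unwinding of the neighborhood basis~\eqref{bloch_bundle:eqn:basis_neighborhoods} simply makes precise a step the paper states without detail.
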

The task of Section~\ref{geo_analysis} is to prove the triviality of the Bloch bundle using characteristic classes and to discuss the limitations of this method. 
\begin{remark}
	In the language of vector bundles, the Oka principle \cite[Satz~I, p.~268]{Grauert:analytische_Faserungen:1958} can be stated as \emph{``Two vector bundles over a Stein manifold with the same structure group and fiber are topologically equivalent if and only if they are analytically equivalent.''} Roughly speaking, analytic bundles are those for which a family of analytic transition functions $g_{\alpha,\beta}$ exists. 
	
	The Grauert theorem can then be used to show that triviality of the enlarged Bloch bundle $\bundle_{\specrel \, a} := \bigl ( \bspace_{\specrel \, a} , \BZ_a , \pi_{\specrel \, a})$ (defined analogously to $\bundle_{\specrel}$ over the enlarged Brillouin zone $\BZ_a$) is equivalent to the existence of \emph{analytic} sections (cf.~Section~\ref{bloch_floquet:the_oka_principle}). Since $\BZ$ is a deformation retract of $\BZ_a$, the Grauert theorem in conjunction with the Paley-Wiener theorem implies Proposition~\ref{bloch_bundle:prop:main_question_bundle_language}. 
\end{remark}
Although this is a very well-known fact, for completeness, we will prove that $\bundle_{\specrel}$ is a hermitean vector bundle in the sense of Definition~\ref{bloch_bundle:defn:bloch_bundle}: 
\begin{lem}\label{lem:BBisVB}
	The Bloch bundle $\bundle_{\specrel} = (\bspace_{\specrel},\BZ,\pi)$ is a hermitean vector bundle of rank $m = \abs{\mathcal{I}}$.
\end{lem}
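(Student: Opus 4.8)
The plan is to verify the four defining properties of Definition~\ref{defi_herm_vec_bun} in turn, with essentially all of the work concentrated in local triviality. Properties (i) and (ii) are immediate: the base space $\BZ \cong \T^d$ is a compact smooth manifold and hence admits a CW-structure, while each fiber $\pi^{-1}(\{k\}) = \Hrel(k) = \ran P_{\specrel}(k)$ is by construction a complex linear subspace of $L^2(\WS)$. That this subspace has the \emph{constant} dimension $m = \abs{\mathcal{I}}$, independently of $k$, was already recorded below equation~\eqref{bloch_floquet:eqn:P_k}: the local gap of Assumption~\ref{bloch_floquet:assumption:spectral_gap} makes $k \mapsto P_{\specrel}(k)$ analytic, and a continuous family of projections has locally constant rank, hence globally constant rank on the connected base $\BZ$. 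This simultaneously fixes the rank of $\bundle_{\specrel}$.

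For the local trivialization (iii) I would produce, around each point, a continuous orthonormal frame of the fibers and read off the chart from it. Fix $k_0 \in \BZ$ and abbreviate $P_0 := P_{\specrel}(k_0)$ and $P(k) := P_{\specrel}(k)$. By norm-continuity of $k \mapsto P(k)$ there is an open neighborhood $O_{k_0} \ni k_0$ on which $\norm{P(k) - P_0} < 1$, and on this neighborhood the Sz.-Nagy (Kato) intertwiner
$$
U(k) := \bigl ( \id - (P(k) - P_0)^2 \bigr )^{-1/2} \, \bigl ( P(k) \, P_0 + (\id - P(k))(\id - P_0) \bigr )
$$
is a well-defined unitary on $L^2(\WS)$ depending continuously on $k$ and satisfying $U(k) \, P_0 = P(k) \, U(k)$; in particular $U(k)$ maps $\Hrel(k_0)$ isometrically onto $\Hrel(k)$ and $U(k_0) = \id$. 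Choosing once and for all an orthonormal basis $\{\phi_1, \ldots, \phi_m\}$ of the $m$-dimensional space $\Hrel(k_0)$, the vectors $s_j(k) := U(k) \phi_j$ then form a continuous orthonormal basis of $\Hrel(k)$ for every $k \in O_{k_0}$.

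Given such a local frame, I would define the chart $h_{k_0} : \pi_{\specrel}^{-1}(O_{k_0}) \to O_{k_0} \times \C^m$ by $h_{k_0}(k,\varphi) := \bigl ( k , (\sscpro{s_1(k)}{\varphi}, \ldots, \sscpro{s_m(k)}{\varphi}) \bigr )$, with inverse $(k, c) \mapsto (k, \sum_{j=1}^m c_j \, s_j(k))$. Both maps are fiber-preserving and linear on fibers, and their continuity in the topology generated by the neighborhood basis~\eqref{bloch_bundle:eqn:basis_neighborhoods} reduces to the joint continuity of $(k,\varphi) \mapsto \sscpro{s_j(k)}{\varphi}$ and of $(k,c) \mapsto \sum_j c_j \, s_j(k)$, both of which follow from the continuity of $k \mapsto s_j(k)$ in $L^2(\WS)$. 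Hence $h_{k_0}$ is a homeomorphism carrying $\Hrel(k)$ onto $\{k\} \times \C^m$, and $\{O_{k_0}\}_{k_0 \in \BZ}$ is the required open cover.

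Finally, property (iv) is automatic from the orthonormality of the frames. On an overlap $O_{\alpha,\beta}$ with frames $\{s^\alpha_j\}$ and $\{s^\beta_j\}$, the transition function $g_{\alpha,\beta} = h_\alpha \circ h_\beta^{-1}$ has value at $k$ equal to the change-of-basis matrix with entries $\sscpro{s^\alpha_i(k)}{s^\beta_j(k)}$; since $\{s^\alpha_j(k)\}$ and $\{s^\beta_j(k)\}$ are two orthonormal bases of the \emph{same} fiber $\Hrel(k)$, this matrix lies in $U(m)$, and its continuity again follows from that of the frames. The one genuinely nontrivial input is the construction of the continuous local frame in the second step; everything else is bookkeeping. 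The hard part is therefore isolated in the regularity of $k \mapsto P_{\specrel}(k)$ --- exactly what Assumption~\ref{bloch_floquet:assumption:spectral_gap} supplies --- together with the standard verification that the Sz.-Nagy formula yields a unitary intertwiner once $\norm{P(k)-P_0} < 1$.
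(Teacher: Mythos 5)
Your proposal is correct and follows essentially the same route as the paper's own proof: both verify (i) and (ii) directly, then build local trivializations from the Sz.-Nagy/Kato intertwiner $U(k)$ applied to a fixed orthonormal basis of $\Hrel(k_0)$, and read off the charts and $U(m)$-valued transition functions from the resulting continuous frames. The only cosmetic difference is that you write the chart explicitly via the inner products $\sscpro{s_j(k)}{\varphi}$ and spell out why the transition matrices are unitary, details the paper leaves implicit.
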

\begin{proof}
	It is evident by construction that (i) and (ii) of Definition \ref{defi_herm_vec_bun} are satisfied: indeed $\BZ \simeq \T^d$ is a $d$-dimensional CW-complex  \cite[Example~2.3]{hatcher-02}
	and $\pi^{-1}(\{ k \}) = \Hrel(k) \simeq \C^m$. Then, to complete the proof, we need to construct local trivializations $h_{O}$ which satisfy (iii) and (iv). 

	Let $k_0 \in \BZ$. Then by continuity of $k \mapsto P_{\specrel}(k)$, there exists a neighborhood $O \subseteq \BZ$ of $k_0$ such that $\bnorm{P_{\specrel}(k) - P_{\specrel}(k_0)}_{L^2(\WS)} < 1$. In this neighborhood, we can define a local trivialization via the Nagy formula \cite{Kato:perturbation_theory:1995}: the operator 
	\begin{align*}
		U(k) := \Bigl ( 1 - \bigl ( P_{\specrel}(k) - P_{\specrel}(k_0) \bigr )^2 \Bigr )^{-\nicefrac{1}{2}} \, \Bigl ( P_{\specrel}(k) P_{\specrel}(k_0) + \bigl ( 1 - P_{\specrel}(k) \bigr ) \bigl ( 1 - P_{\specrel}(k_0) \bigr ) \Bigr )
	\end{align*}
	is a unitary mapping between $\Hrel(k_0)$ and $\Hrel(k)$, \ie $P_{\specrel}(k) = U(k) P_{\specrel}(k_0) U(k)^{-1}$. Moreover the map $k \mapsto U(k)$ is continuous in $O$. Now, let $\{ \chi_1 , \ldots , \chi_m \}$ be an orthonormal basis for  $\Hrel(k_0)$ and define $\chi_j(k) := U(k) \chi_j$ for any $k \in O$. Evidently $\{ \chi_1(k) , \ldots , \chi_m(k) \}$ defines an  orthonormal basis for  $\Hrel(k)$ and so any point $p \in \pi_{\specrel}^{-1}(O)$ can be written in a unique way as $\bigl ( k, \psi_c(k) \bigr )$ with $k \in O$, $c = (c_1 , \ldots , c_m)\in \C^m$ and $\psi_c(k) := \sum_{j=1}^m c_j \, \chi_j(k) \in \Hrel(k)$. The family of maps $\{ h_O \}$ defined through 
	\begin{align*}
		\pi_{\specrel}^{-1}(O) \ni \bigl ( k, \psi_c(k) \bigr ) \stackrel{h_{O}}{\longmapsto} (k,c) \in O \times \C^m
	\end{align*}
	for different open sets $O \subset \BZ$, are the candidates for local trivializations from which to build the transition functions. From its definition and the continuity of $k \mapsto U(k)$, it is clear that $h_O$ is continuous and one-to-one. From the explicit form of $h_O^{-1} : O \times \C^m \longrightarrow \pi_{\specrel}^{-1}(O)$
	\begin{align*}
		h^{-1} ( k , c ) \mapsto \bigl ( k , \psi_c(k) \bigr ) 
		, 
	\end{align*}
	one concludes that also the inverse is continuous. Hence, $h_O$ is a homeomorphism and transition functions constructed from different $h_{O}$s satisfy (iv). 
\end{proof}

\subsection{Magnetic symmetries and the geometry of the Bloch bundle} % (fold)
\label{geom_analysis:symmetries_vector_bundles}
In this section we explore the effect of magnetic time-reversal symmetry and magnetic parity on the global geometry of the Bloch bundle $\bundle_{\specrel}$. In fact, the existence of such a symmetry induces relations between fibers on conjugate points of the Bloch bundle as shown in Proposition~\ref{mag_symm:prop:consequence_mag_time-reversal}. In order to formulate and prove our result, we need two more notions from vector bundle theory.

The first is that of the \emph{pullback} of a vector bundle which will turn out to be a powerful tool in classification theory of vector bundles (\cf Section~\ref{geo_analysis}). Given a continuous map $f : Y \longrightarrow X$ between CW-complexes and a hermitean vector bundle $\bundle=(\bspace,X,\pi)$, one can construct another vector bundle $f^{\ast}(\bundle) = (\bspace',Y,\pi')$: we define a map $\tilde{f} : \bspace' \longrightarrow \bspace$ fiber-wise such that for any $y \in Y$, $\tilde{f}_y : {\pi'}^{-1}(\{ y \}) \longrightarrow \pi^{-1} \bigl ( \{ f(y) \} \bigr )$ is a vector space isomorphism. 
The vector bundle $f^{\ast}(\bundle)$ is called the \emph{pullback of $\bundle$ via $f$} and it is unique up to isomorphism \cite[Proposition 1.5]{hatcher-09}. Obviously, the pullback preserves the rank of the  vector bundle.

The second notion is that of the \emph{conjugate} vector bundle: given a rank $m$ hermitean vector bundle $\bundle=(\bspace,X,\pi)$, we can simply ``forget'' about the complex structure and think of each fiber as a real vector space of dimension $2m$. Thus, we obtain the \emph{underlying real vector bundle} of rank $2m$ denoted by $\R(\bundle)$. Observe that  the real vector bundle $\R(\bundle)$ and the original complex vector bundle $\bundle$ both have the same total space, base space and projection map. The conjugate vector bundle $\bundle^{\ast} = (\bspace^\ast,X,\pi)$ of $\bundle$ is defined in the following way: it has the same underlying real vector bundle, \ie $\R(\bundle^{\ast})=\R(\bundle)$, but with ``opposite'' complex structure in each fiber. This means $\bspace^{\ast}$ and $\bspace$ are identical as topological spaces, but in $\bspace^{\ast}$ the usual scalar multiplication is replaced with the conjugate scalar multiplication $\C \times \bspace^{\ast} \ni (\lambda,p) \mapsto \lambda^{\ast} p \in \bspace^{\ast}$ \cite[Chapter~14]{milnor-stasheff-74}. 

Armed with these definitions, we can ``rephrase'' Proposition~\ref{mag_symm:prop:consequence_mag_time-reversal} as: 
\begin{thm}\label{mag_symm:thm:consequence_mag_symmetry_bundle_geometry}
	Let $\bundle_{\specrel}=(\bspace_{\specrel},X,\pi)$ be the Bloch bundle associated to the relevant part $\specrel$ of the spectrum of $H^A$. Then the following statements hold true: 
	\begin{enumerate}[(i)]
	          \item We have $\bundle^{\ast}_{\specrel} \simeq f^{\ast}(\bundle_{\specrel})$ where  $f : \BZ \to \BZ$ is the continuous function defined by $f : k \mapsto -k$.
	          \item If in addition the potential $V$ is invariant under parity, then  $\bundle^\ast_{\specrel}\simeq \bundle_{\specrel}$.
	\end{enumerate}
\end{thm}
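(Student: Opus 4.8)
The plan is to exhibit both asserted isomorphisms as \emph{fibrewise} maps induced by the intertwining operators produced in Proposition~\ref{mag_symm:prop:consequence_mag_time-reversal} and Proposition~\ref{mag_symm:prop:consequence_mag_parity}, and then to appeal to the standard fact that a fibre-preserving $X$-map which restricts to a linear isomorphism on every fibre is automatically a bundle isomorphism \cite[Lemma~1.1]{hatcher-09}. First I would unwind the two bundles in (i). Since $\Gamma^*$ is invariant under negation, $f:k\mapsto -k$ descends to a continuous involution of $\BZ\simeq\T^d$, and by definition of the pullback the fibre of $f^*(\bundle_{\specrel})$ over $k$ is $\Hrel(-k)=\ran P_{\specrel}(-k)$ with its standard complex structure, while the fibre of $\bundle^*_{\specrel}$ over $k$ is $\Hrel(k)=\ran P_{\specrel}(k)$ carrying the \emph{conjugate} complex structure.

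The key observation is that the relation $P_{\specrel}(k)\,J^A=J^A\,P_{\specrel}(-k)$ from Proposition~\ref{mag_symm:prop:consequence_mag_time-reversal} shows that the fixed anti-unitary $J^A$ restricts, for each $k$, to an anti-linear bijection $J^A:\Hrel(-k)\to\Hrel(k)$. I would then define $\Phi:f^*(\bundle_{\specrel})\to\bundle^*_{\specrel}$ by $(k,\eta)\mapsto(k,J^A\eta)$ and verify the three requirements: it is fibre-preserving by inspection; it is continuous because $J^A$ is a single $k$-independent isometry, so $\snorm{J^A\eta-J^A\eta'}=\snorm{\eta-\eta'}$; and on each fibre it is a bijection intertwining the two complex structures, since $\Phi(\lambda\eta)=J^A(\lambda\eta)=\lambda^*J^A\eta$ equals $\lambda$ times $J^A\eta$ computed in the \emph{conjugate} target fibre. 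By \cite[Lemma~1.1]{hatcher-09} this proves $\bundle^*_{\specrel}\simeq f^*(\bundle_{\specrel})$.

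For (ii), invariance of $V$ under parity makes $\parity$ a symmetry of $H^0$, so Proposition~\ref{mag_symm:prop:consequence_mag_parity} furnishes the \emph{unitary} $\Pi^A$ satisfying $P_{\specrel}(k)\,\Pi^A=\Pi^A\,P_{\specrel}(-k)$. Repeating the construction with the linear intertwiner $\Pi^A$ in place of the anti-linear $J^A$ gives a bona fide linear isomorphism $f^*(\bundle_{\specrel})\simeq\bundle_{\specrel}$, and composing it with the isomorphism of part (i) yields $\bundle^*_{\specrel}\simeq f^*(\bundle_{\specrel})\simeq\bundle_{\specrel}$, as claimed.

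I expect the only genuinely delicate point to be the bookkeeping of complex structures: one must check that the anti-linearity of $J^A$ matches \emph{exactly} the conjugate structure on the fibres of $\bundle^*_{\specrel}$ together with the standard structure on the pulled-back fibres, so that $\Phi$ is $\C$-linear in the intended sense rather than merely $\R$-linear. Once this is pinned down, continuity, fibre-preservation and bijectivity are immediate, precisely because $J^A$ and $\Pi^A$ are fixed (anti-)unitary operators that do not depend on $k$.
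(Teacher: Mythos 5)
Your proof is correct and takes essentially the same approach as the paper: the paper also realizes $f^{\ast}(\bundle_{\specrel})$ as the bundle with fibres $\Hrel(-k)$, uses the intertwining relation $P_{\specrel}(k)\,J^A = J^A\,P_{\specrel}(-k)$ to view $J^A$ fibrewise as a $\C$-linear isomorphism once the conjugate scalar multiplication is taken into account, and concludes via the fibrewise-isomorphism criterion, with part (ii) obtained from $\Pi^A$ and composition with (i) exactly as you propose. The only cosmetic difference is the direction of the map: the paper's $\tilde{J}^A : \bundle^{\ast}_{\specrel} \to f^{\ast}(\bundle_{\specrel})$ is the inverse of your $\Phi$, which is the same map since $(J^A)^2 = \id$.
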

\begin{proof}
	\begin{enumerate}[(i)]
		\item We define the bundle $\bundle'_{\specrel} = \bigl ( \bspace_{\specrel}' , \BZ , \pi_{\specrel}' \bigr )$ with total space 
		\begin{align*}
			\bspace'_{\specrel} := \bigsqcup_{k \in \BZ} \Hrel(-k) = \Bigl \{ (k,\varphi) \in \BZ \times L^2(\WS) \; \big \vert \; k \in \BZ , \; \varphi \in \ran P_{\specrel}(-k) \Bigr \}
			.
		\end{align*}
		whose topology is generated by the family of neighborhoods given in equation~\eqref{bloch_bundle:eqn:basis_neighborhoods}
		as well as the continuous projection $\pi_{\specrel}' : (k,\varphi) \mapsto k$. Then for any $k \in \BZ$ and $\varphi \in \Hrel(-k)$, the map $(-k,\varphi) \mapsto (k,\varphi)$ defines a continuous function $\tilde{f}: \bspace'_{\specrel} \longrightarrow \bspace_{\specrel}$ such that $\pi_{\specrel} \circ \tilde{f} = f \circ \pi_{\specrel}'$. Moreover, $\tilde{f}$ restricts to the identity fiberwise, $\pi'^{-1}(\{ k \}) = \Hrel(-k) = \pi^{-1} \bigl ( \{ f(k) \} \bigr )$. This proves that the triple $\bundle_{\specrel}' = (\bspace'_{\specrel},\BZ,\pi_{\specrel}')$ is the pullback of the Bloch bundle by the function $f$, \ie $f^{\ast}(\bundle_{\specrel}) = \bundle'_{\specrel}$. 
		
		In (iii) of Proposition \ref{mag_symm:prop:consequence_mag_time-reversal}, we have proven $J^A \, P_{\specrel}(k) = P_{\specrel}(-k) \, J^A$ which implies that $J^A$ defines a bijection between $\Hrel(k)$ and $\Hrel(-k)$. Now let $\Hrel^{\ast}(k)$ be the Hilbert space $\Hrel(k)$ but endowed with the conjugate scalar multiplication $\lambda \ast \varphi := \lambda^{\ast} \varphi$. Then with abuse of notation, we can see the map $J^A$ as a \emph{linear} isomorphism $J : \Hrel^{\ast}(k) \longrightarrow \Hrel(-k)$. Observing that the collection of the spaces $\Hrel^{\ast}(k)$ defines the conjugate bundle $\bundle^{\ast}_{\specrel}$, one can define the map $\tilde{J}^A$ between $\bundle^{\ast}_{\specrel}$ and $f^{\ast}(\bundle_{\specrel})$ defined by $\tilde{J}^A : (k,\varphi) \mapsto \bigl ( k, J^A(\varphi) \bigr ) $. Obviously, $\tilde{J}^A$ is continuous and fiber preserving. Moreover, it restricts to the isomorphism $J^A$ to any fiber, hence $\tilde{J}^A$ defines an isomorphism between vector bundles. This proves (i).
		\item If $V$ is invariant under parity, then $H^A$ commutes with magnetic parity $\parity^A$ and from Proposition~\ref{mag_symm:prop:consequence_mag_parity} it follows that $\Pi^A$ defines a \emph{linear} isomorphism between $\Hrel(k)$ and $\Hrel(-k)$. This implies $\bundle_{\specrel} \simeq f^{\ast}(\bundle_{\specrel})$. Then (ii) follows from (i).
	\end{enumerate}
\end{proof}
% 
% section the_bloch_vector_bundle (end)

\section{Conditions  for the triviality of the Bloch bundle}
\label{geo_analysis}
In this section, we show how and when characteristic classes can be used to prove the triviality of the Bloch bundle. By Proposition~\ref{bloch_bundle:prop:main_question_bundle_language}, the (topological) triviality of the Bloch bundle is equivalent to the existence of an exponentially localized Wannier system. Whether or not time-reversal symmetry suffices to prove triviality depends crucially on the dimension $d$ of the Brillouin zone $\BZ$ and the rank $m$ of the Bloch bundle $\bundle_{\specrel}$ (which coincides with the geometric rank of the Wannier system). 

The cases $d = 1$ is special and will be discussed first. Indeed if $d = 1$, the Brillouin zone $\mathcal{B}^1$ is topologically equivalent to the unit circle $\T^1 = S^1$, and a classical result from classification theory of vector bundles states that independently of $m$, vector bundles over $\T^1$ are automatically trivial, \cf equation~\eqref{geom_analysis:eqn:homotopy_classes_T1} and related comments. This is the abstract explanation for why Kohn's result \cite{Kohn:analytic_properties_Bloch_Wannier:1959} works, but cannot be generalized to higher dimension. 

The single band case, $m = 1$, is also special: here, (magnetic) time-reversal symmetry ensures the triviality in any dimension $d$. This can be traced back to the fact that line bundles are uniquely characterized by a single topological invariant, namely the first Chern class. By Theorem~\ref{geo_analysis:thm:triviality_odd_c_j}, the first Chern class vanishes in the presence of time-reversal symmetry. To our knowledge, the case $m = 1$ is the only one where a constructive proof can be used to show triviality of $\bundle_{\specrel}$ (\cf Section~\ref{geom_analysis:m_equal_1}). 

In all remaining cases, one has to check whether the \emph{stable rank condition} is met: if $d \leqslant 2m$, the vanishing of all Chern classes implies the triviality of the vector bundle by Peterson's classification theorem. Explicit counter examples (\eg the one presented in Section~\ref{geom_analysis:limits_characteristic_classes}) show that $d \leqslant 2m$ is not a technical, but an \emph{essential} condition. Indeed, vector bundles with unstable rank ($d > 2m$) can be non-trivial even if all Chern classes vanish. This means studying characteristic classes of vector bundles with unstable rank is insufficient to ensure triviality, and other methods need to be employed.

\subsection{The case $m = 1$: a constructive proof} % (fold)
\label{geom_analysis:m_equal_1}
We start analyzing Bloch bundles $\bundle_{\specrel}$ of rank $1$ which means that the relevant part of the spectrum $\specrel$ is covered by a single isolated energy band. In this case the triviality of  $\bundle_{\specrel}$ can be proven constructively by adapting the arguments of Helffer and Sjöstrand \cite{Helffer_Sjoestrand:mag_Schroedinger_equation:1989} to our case where a zero flux magnetic fields is present. 

\begin{prop}\label{geom_analysis:prop:triviality_m_1}
    Assume $P_{\specrel}$ has geometric rank $1$. Then for any $d$, the Bloch bundle $\bundle_{\specrel}$ is trivial. 
\end{prop}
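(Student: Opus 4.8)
The plan is to exploit the fact that, for rank one, everything is controlled by a single cohomological invariant: a line bundle over a CW-complex $X$ is classified up to isomorphism by its first Chern class $c_1 \in H^2(X;\Z)$, and in particular it is topologically trivial if and only if $c_1 = 0$. So for $m = 1$ I would reduce the proposition to showing that $c_1(\bundle_{\specrel}) = 0$ in $H^2(\BZ;\Z) = H^2(\T^d;\Z)$.

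The only geometric input I need is the consequence of magnetic time-reversal symmetry already established in Theorem~\ref{mag_symm:thm:consequence_mag_symmetry_bundle_geometry}~(i), namely $\bundle^{\ast}_{\specrel} \simeq f^{\ast}(\bundle_{\specrel})$ with $f : k \mapsto -k$. First I would take first Chern classes of both sides and invoke the two standard naturality properties, $c_1(\bundle^{\ast}) = - c_1(\bundle)$ for the conjugate (equivalently dual) line bundle and $c_1(f^{\ast}\bundle) = f^{\ast}c_1(\bundle)$ for the pullback. This turns the bundle isomorphism into the single cohomological identity $-c_1(\bundle_{\specrel}) = f^{\ast}c_1(\bundle_{\specrel})$ in $H^2(\T^d;\Z)$.

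The next step is to compute how $f^{\ast}$ acts on $H^2(\T^d;\Z)$. Since $H^{\ast}(\T^d;\Z)$ is the exterior algebra on $H^1(\T^d;\Z) \cong \Z^d$ and the map $k \mapsto -k$ acts as $-\mathrm{id}$ on each circle factor and hence as $-\mathrm{id}$ on $H^1$, it acts on $H^2 \cong \Lambda^2 H^1$ as $(-1)^2 = +1$, i.e.\ $f^{\ast} = \mathrm{id}$ on $H^2(\T^d;\Z)$. Feeding this back yields $-c_1(\bundle_{\specrel}) = c_1(\bundle_{\specrel})$, that is $2\,c_1(\bundle_{\specrel}) = 0$. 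Because $H^2(\T^d;\Z) \cong \Z^{\binom{d}{2}}$ is torsion-free, this forces $c_1(\bundle_{\specrel}) = 0$, so the line bundle is trivial for every $d$ (the case $d = 1$ being immediate from $H^2(\T^1;\Z) = 0$). In this cohomological route there is no real obstacle; the only points demanding care are the sign convention in $c_1(\bundle^{\ast}) = -c_1(\bundle)$ and the computation of $f^{\ast}$ on $H^2$.

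The genuinely delicate alternative, which matches the constructive spirit of Helffer and Sjöstrand \cite{Helffer_Sjoestrand:mag_Schroedinger_equation:1989} and which the section title seems to call for, would instead produce an explicit global unit section of $\bundle_{\specrel}$ by extending a local section over the skeleta of $\T^d$: the extension is automatic over the $1$-skeleton, and the sole obstruction appears on the $2$-cells as an obstruction cochain representing $c_1$. The antilinear fibered map $J^A$ of Proposition~\ref{mag_symm:prop:consequence_mag_time-reversal}, with $(J^A)^2 = \id$ and $P_{\specrel}(k)\,J^A = J^A\,P_{\specrel}(-k)$, would then be used to pair each $2$-cell with its image under $f$ and cancel the corresponding winding numbers. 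The hard part of that approach is precisely the global consistency of the phase choice across overlapping charts, which is exactly the content that the torsion-freeness of $H^2(\T^d;\Z)$ captures abstractly in the first argument.
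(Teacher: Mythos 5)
Your proof is correct, but it takes a genuinely different route from the paper. The paper's proof of this proposition is \emph{constructive}, in the spirit of Helffer--Sj\"{o}strand: it builds a global continuous unit section of $\bundle_{\specrel}$ by induction over the dimension, at each step extending a local section from $\mathcal{B}^j \times [0,+\nicefrac{1}{2}]^{d-j}$ to $\mathcal{B}^{j+1} \times [0,+\nicefrac{1}{2}]^{d-j-1}$ via $k \mapsto J^A(\psi^{(j)}(-k))$ and then repairing the periodicity mismatch at $k_{j+1} = \pm\nicefrac{1}{2}$ with an explicit phase factor $\e^{-\ii\theta_{j+1}(\underline{k})k_{j+1}}$, whose well-definedness (continuity and evenness of $\theta_{j+1}$ modulo $2\pi\Z$) is exactly the delicate global-consistency issue you flag at the end. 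Your argument is instead the cohomological one that the paper itself gestures at in the overview of Section~\ref{geo_analysis} but does not use as the proof: it is essentially the $j=1$ case of Theorem~\ref{geo_analysis:thm:triviality_odd_c_j} (whose proof runs the same pairing of $c_1(\bundle^{\ast}) = -c_1(\bundle)$ against $\bundle^{\ast}_{\specrel}\simeq f^{\ast}(\bundle_{\specrel})$, computing $f^{\ast}$ via the torus cohomology/homology and torsion-freeness), combined with the classification of line bundles by $c_1$. There is no circularity in this: Theorem~\ref{mag_symm:thm:consequence_mag_symmetry_bundle_geometry} and Theorem~\ref{geo_analysis:thm:triviality_odd_c_j} are established independently of Proposition~\ref{geom_analysis:prop:triviality_m_1}. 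Note that the classification fact you invoke, $\mathrm{Vec}^1_{\C}(X)\simeq H^2(X,\Z)$ via $c_1$, is genuinely needed and lies outside Peterson's theorem (the stable rank condition $d \leqslant 2m$ fails for $m=1$, $d>2$); it is standard but unproven in the paper, which only cites it as a ``fortunate coincidence.'' As for what each approach buys: yours is shorter and slots cleanly into the general characteristic-class machinery of Section~\ref{geo_analysis}, while the paper's induction produces an \emph{explicit} section --- hence, in principle, an explicit exponentially localized Wannier function --- and the authors stress that $m=1$ is the only case where such a constructive proof is known.
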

\begin{proof}
	Due to Proposition~\ref{bloch_floquet:prop:continuous_family_bloch_exp_loc_wannier} it suffices to show the existence of a \emph{continuous} $L^2(\WS)$-valued function $\psi : \BZ \longrightarrow \bspace_{\specrel}$ such that $\psi(k) \in \ran P_{\specrel}(k)$ and $\norm{\psi(k)}_{L^2(\WS)} = 1$ hold for all $k \in \BZ$. We will construct such a function inductively over the dimension $d$. For convenience, we  identify continuous functions $f$ on $\BZ$ with functions on $[-\nicefrac{1}{2},+\nicefrac{1}{2}]^d$ which satifsy $f \vert_{k_j = +\nicefrac{1}{2}} = f \vert_{k_j = -\nicefrac{1}{2}}$ for any $j=1,\ldots,d$. 
   
	   Since $[0,+\nicefrac{1}{2}]^d \subset \BZ$ is contractible, we can find a local section $\psi^{(0)} : [0,+\nicefrac{1}{2}]^d \to \bspace_{\specrel}$.  Moreover, we can assume without loss of generality that $\e^{\ii  \int_{[0,\hat{y}]} A} \psi^{(0)}(0)$ is real, namely
	   \begin{align*}
			\psi^{(0)}(0) =  J^A  \bigl ( \psi^{(0)}(0) \bigr ) .
	   \end{align*}
	   Then, the function 
	   \begin{align*}
	       \phi^{(1)}(k) := 
	       \begin{cases}
	           \psi^{(0)}(k) & k \in [0,+\nicefrac{1}{2}]^d \\
	           J^A\bigl (  \psi^{(0)}(-k) \bigr ) & k \in [-\nicefrac{1}{2},0) \times \{ 0 \}^{d-1} \\
	       \end{cases}
	   \end{align*}
	   extends $\psi^{(0)}$ continuously to $[ -\nicefrac{1}{2} , +\nicefrac{1}{2} ] \times [ 0 , +\nicefrac{1}{2} ]^{d-1}$. As $P_{\specrel}(-k) J^A = J^A P_{\specrel}(k)$, we have $\phi^{(1)}(k) \in \ran P_{\specrel}(k)$ on the domain of $\phi^{(1)}$. Observing that the $\Gamma^*$-periodicity of $k \mapsto P_{\specrel}(k)$ implies that the subspaces $\ran P_{\specrel}(+\nicefrac{1}{2},0)$ and $\ran P_{\specrel}(-\nicefrac{1}{2},0)$ coincide, the functions $\phi^{(1)}(+\nicefrac{1}{2},0)$ and $\phi^{(1)}(-\nicefrac{1}{2},0)$ belong to the same $1$-dimensional subspace and differ only by a phase, $\phi^{(1)}(+\nicefrac{1}{2},0)= \e^{\ii \theta_1}\ \phi^{(1)}(-\nicefrac{1}{2},0)$. 
	   Hence, the function 
	   \begin{align*}
	       \psi^{(1)}(k) := \e^{- \ii \theta_1 k_1} \, \phi^{(1)}(k) 
	   \end{align*}
	   is evidently continuous on  $k \in [-\nicefrac{1}{2},+\nicefrac{1}{2}] \times [0,+\nicefrac{1}{2}]^{d-1}$ and agrees at $k = (\pm \nicefrac{1}{2} , 0)$. Hence, we have extended the local section $\psi^{(0)} : [0,+\nicefrac{1}{2}]^d \to \bspace_{\specrel}$ to a local section $\psi^{(1)} : \mathcal{B}^1 \times [0,+\nicefrac{1}{2}]^{d-1} \to \bspace_{\specrel}$. From the definition of $\phi^{(0)}$ and a simple computation, it follows that 
	\begin{align*}
		\psi^{(1)}(k_1,0) =  J^A  \bigl (\psi^{(1)}(-k_1,0) \bigr ) 
		.
	\end{align*}
	Now let us assume we have found a continuous map $\psi^{(j)} : \mathcal{B}^j \times [0,+\nicefrac{1}{2}]^{d-j} \to \bspace_{\specrel}$ for $2 \leqslant j \leqslant d-1$ which satisfies 
	\begin{align*}
		\psi^{(j)}(\underline{k},0) =  J^A  \bigl (\psi^{(j)}(-\underline{k},0) \bigr )
	\end{align*}
	for any $\underline{k}:=(k_1,\ldots,k_j)\in\mathcal{B}^j$. As before, we start defining a new function 
	   \begin{align*}
	       \phi^{(j+1)}(k) := 
	       \begin{cases}
	           \psi^{(j)}(k) & k \in  \mathcal{B}^j\times [0,+\nicefrac{1}{2}]^{d-j} \\
	           J^A\bigl (\psi^{(j)}(-k) \bigr ) & k \in \mathcal{B}^j\times [-\nicefrac{1}{2},0) \times \{ 0 \}^{d-j-1} \\
	       \end{cases}
	   \end{align*}
	   which extends  $\psi^{(j)}$ continuously to $ \mathcal{B}^j \times [-\nicefrac{1}{2},+\nicefrac{1}{2}] \times [0,+\nicefrac{1}{2}]^{d-j}$. As before, using $\mathrm{dim} \, \ran P_{\specrel}(k) = 1$, the $\Gamma^*$-periodicity of $P_{\specrel}(k)$ and $J^A P_{\specrel}(k) = P_{\specrel}(-k) J^A$, one deduces 
	   \begin{align*}
	       \phi^{(j+1)}(\underline{k},+\nicefrac{1}{2},0) &= \e^{ \ii \theta_{j+1}(\underline{k})} \, \phi^{(j+1)}(\underline{k} , - \nicefrac{1}{2},0) .
	   \end{align*}
	Moreover, the assumptions on $\psi^{(j)}$ force $\theta_{j+1}:\R^{j}\to \R$  to be continuous and to satisfy $\beta_i(\underline{k}):=\theta_{j+1}(k_1,\ldots,k_i-1,\ldots,k_j)-\theta_{j+1}(k_1,\ldots,k_i,\ldots,k_j)\in2\pi\Z$ for all $i=1,\ldots,j$. A straightforward computation shows
	\begin{align*}
		\e^{- \ii \theta_{j+1}(\underline{k})} \ \phi^{(j+1)}(\underline{k},+\nicefrac{1}{2},0) &= \phi^{(j+1)}(\underline{k} , - \nicefrac{1}{2},0) = J^A\bigl (\phi^{(j+1)}(-\underline{k}, +\nicefrac{1}{2},0) \bigr )
		\\
		&
		= \e^{ -\ii \theta_{j+1}(-\underline{k})} \, J^A\bigl ( \phi^{(j+1)}(-\underline{k} , - \nicefrac{1}{2},0)         \bigr )    \\
		&
		= \e^{ -\ii \theta_{j+1}(-\underline{k})} \,( J^A)^2\bigl ( \psi^{(j)}(\underline{k} ,  \nicefrac{1}{2},0)         \bigr ) \\
		&
		= \e^{- \ii \theta_{j+1}(-\underline{k})} \ \phi^{(j+1)}(\underline{k},+\nicefrac{1}{2},0)
		,
	\end{align*}
	and thus $\beta_0(\underline{k}):=\theta_{j+1}(\underline{k})-\theta_{j+1}(-\underline{k})\in 2\pi\Z$. Finally, the continuity of $\beta_0$ and $\beta(0)=0$ prove that $\e^{-\ii \theta_{j+1}}$ is an even function, \ie $\theta_{j+1}(\underline{k})=\theta_{j+1}(-\underline{k}) + 2 \pi \Z$. Similarly, one finds that $\beta_i$ is continuous and $\beta_i(0,\ldots,k_i=\nicefrac{1}{2},\ldots,0)=0$ which implies that $\theta_{j+1}(k_1,\ldots,k_i-1,\ldots,k_j)=\theta_{j+1}(k_1,\ldots,k_i,\ldots,k_j) + 2 \pi \Z$ for all $i=1,\ldots,j$. In other words, $\e^{-\ii \theta_{j+1}} : \mathcal{B}^j \to \R$ is continuous. 
	
	The function 
	\begin{align*}
		\psi^{(j+1)}(k) := \e^{- \ii \theta_{j+1}(\underline{k}) k_{j+1}} \, \phi^{(j+1)}(k) 
	\end{align*}
	is evidently continuous on  $k=(\underline{k},k_{j+1},k') \in \mathcal{B}^j\times [-\nicefrac{1}{2},+\nicefrac{1}{2}] \times [0,+\nicefrac{1}{2}]^{d-j-1}$ and agrees at $k = (\underline{k},\pm \nicefrac{1}{2} , 0)$. Hence, $\psi^{(j+1)}$ defines a continuous map on $\mathcal{B}^{j+1}\times [0,+\nicefrac{1}{2}]^{d-j-1}$. Moreover, from the definition of $\phi^{(j+1)}$ and the parity of $\theta_{j+1}$, it follows after a simple computation that
	$$
	\psi^{(j+1)}(\underline{k},k_{j+1},0) =  J^A  \bigl (\psi^{(j+1)}(-\underline{k},-k_{j+1},0) \bigr ) .
	$$    
	This concludes the proof since after $d$ inductive steps one has obtained a continuous section $\psi : \BZ \longrightarrow \bspace_{\specrel}$.
\end{proof}
%
% subsection the_case_m_1_ (end)

\subsection{Chern classes and conditions for triviality} % (fold)
\label{geom_analysis:characteristic_classes_and_triviality}
Classifying the set of non-isomorphic rank $m$ hermitian vector bundles over a given CW-complex $X$ is extremely difficult and usually beyond reach. One common strategy to extract partial information on the geometry is to study topological invariants called \emph{characteristic classes} \cite{milnor-stasheff-74}. For hermitean vector bundles, the relevant characteristic classes are called \emph{Chern classes} (see below for details). Unfortunately, Chern classes do not characterize bundles completely, unless $m = 1$. However, we are less ambitious here, all we would like to know is whether Chern classes can be used to determine whether or not a bundle $\bundle$ is trivial, \ie whether $\bundle \simeq \epsilon^m \in \vecm(X)$. A classical result by Peterson \cite{peterson-59} tells us that knowing all Chern classes suffices to distinguish between trivial and non-trivial bundles, provided that they have stable rank: 
\begin{thm}[\cite{peterson-59}]\label{teo:pet-59}
	Let $\bundle=(\bspace,X,\pi)$ be a rank $m$ hermitean vector bundle over the CW-complex $X$ of finite dimension $d$ which satisfies the following conditions: 
	\begin{enumerate}[(i)]
		\item $\bundle$ is stable, namely $d \leqslant 2m$. 
		\item The only torsion in $H^{2j}(X,\Z)$ is relatively prime to $(j-1)!$. 
	\end{enumerate}
	Then the vector bundle $\bundle$ is trivial if and only if $c_j(\bundle) = 0$ for all $j = 1 , \ldots , \lfloor \nicefrac{d}{2} \rfloor$ where $c_j(\bundle) \in H^{2j}(X,\Z)$ denotes the $j$-th Chern class of $\bundle$.
\end{thm}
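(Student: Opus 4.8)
The plan is to read this as a statement in obstruction theory, following Peterson's original argument. The forward implication is immediate: if $\bundle \simeq \epsilon^m$, then its classifying map $X \to BU(m)$ is null-homotopic, so every Chern class $c_j \in H^{2j}(X,\Z)$, being a pullback of the corresponding universal class, vanishes for $j \geq 1$. The substance lies in the converse, so I assume $c_j(\bundle) = 0$ for all $1 \leq j \leq \lfloor \nicefrac{d}{2} \rfloor$ and aim to build a global trivialization.

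First I would reduce triviality to a section problem: by the section criterion recalled in Section~\ref{bloch_bundle:primer_bundle_theory}, $\bundle$ is trivial precisely when its unitary frame bundle $\mathrm{Fr}(\bundle)$, a principal $U(m)$-bundle over $X$, admits a continuous section, i.e.\ a global orthonormal frame. I would construct such a section inductively over the skeleta $X^{(n)}$ of the CW-complex $X$ by the standard obstruction theory for sections of a fibration with fiber $U(m)$. The obstruction to extending a section from $X^{(n)}$ to $X^{(n+1)}$ is a class $o_{n+1} \in H^{n+1}(X; \pi_n(U(m)))$. Here the stable rank hypothesis $d \leq 2m$ enters decisively: for every $n \leq d-1 \leq 2m-1$ the homotopy groups lie in the Bott-stable range, where $\pi_{2j}(U(m)) = 0$ and $\pi_{2j-1}(U(m)) \cong \Z$. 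Consequently all odd-degree obstructions vanish automatically, so sections always extend across odd skeleta, and the only possible obstacles are the even classes $o_{2j} \in H^{2j}(X;\Z)$ with $2j \leq d$, that is, $j \leq \lfloor \nicefrac{d}{2} \rfloor$.

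The heart of the matter, and the step I expect to be the main obstacle, is to identify these obstructions with the Chern classes up to the factorial factor. The fiber $U(m)$ is not highly connected, having cohomology in every odd degree $1, 3, \ldots, 2m-1$, so the $o_{2j}$ are not transgressions of a single fundamental class but arise as iterated (secondary, tertiary, \ldots) obstructions once the lower ones are killed. The factorial is forced by the classical Chern-character expansion $\mathrm{ch}_j = \pm c_j / (j-1)! + (\text{decomposables in } c_1, \ldots, c_{j-1})$ together with Bott's computation that the generator of $\pi_{2j}(BU) \cong \pi_{2j-1}(U(m)) \cong \Z$, tested on $S^{2j}$, has integral Chern character $\mathrm{ch}_j$ equal to a generator, equivalently $c_j = \pm(j-1)!$ on the generating line of bundles. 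Assuming inductively that $c_1 = \cdots = c_{j-1} = 0$, the decomposable terms drop out and one is left with $c_j(\bundle) = \pm(j-1)!\, o_{2j}$ modulo the indeterminacy coming from the choice of section over the lower skeleta. Making this precise, in particular honestly tracking the indeterminacy, which is governed by difference cochains valued in $\pi_{2j-1}(U(m))$, is the technical crux.

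Finally I would close the induction using hypothesis (ii). Multiplication by $(j-1)!$ is injective on $H^{2j}(X,\Z)$: it is injective on the free part automatically, and on the torsion subgroup precisely because that subgroup has order relatively prime to $(j-1)!$. Hence the relation $c_j = \pm(j-1)!\, o_{2j}$ together with $c_j = 0$ forces $o_{2j} = 0$ for a suitable choice of extension, so the section extends over $X^{(2j)}$ and then freely over $X^{(2j+1)}$. Iterating for $j = 1, 2, \ldots, \lfloor \nicefrac{d}{2} \rfloor$, and noting that for $2j > d$ there are no cells and hence no further obstructions, produces a section of $\mathrm{Fr}(\bundle)$ over all of $X = X^{(d)}$, and therefore a trivialization $\bundle \simeq \epsilon^m$.
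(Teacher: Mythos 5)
The first thing to note is that the paper contains no proof of this statement at all: Theorem~\ref{teo:pet-59} is imported from Peterson's 1959 paper, and the authors say explicitly that its proof ``is beyond the scope of this work'', deferring to \cite{peterson-59}. So there is no in-paper argument to compare against; your proposal has to be judged against Peterson's original argument. Measured that way, your outline follows the genuine route: the easy forward direction via naturality of Chern classes under the (null-homotopic) classifying map; reduction of triviality to a global section of the frame bundle; cell-by-cell obstruction theory with obstructions $o_{n+1} \in H^{n+1}\bigl( X ; \pi_n(U(m)) \bigr)$ (untwisted coefficients are legitimate here because the structure group is connected); the stable-range computation $\pi_{2j}(U(m)) = 0$, $\pi_{2j-1}(U(m)) \cong \Z$, which is exactly where hypothesis~(i) enters; Bott integrality to tie the surviving even obstructions to $c_j / (j-1)!$; and hypothesis~(ii) to make multiplication by $(j-1)!$ injective on $H^{2j}(X,\Z)$, including on torsion. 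These are the correct ingredients, deployed in the correct order.

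However, as a proof your text is incomplete at precisely the point you flag yourself: the identity $c_j(\bundle) = \pm (j-1)! \, o_{2j}$ modulo decomposables, together with honest control of the indeterminacy of the higher obstructions, is asserted but never established. This is not a routine verification. The class $o_{2j}$ is only defined relative to a choice of section over $X^{(2j-1)}$, and different choices move it by difference classes; one must show that the relation to $c_j$ holds in a way that lets the vanishing of $c_j$ force the vanishing of $o_{2j}$ \emph{for some admissible choice} of extension at every earlier stage. The clean way to do this is to work with the Postnikov factorization of $BU(m)$ through dimension $d$ (equivalently of $BU$, by stability) and identify the $k$-invariants stage by stage using Bott's theorem that a generator of $\widetilde{K}(S^{2j})$ has $c_j = \pm(j-1)!$ times a generator of $H^{2j}(S^{2j},\Z)$. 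That analysis is the entire content of Peterson's theorem --- it is why the factorials and the torsion hypothesis appear in the statement at all --- so until it is carried out, what you have is a correct and well-organized road map of Peterson's proof rather than a self-contained proof.
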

As we will see below, the vanishing of the Chern classes is a necessary condition for the trivialitiy of a vector bundle, but Peterson's theorem tells in what situations this condition is also sufficient. 

The proof of this theorem relies on a lot of tools from algebraic geometry and is beyond the scope of this work. We refer the interested reader to the original publication \cite{peterson-59}.

\paragraph{Cohomology} % (fold)
\label{geo_analysis:cohomology_B}
In order to apply Peterson's theorem to the Bloch bundle, we need to verify condition (ii) in the assumptions, that is we need to know more about the integer cohomology groups associated to the torus. 

We will denote the \emph{$j$-th integer cohomology (abelian) group} associated to a CW-complex $X$ by $H^j(X,\Z)$ (the basic notions are covered in standard textbooks of algebraic topology, \eg \cite{hatcher-02}). It is customary to define $H^0(X,\Z) := \Z^{\oplus g}$ where $g$ is the number of path-connected components in $X$. 
The \emph{integer cohomology ring} is by definition the direct sum 
\begin{align}
	H^{\bullet}(X,\Z) := \bigoplus_{j = 0}^{\infty} H^j(X,\Z) 
\end{align}
endowed with the \emph{cup product} $\smallsmile$. The Brillouin zone $\BZ$ is topologically equivalent to the torus $\T^d$ and the cohomology ring 
\begin{align*}
	H^{\bullet}(\BZ,\Z) \simeq \bigwedge_{\Z}(\alpha_1 , \ldots , \alpha_d)
\end{align*}
has the structure of an exterior algebra over $\Z$ generated by finite products 
\begin{align*}
	\alpha_{j_1} \smallsmile \cdots \smallsmile \alpha_{j_k}
	, 
	&&
	1 \leqslant j_1 < \ldots < j_k \leqslant d
	, 
\end{align*}
of $\alpha_{j_l} \in H^1(\BZ,\Z)$ \cite[Example~3.15]{hatcher-02}. In particular this means that $\alpha_j \smallsmile \alpha_k = - \alpha_k \smallsmile \alpha_j$ for $j \neq k$ and $\alpha_j \smallsmile \alpha_j = 0$. Hence, $H^j(\BZ,\Z) = \{ 0 \}$ follows if $j > d$ and $H^j(\BZ,\Z) = \Z^{\oplus n(d,j)}$ is the direct sum of $n(d,j) = \nicefrac{d!}{j! (d-j)!}$ copies of $\Z$ if $0 \leqslant j \leqslant d$. This implies the non-trivial groups in the integer cohomology ring are all \emph{torsion free}, \ie all elements are of infinite order, and the cohomology of the Brillouin zone satisfies condition (ii) of Peterson's theorem. Hence, we obtain the following Corollary: 
\begin{cor}\label{geo_analysis:cor:triviality_Bloch_bundle_Peterson}
	Assume the Bloch bundle $\bundle_{\specrel}$ verifies the stable rank condition, $d \leqslant 2m$. Then $\bundle_{\specrel}$ is trivial if and only if all Chern classes vanish, \ie $c_j(\bundle_\specrel) = 0$ for any $j=1 , \ldots , \lfloor \nicefrac{d}{2} \rfloor$.
\end{cor}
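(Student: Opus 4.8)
The plan is to deduce the statement as an immediate application of Peterson's classification theorem (Theorem~\ref{teo:pet-59}) to the Bloch bundle $\bundle_{\specrel}$. By Lemma~\ref{lem:BBisVB} we already know that $\bundle_{\specrel}$ is a genuine rank $m$ hermitean vector bundle over the $d$-dimensional CW-complex $\BZ \cong \T^d$, so the hypotheses of Peterson's theorem are meaningful for it. Since that theorem furnishes precisely the equivalence ``triviality $\Leftrightarrow$ $c_j(\bundle_{\specrel}) = 0$ for all $j = 1 , \ldots , \lfloor \nicefrac{d}{2} \rfloor$'' whenever its two conditions (i) and (ii) are met, the entire task reduces to verifying those two conditions for $\bundle_{\specrel}$.

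First I would dispose of condition~(i), the stability requirement $d \leqslant 2m$: this is nothing but the stable rank hypothesis imposed in the statement of the corollary, so it holds by assumption. Next I would check condition~(ii), namely that the only torsion in each group $H^{2j}(\BZ,\Z)$ is relatively prime to $(j-1)!$. Here I simply invoke the cohomology computation carried out in the preceding paragraph: the integer cohomology ring of the torus is the exterior algebra $\bigwedge_{\Z}(\alpha_1 , \ldots , \alpha_d)$, whence each $H^{2j}(\BZ,\Z) \cong \Z^{\oplus n(d,2j)}$ is a \emph{free} abelian group and therefore torsion-free. Consequently the torsion subgroup is trivial and condition~(ii) is satisfied vacuously, uniformly in $m$, $d$ and $j$.

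With both hypotheses in hand, Peterson's theorem yields the asserted equivalence verbatim, which completes the proof. I do not anticipate any genuine obstacle here: the deep content — Peterson's classification result on one side and the cohomology of $\T^d$ on the other — has already been supplied, and the necessity direction (a trivial bundle has vanishing Chern classes) is in any case elementary from the functoriality of characteristic classes. The only point that requires a moment's attention, and the closest thing to a ``hard part,'' is recognizing that the torsion-freeness of $H^{\bullet}(\T^d,\Z)$ renders condition~(ii) automatic; matching the corollary's stable rank hypothesis to condition~(i) is then immediate.
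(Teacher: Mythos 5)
Your proposal is correct and follows essentially the same route as the paper: verify Peterson's condition~(i) via the stable rank hypothesis and condition~(ii) via the torsion-freeness of $H^{\bullet}(\T^d,\Z)$ (the exterior algebra structure making every cohomology group a free abelian group), then invoke Theorem~\ref{teo:pet-59}. The paper's argument is exactly this, presented as the cohomology computation in the paragraph preceding the corollary.
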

% 
% paragraph cohomology_of_bz_ (end)

\paragraph*{Chern classes} % (fold)
\label{ssub:chern_classes}
We conclude this section with a brief introduction to Chern classes, the interested reader may find more material in \cite{milnor-stasheff-74,luke-mishchenko-98}. The notion of pullback bundle introduced in Section~\ref{bloch_bundle:primer_bundle_theory} will play a crucial rôle: since pullbacks of isomorphic bundles are themselves isomorphic, any $f : X \to Y$ induces a map $f^* : \vecm(Y) \to \vecm(X)$ (the pullback preserves the rank and reverses the order of $X$ and $Y$). In particular, pullbacks of trivial bundles are still trivial. 

Furthermore, the structure of the pullback bundle depends only on the homotopy class of $f$: if $f,g : X \to Y$ are homotopic, then $f^*(\bundle) \simeq g^*(\bundle)$ holds and the two pullback bundles represent the same element in $\vecm(X)$. This gives rise to the possibility to study $\vecm(X)$ in terms of homotopy invariants. 

All rank $m$ vector bundles over CW-complexes can be seen as pullbacks of the so-called rank $m$ \emph{universal vector bundle} $\bundle_{\mathrm{u}}^{m} = \bigl ( \bspace_{\mathrm{u}}^{m},G^{m}_{\C},\pi_{\mathrm{u}}^{m} \bigr )$ \cite[Chapter~1.2]{hatcher-09}. It turns out that any rank $m$ bundle $\bundle$ over a CW-complex $X$ can be written as a pullback of the universal vector bundle of rank $m$ with respect to a \emph{classifying map} $f : X \to G^{m}_{\C}$, that is $f^*(\bundle_{\mathrm{u}}^{m}) \simeq \bundle$. As mentioned before, pullbacks of homotopic maps yield isomorphic bundles and hence, this construction only depends on the homotopy class $[f] \in \bigl [ X , G^{m}_{\C}\bigr ]$. In fact, there exists an isomorphism between $\bigl [ X , G^{m}_{\C} \bigr ]$ and $\vecm(X)$ via $[f] \mapsto f^* ( \bundle_{\mathrm{u}}^{m} )$ \cite[Theorem~1.16]{hatcher-09} which we can use to classify $\vecm(X)$ by studying $\bigl [ X , G^{m}_{\C} \bigr ]$. 

While studying $\bigl [ X,G^{m}_{\C} \bigr ]$ for general CW-complexes $X$ is out of reach, in case $X = \mathcal{B}^1 \simeq \T^1$, there is only the trivial homotopy class, 
\begin{align}
	\bigl [ \mathcal{B}^1,G^{m}_{\C} \bigr ] = \pi_1 \bigl ( G^{m}_{\C} \bigr ) = \{ 0 \} 
	, 
	\label{geom_analysis:eqn:homotopy_classes_T1}
\end{align}
which implies all complex vector bundles over $\mathcal{B}^1$ are trivial. 

The second ingredient are topological invariants of the base space $G^{m}_{\C}$ of the universal bundle. This space is constructed as a direct limit of a sequence of spaces: let $m \leqslant k$ be two non-negative integers and define the space $G^{(m,k)}_{\C}$ of $m$-dimensional subspaces of $\C^k$ which can be seen as the set of $m$-dimensional hyperplanes passing through the origin of $\C^k$. Any $G^{(m,k)}_{\C}$ can be endowed with the \emph{structure of a finite CW-complex,} making it into a compact manifold of dimension $2 m (k-m)$ called \emph{Grassmann manifold.} The inclusions $\C^k \subset \C^{k+1} \subset \ldots$ yields inclusions $G^{(m,k)}_{\C} \subset G^{(m,k+1)}_{\C} \subset \ldots$, and we can equip 
\begin{align*}
	G^{m}_{\C} := \bigcup_{k = m}^{\infty} G^{(m,k)}_{\C}
\end{align*}
with the direct limit topology. The resulting paracompact space $G^{m}_{\C}$ is then the set of $m$-dimensional subspaces of $\C^{\infty}$. 

The cohomology ring $H^{\bullet}(G^{m}_{\C},\Z) \simeq \Z [\mathfrak{c}_1 , \ldots , \mathfrak{c}_m]$ can be seen as a ring of polynomials with integer coefficients and $m$ generators. These generators $\mathfrak{c}_j \in H^{2j}(G^{m}_{\C},\Z)$ are called \emph{universal Chern classes} and there are no polynomial relationships between them (cf.~\cite[Theorem 3.9]{hatcher-09} or \cite[Theorem 14.5]{milnor-stasheff-74}). 

The Chern classes of a general vector bundle $\bundle$ are constructed as follows: let $f : X \to G^{m}_{\C}$ be any map such that $f^* ( \bundle_{\mathrm{u}}^{m} ) \simeq \bundle$. Then for any $j$, the $f^* : H^j(G^{m}_{\C} , \Z) \to H^j(X,\Z)$ are homomorphisms between the cohomology groups. We now define the \emph{$j$-th Chern class of $\bundle$} as 
\begin{align}
	c_j(\bundle) := f^*(\mathfrak{c}_j) \in H^{2j}(X,\Z) 
	, 
	&&
	j = 1 , 2 , \ldots 
	. 
\end{align}
A closer inspection of the definition of $f^*$ reveals that it only depends on the homotopy class of $f$, and thus isomorphic bundles have the same family of Chern classes. 

Because trivial vector bundles $\bundle \simeq \epsilon^m$ can be seen as pullbacks of constant maps $f$, the associated homomorphism $f^*$ is trivial. Hence, Chern classes of trivial vector bundles are trivial, $c_j(\bundle) = f^*(\mathfrak{c}_j) = 0$. In this sense, Chern classes can be used as a measure of the ``non-triviality'' of a vector bundle. 
% paragraph chern_classes (end)
% subsection chern_classes_and_triviality (end)

\subsection{Vanishing of odd Chern classes} % (fold)
\label{geo_analysis:odd_c_j_zero}
From their very definition, Chern classes are \emph{functorial} objects: given a rank $m$ vector bundle $\bundle$ over $X$ and a continuous function $f:Y\to X$, the Chern classes of the two vector bundles $\bundle$ and $f^{\ast}(\bundle)$ are related by 
\begin{align}
	c_j \bigl ( f^{\ast}(\bundle) \bigr ) = f^{\ast} \bigl ( c_j(\bundle) \bigr )
	, 
	&&
	j = 1 , \ldots , m
	,
	\label{geo_analysis:eqn:functioriality_pullback_Chern_classes}
\end{align}
where on the right $f^{\ast}$ denotes the induced homomorphism between cohomology groups $f^{\ast} : H^{2j}(X,\Z) \to H^{2j}(Y,\Z)$. 

Furthermore, there is a relation between the Chern classes of a bundle $\bundle$ and its conjugate $\bundle^\ast$: up to a sign, the two must agree, \ie $c_j(\bundle^{\ast}) = (-1)^j \, c_j(\bundle)$ holds for any $j = 1 , \ldots , m$ \cite[Lemma~14.9]{milnor-stasheff-74}. 

These two facts can be combined for the Bloch bundle $\bundle_{\specrel}$: in view of Theorem~\ref{mag_symm:thm:consequence_mag_symmetry_bundle_geometry} (i), the conjugate bundle $\bundle_{\specrel}^{\ast}$ is the pullback of the Bloch bundle $\bundle_{\specrel}$ with respect to the map $f : \BZ \to \BZ$, $k \mapsto -k$. Hence, its Chern classes satisfy 
\begin{align}
	f^* \bigl ( c_j(\bundle_{\specrel}) \bigr ) = (-1)^j \, c_j(\bundle_{\specrel}) 
	, 
	&&
	j = 1 , \ldots , m
	. 
	\label{geo_analysis:eqn:time_reversed_chern_classes}
\end{align}
As we will see, this will imply $c_j(\bundle_{\specrel}) = 0$ if $j$ is odd. The argument is very similar, but slightly more elaborate than in the case of parity: in the presence of parity symmetry, the strong relation $f^*(\bundle_{\specrel}) \simeq \bundle_{\specrel}$ implies 
\begin{align*}
	c_j(\bundle_{\specrel}) &= c_j \bigl ( f^*(\bundle_{\specrel}) \bigr ) 
	= c_j(\bundle_{\specrel}^*) 
	= (-1)^j \, c_j(\bundle_{\specrel}) 
\end{align*}
for all $j = 1 , \ldots , m$. If $j$ is odd, this implies $2 c_j(\bundle_{\specrel}) = 0$. Since the cohomology groups of the torus are all torsion free, we conclude $c_j(\bundle_{\specrel}) = 0$. 

Even without parity symmetry, (magnetic) time-reversal symmetry of $H^A$ suffices to ensure the triviality of the odd Chern classes. 
\begin{thm}\label{geo_analysis:thm:triviality_odd_c_j}
	The odd Chern classes associated to the Bloch bundle $\bundle_{\specrel}$ are trivial, \ie 
	\begin{align*}
		c_{2j-1}(\bundle_{\specrel}) = 0
	\end{align*}
	holds for all $j = 1 , \ldots , \lfloor \nicefrac{m}{2} \rfloor$
\end{thm}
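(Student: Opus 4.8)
The starting point is the relation already recorded in \eqref{geo_analysis:eqn:time_reversed_chern_classes}, namely
\begin{align*}
	f^* \bigl ( c_j(\bundle_{\specrel}) \bigr ) = (-1)^j \, c_j(\bundle_{\specrel})
\end{align*}
for the map $f : \BZ \to \BZ$, $k \mapsto -k$; this is the immediate combination of Theorem~\ref{mag_symm:thm:consequence_mag_symmetry_bundle_geometry}~(i) with the functoriality \eqref{geo_analysis:eqn:functioriality_pullback_Chern_classes} and the conjugation rule $c_j(\bundle^{\ast}) = (-1)^j c_j(\bundle)$. In contrast to the parity case, $f$ is \emph{not} homotopic to the identity, so we may not simply discard $f^*$ on the left-hand side. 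The plan is therefore to compute the induced homomorphism $f^*$ on the even cohomology $H^{2j}(\BZ,\Z)$ directly, show that it is the identity, and then read off the conclusion from the displayed relation.

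To analyze $f^*$ I would exploit the explicit structure of the cohomology ring of the torus recalled in Section~\ref{geo_analysis:cohomology_B}: since $\BZ \simeq \T^d$, the ring $H^{\bullet}(\BZ,\Z)$ is the exterior algebra generated by the degree-one classes $\alpha_1 , \ldots , \alpha_d \in H^1(\BZ,\Z)$, so that every element of $H^{2j}(\BZ,\Z)$ is an integer combination of cup products $\alpha_{i_1} \smallsmile \cdots \smallsmile \alpha_{i_{2j}}$ of exactly $2j$ generators. The map $f$ is the inversion automorphism of the torus; it acts on $\pi_1(\BZ) \simeq \Z^d$, hence on $H_1(\BZ,\Z)$ and dually on $H^1(\BZ,\Z)$, by multiplication by $-1$, so that $f^* \alpha_i = - \alpha_i$ for every $i$. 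Since $f^*$ is a ring homomorphism for $\smallsmile$, it follows that
\begin{align*}
	f^* \bigl ( \alpha_{i_1} \smallsmile \cdots \smallsmile \alpha_{i_{2j}} \bigr ) = (-1)^{2j} \, \alpha_{i_1} \smallsmile \cdots \smallsmile \alpha_{i_{2j}} = \alpha_{i_1} \smallsmile \cdots \smallsmile \alpha_{i_{2j}} ,
\end{align*}
whence $f^* = \id$ on $H^{2j}(\BZ,\Z)$ for every $j$.

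Combining the two facts, the displayed relation collapses to $c_j(\bundle_{\specrel}) = (-1)^j \, c_j(\bundle_{\specrel})$. For every odd index this forces $2 \, c_j(\bundle_{\specrel}) = 0$; since each non-trivial group $H^{2j}(\BZ,\Z)$ is a finite direct sum of copies of $\Z$, and hence torsion free (again Section~\ref{geo_analysis:cohomology_B}), an element annihilated by $2$ must vanish, so $c_j(\bundle_{\specrel}) = 0$ for all odd $j$. In the notation of the statement this is precisely $c_{2j-1}(\bundle_{\specrel}) = 0$ for $j = 1 , \ldots , \lfloor \nicefrac{m}{2} \rfloor$. I expect the only genuinely delicate point to be the middle step, i.e. justifying that inversion negates each degree-one generator and thus acts trivially on even-degree classes; everything else is formal. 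An equivalent route that sidesteps the explicit appeal to $\pi_1$ is to write $f$ as the $d$-fold product of the antipodal self-map of $S^1$ and invoke the Künneth decomposition of $H^{\bullet}(\T^d,\Z)$, each circle factor contributing a sign $-1$ in degree one and hence an overall $(-1)^{2j} = 1$ in degree $2j$.
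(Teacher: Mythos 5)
Your proof is correct, and it reaches the conclusion by a genuinely different mechanism than the paper, even though both rest on the same two inputs (the degree-one computation for the inversion map and torsion-freeness of the torus cohomology). You compute the induced map $f^*$ \emph{directly on cohomology}: using the exterior-algebra structure $H^{\bullet}(\BZ,\Z) \simeq \bigwedge_{\Z}(\alpha_1,\ldots,\alpha_d)$ and the fact that $f^*$ is a ring homomorphism for $\smallsmile$, the sign $f^*\alpha_i = -\alpha_i$ propagates to $(-1)^{2j} = +1$ on every product of $2j$ generators, so $f^* = \id$ on $H^{2j}(\BZ,\Z)$ and the relation $f^*\bigl(c_j(\bundle_{\specrel})\bigr) = (-1)^j c_j(\bundle_{\specrel})$ collapses to $2\,c_j(\bundle_{\specrel}) = 0$ for odd $j$. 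The paper never touches the cup product in its proof: it instead computes the induced map $f_*$ on \emph{homology} (via the K\"{u}nneth decomposition, with $g_*(\ell) = -\ell$ on each circle factor, giving $f_*(\alpha) = (-1)^{2j}\alpha = \alpha$ on $H_{2j}$), and then transfers the bundle relation to cohomology through the Kronecker pairing, using its functoriality $\bscpro{f^*(b)}{\alpha} = \bscpro{b}{f_*(\alpha)}$ and its exactness (Universal Coefficient Theorem plus torsion-freeness) to conclude $c_j(\bundle_{\specrel}) = 0$. Your route buys a cleaner structural statement ($f^*$ is the identity on all even cohomology and minus the identity on odd cohomology) and avoids the pairing machinery altogether; the paper's route avoids any appeal to the multiplicative structure of $H^{\bullet}(\BZ,\Z)$ and only needs non-degeneracy of the pairing. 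Note also that your justification of $f^*\alpha_i = -\alpha_i$ (action on $\pi_1$, hence on $H_1$, then dualization via the Universal Coefficient Theorem) is exactly the paper's homological computation in disguise, so the two arguments are dual formulations of one and the same degree-one fact; your framing remark is also apt, since in the parity case the isomorphism $f^*(\bundle_{\specrel}) \simeq \bundle_{\specrel}$ makes any computation of $f^*$ on cohomology unnecessary, whereas here it is the essential step.
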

\begin{proof}
	In spirit, the proof is the same as for parity, we only need to take the explicit form of $f^*$ into account and know more about the relation between homology and cohomology of $\BZ$. 
	
	We will always identify $\BZ$ with $\T^d$. If $d = 1$, the homology group can be computed explicitly \cite[Example~2.2 and Theorem~2A.1]{hatcher-02}: 
	\begin{align*}
		H_j(\mathcal{B}^1,\Z) = 
		\begin{cases}
			\Z & j = 0 , 1 \\
			\{ 0 \} & j > 1 \\
		\end{cases}
	\end{align*}
	Let $v$ (vertex) and $\ell$ (loop) be the generators of $H_0(\mathcal{B}^1,\Z)$ and $H_1(\mathcal{B}^1,\Z)$, respectively. The map $g:\mathcal{B}^1\to \mathcal{B}^1$ defined by $g:k\mapsto -k$ induces an homomorphisms of homological groups $g_\ast:H_j(\mathcal{B}^1,\Z)\to H_j(\mathcal{B}^1,\Z)$ defined on the generators by $g_\ast(v)=v$ and $g_\ast(\ell)=-\ell$. The first equality comes from the (arbitrary) identification of the vertex $v$ with the point $k=[0]\in\mathcal{B}^1$ and the second is a consequence of the fact that the map $g$ reverses the orientation of the fundamental loop $\ell$. To compute the homology for higher dimensional tori, one can use the Künneth formula \cite[Theorem 3B.6]{hatcher-02}: it provides an isomorphism
	\begin{align*}
			H_j(\mathcal{B}^d,\Z) \simeq \bigoplus_{n\in\N} \bigl ( H_n(\mathcal{B}^{d-1},\Z) \otimes_{\Z}  H_{j-n}(\mathcal{B}^1,\Z) \bigr )
			.
	\end{align*}
	between abelian groups and by recursion, one obtains that the homology of $\BZ$ is torsion free. In particular, one computes $H_j(\mathcal{B}^d,\Z) \simeq \Z^{n(d,j)}$ where $n(d,j) = \nicefrac{d!}{j!(d-j)!}$ for all $0 \leqslant j \leqslant d$ and $n(d,j)$ is zero if $j>d$. The $n(d,j)$ generators of $H_j(\mathcal{B}^d,\Z)$ can be obtained by all the possible tensor products $\ell \otimes \cdots \otimes \ell \otimes v \otimes \cdots \otimes v$ of $j$ copies of $\ell$ end $d-j$ copies of $v$. Now let $f:\mathcal{B}^d\to \mathcal{B}^d$ defined as $f := g \times \cdots \times g$. The induced homomorphism $f_{\ast} : H_j(\mathcal{B}^d,\Z) \to H_j(\mathcal{B}^d,\Z)$ acts as $f_\ast := g_\ast \times \cdots \times g_\ast$, and a simple computation using generators shows 
	\begin{align}
		f_*(\alpha) = (-1)^j \, \alpha 
		\label{geo_analysis:eqn:f_star_action_homology}
	\end{align}
	for all $\alpha \in H_j(\mathcal{B}^d,\Z)$. 
	
	The Universal Coefficient Theorem \cite[Theorem~3.2]{hatcher-02} relates homology and cohomology of the Brillouin zone $\BZ$: the cohomology groups 
	\begin{align}
		H^j(\BZ,\mathbb{F}) \simeq \mathrm{Hom}_{\mathbb{F}}\bigl ( H_j(\BZ,\mathbb{F}) , \mathbb{F} \bigr )
		,
		&& 
		\mathbb{F}=\Z,\Q,\R,\C 
		,
		\label{geo_analysis:eqn:homology_cohomology}
	\end{align}
	are isomorphic to the set of the $\mathbb{F}$-module homomorphisms on $H_j(\BZ,\mathbb{F})$ with values in $\mathbb{F}$. Then for $\mathbb{F}=\Z$ and for each $j = 0 , 1 , \ldots$, we can write the action of $b \in H^j(\BZ,\Z)$ on $\alpha \in H_j(\BZ,\Z)$ via the \emph{Kronecker pairing} $\scpro{ \, \cdot \, }{ \, \cdot \, } : H^j(\BZ,\Z) \times H_j(\BZ,\Z) \longrightarrow \Z$ defined by $\scpro{b}{\alpha} := b(\alpha)$. In view of isomorphism~\eqref{geo_analysis:eqn:homology_cohomology}, the  pairing is exact, \ie we have $\scpro{b}{\alpha} = 0$ for all $\alpha \in H_j(\BZ,\Z)$ if and only if $b = 0$. The other important property of the Kronecker pairing we will use is its functoriality: if $h : \BZ \to \BZ$ is any continuous map, then 
	\begin{align*}
		\bscpro{h^*(b)}{\alpha} = \bscpro{b}{h_*(\alpha)} 
	\end{align*}
	holds where $h^*$ and $h_*$ are the induced homomorphisms in cohomology and homology \cite[Proposition~17.4.2]{Dieck:algebraic_topology:2008}. Now if we plug equations~\eqref{geo_analysis:eqn:time_reversed_chern_classes} and \eqref{geo_analysis:eqn:f_star_action_homology} into the Kronecker pairing, we obtain 
	\begin{align*}
		(-1)^j \, \bscpro{c_j(\bundle_{\specrel})}{\alpha} &= \bscpro{f^* \bigl ( c_j(\bundle_{\specrel}) \bigr )}{\alpha}
		= \bscpro{c_j(\bundle_{\specrel})}{f_*(\alpha)}
		= (-1)^{2j} \, \bscpro{c_j(\bundle_{\specrel})}{\alpha}
		\\
		&= \bscpro{c_j(\bundle_{\specrel})}{\alpha}
	\end{align*}
	for any $\alpha \in H_{2j}(\BZ,\Z)$ and $j = 1 , \ldots , m$. Hence, if $j$ is odd, this means $\bscpro{c_j(\bundle_{\specrel})}{\alpha} = 0$ from which we conclude $c_j(\bundle_{\specrel}) = 0$. 
\end{proof}
%
% subsection vanishing_of_chern_classes (end)

\subsection{The low-dimensional case: $d = 1 , 2 , 3$} % (fold)
\label{geo_analysis:low_d}
We can apply Theorem~\ref{geo_analysis:thm:triviality_odd_c_j} to deduce the triviality of the Bloch bundle $\bundle_{\specrel}$ at least in the low-dimensional case. 
\begin{cor}\label{geo_analysis:cor:low_d}
	If $d = 1 , 2 , 3$, the Bloch bundle $\bundle_{\specrel}$ is trivial independently of $m$. 
\end{cor}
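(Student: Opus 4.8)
The plan is to split the argument into the two regimes $m = 1$ and $m \geqslant 2$, reducing each to results already established. The case $m = 1$ is immediate: Proposition~\ref{geom_analysis:prop:triviality_m_1} asserts the triviality of $\bundle_{\specrel}$ for arbitrary $d$ whenever the geometric rank equals one, hence in particular for $d = 1, 2, 3$. This also disposes of the delicate corner $m = 1$, $d = 3$, which lies \emph{outside} the stable rank regime (since there $d = 3 > 2 = 2m$) and therefore cannot be reached through Peterson's theorem; the constructive section is the only route available in that corner.

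For $m \geqslant 2$ I would invoke Corollary~\ref{geo_analysis:cor:triviality_Bloch_bundle_Peterson}. First I would check the stable rank hypothesis $d \leqslant 2m$: since $m \geqslant 2$ forces $2m \geqslant 4$, and by assumption $d \leqslant 3$, the chain $d \leqslant 3 < 4 \leqslant 2m$ holds, so the stable rank condition is automatically satisfied in the entire range under consideration. Consequently, by that corollary the triviality of $\bundle_{\specrel}$ is equivalent to the vanishing of $c_j(\bundle_{\specrel})$ for all $j = 1 , \ldots , \lfloor \nicefrac{d}{2} \rfloor$.

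Next I would observe that in these low dimensions only the \emph{first} Chern class can be a genuine obstruction. For $d = 1$ one has $\lfloor \nicefrac{d}{2} \rfloor = 0$, so there is nothing to verify (equivalently, equation~\eqref{geom_analysis:eqn:homotopy_classes_T1} already forces triviality of every bundle over $\mathcal{B}^1$). For $d = 2$ and $d = 3$ one has $\lfloor \nicefrac{d}{2} \rfloor = 1$, so only $c_1(\bundle_{\specrel})$ enters the criterion. But $c_1 = c_{2 \cdot 1 - 1}$ is an \emph{odd} Chern class, and Theorem~\ref{geo_analysis:thm:triviality_odd_c_j} guarantees $c_1(\bundle_{\specrel}) = 0$; this is precisely the step where magnetic time-reversal symmetry is consumed. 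Feeding $c_1(\bundle_{\specrel}) = 0$ back into Corollary~\ref{geo_analysis:cor:triviality_Bloch_bundle_Peterson} then yields triviality, completing the case $m \geqslant 2$ and hence the corollary.

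The main point here is bookkeeping rather than a genuine obstacle: one must reconcile the two hypotheses of Peterson's theorem with the low-dimensional combinatorics. The higher Chern classes $c_j$ with $j \geqslant 2$ live in $H^{2j}(\BZ,\Z)$, which vanishes as soon as $2j > d$, so for $d \leqslant 3$ they are automatically zero and impose no condition, while the single surviving class $c_1$ is exactly the one controlled by time-reversal symmetry. The only case demanding separate care is $m = 1$, $d = 3$, where stable rank fails and the classification-theoretic route is unavailable, so one falls back on the constructive argument of Proposition~\ref{geom_analysis:prop:triviality_m_1}.
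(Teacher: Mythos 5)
Your proof is correct and follows essentially the same route as the paper: the case $m = 1$ via the constructive Proposition~\ref{geom_analysis:prop:triviality_m_1}, and the case $m \geqslant 2$, $d = 2,3$ via the stable rank condition, Corollary~\ref{geo_analysis:cor:triviality_Bloch_bundle_Peterson}, and the vanishing of $c_1(\bundle_{\specrel})$ from Theorem~\ref{geo_analysis:thm:triviality_odd_c_j}. The only cosmetic difference is that the paper settles $d = 1$ (any $m$) directly by equation~\eqref{geom_analysis:eqn:homotopy_classes_T1}, whereas you treat $d = 1$, $m \geqslant 2$ as a vacuous instance of Peterson's criterion with the same equation as a fallback; both are valid, and your explicit remark that $m = 1$, $d = 3$ lies outside the stable rank regime correctly explains why the case split is unavoidable.
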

\begin{proof}
	If $m = 1$, then the claim follows for any $d > 0$ from Proposition~\ref{geom_analysis:prop:triviality_m_1}. If $d = 1$ and $m$ arbitrary, this is a consequence of equation~\eqref{geom_analysis:eqn:homotopy_classes_T1}. 
	
	If $d = 2 , 3$ and $m \geqslant 2$, the stable rank condition is satisfied and Corollary~\ref{geo_analysis:cor:triviality_Bloch_bundle_Peterson} applies. Due to the low dimensionality of the base space, only the first Chern class can be non-trivial. Time-reversal symmetry forces $c_1(\bundle_{\specrel}) = 0$ (Theorem~\ref{geo_analysis:thm:triviality_odd_c_j}) and we conclude that the Bloch bundle is trivial. 
\end{proof}
%
% subsection the_low_dimensional_case_d_1_2_3_ (end)

\subsection{The case $d=4$: condition for the triviality} % (fold)
\label{geom_analysis:d=4}
To ensure the triviality of the Bloch bundle $\bundle_{\specrel}$ for $d = 4$ and $m \geqslant 2$, we need to control the second Chern class, \ie we need to check whether $c_2(\bundle_{\specrel}) = 0$. The purpose of this section is to give an equivalent criterion which is, in principle, accessible to numerical computation.

\paragraph{Differential geometry of the Bloch bundle} % (fold)
\label{geom_analysis:d=4:diff_geo}
Since $\BZ$ has the structure of a smooth manifold, the Bloch bundle can also be treated from a differential geometric point of view. Here, the geometry of $\bundle_{\specrel}$ can be studied by means of a given \emph{connection} and its related \emph{curvature}. We refer to \cite{Kobayashi_Nomizu:diff_geo:1996,Chern_Chen_Lam:diff_geo:2000} for more details. A connection is a collection of local matrix valued $1$-forms associated to some open cover of $\BZ$ which are glued together by means of transition functions of $\bundle_{\specrel}$. 
Since $\bundle_{\specrel}$ is a hermitean vector bundle, connections take values in $\mathfrak{u}(m)$, the Lie algebra of $U(m)$. 
Of special interest is the \emph{Berry connection} $\mathcal{A} := (\mathcal{A}_{\alpha\beta})$ (we denote the Lie algebra indices with Greek letters), which in local coordinates is given by
\begin{align*}
    \mathcal{A}_{\alpha \beta }(k):=\sum_{j=1}^d\mathcal{A}^{(j)}_{\alpha \beta}(k)\ \dd k_j,\qquad \mathcal{A}^{(j)}_{\alpha \beta}(k):= \bscpro{\psi_{\alpha}(k)}{\partial_{k_j} \psi_{\beta}(k)}_{L^2(\WS)} 
\end{align*}
where $\{\psi_1(k),\ldots,\psi_m(k)\}$ is any locally smooth basis for $\ran P_{\specrel}(k)$, \eg the Bloch functions. The fact that the above expression defines a connection follows from the transformation rule of the local quantities $\mathcal{A}^{(j)}_{\alpha \beta}(k)$ (\cf \cite{Panati:triviality_Bloch_bundle:2006}).

The \emph{Berry curvature} $K:=(K_{\alpha\beta})$ can be derived from the Berry connection $\mathcal{A}$ by means of the structure equation 
\begin{align*}
     K = \dd \mathcal{A} + \mathcal{A} \wedge \mathcal{A} 
    . 
\end{align*}
As it is evident from the above equation, $K$ is a collection of local $\mathfrak{u}(m)$ valued $2$-forms.

 Functions of the Berry curvature provide a perhaps simplified measure of some aspects of the geometry; the \emph{total (differential) Chern class} 
\begin{align}
    \tilde{c}(\bundle_{\specrel}) := \mathrm{det} \, \Bigl ( 1 + \tfrac{\ii}{2\pi} K \Bigr ) =: 1 + \tilde{c}_1(\bundle_{\specrel}) + \ldots + \tilde{c}_m(\bundle_{\specrel})
\end{align}
is the most prominent example. It is defined in terms of the determinant in the Lie algebra indices and yields a sum of even degree elements $\tilde{c}_j(\bundle_{\specrel}) \in H^{2j}_{\mathrm{dR}}(\BZ)$ of the de Rahm cohomology. The first two non-trivial terms can be explicitly computed to be 
\begin{align}
    \tilde{c}_1(\bundle_{\specrel}) &= \frac{\ii}{2\pi} \mathrm{tr} \, K 
    \\
    \tilde{c}_2(\bundle_{\specrel}) &= \frac{1}{8 \pi^2} \Bigl ( \mathrm{tr} \, \bigl ( K \wedge K \bigr ) - \mathrm{tr} \, K \wedge \mathrm{tr} \, K \Bigr )
    = \frac{1}{8 \pi^2} \mathrm{tr} \, \bigl ( K \wedge K \bigr ) + \frac{1}{2} \tilde{c}_1(\bundle_{\specrel}) \wedge \tilde{c}_1(\bundle_{\specrel}) 
    \label{geom_analysis:eqn:tilde_c_2}
\end{align}
where $\mathrm{tr}$ denotes the trace with respect to the Lie algebra indices. Interestingly, for a given Bloch bundle $\bundle_{\specrel}$ the definition of the $\tilde{c}_j(\bundle_{\specrel})$ is independent (in the sense of de Rahm equivalence classes) of the choice of curvature. A short computation using the Berry connection yields that we can express the first differential Chern class in terms of $P_{\specrel}$ and its derivatives: the local expression is given as 
\begin{align}
	\tilde{c}_1(\bundle_{\specrel}) = \frac{\ii}{2\pi} \sum_{1 \leqslant l < j \leqslant d} \mathrm{Tr}_{L^2(\WS)} \bigl ( \tilde{Q}_{lj}(P_{\specrel}) \bigr ) \, \dd k_l \wedge \dd k_j
	,
\end{align}
where
\begin{align*}
	\tilde{Q}_{lj}(P_{\specrel})(k) := P_{\specrel}(k)\, \big[ \partial_{k_l} P_{\specrel}(k) , \partial_{k_j} P_{\specrel}(k) \big ] \, P_{\specrel}(k) 
	. 
\end{align*}
One can find an expression for $\tilde{c}_2(\bundle_{\specrel})$ which has a similar structure. As will be explained below, in the present context $\tilde{c}_1(\bundle_{\specrel}) = 0$ and we will only need to compute the first term in equation~\eqref{geom_analysis:eqn:tilde_c_2}. Then another straightforward computation yields 
\begin{align}\label{eq:difc_2}
    \tilde{c}_2(\bundle_{\specrel}) = \frac{1}{8 \pi^2} \mathrm{tr} \bigl ( K \wedge K \bigr ) = \frac{1}{(2\pi)^2}  \mathrm{Tr}_{L^2(\WS)} \bigl ( \tilde{W}_{\specrel} \bigr ) \, \dd k,
\end{align}
where $\dd k := \dd k_1 \wedge \dd k_2 \wedge \dd k_3 \wedge \dd k_4$ is the normalized volume form on $\BZ$ and locally
\begin{align*}
	\tilde{W}_{\specrel}(k) :=  \tilde{Q}_{12}(P_{\specrel})(k) \, \tilde{Q}_{34}(P_{\specrel})(k) - \tilde{Q}_{13}(P_{\specrel})(k) \, \tilde{Q}_{24}(P_{\specrel})(k) + \tilde{Q}_{14}(P_{\specrel})(k) \, \tilde{Q}_{23}(P_{\specrel})(k).
\end{align*}
%
% paragraph differential_geometry_of_the_bloch_bundle (end)

\paragraph{Link between differential and topological Chern classes} % (fold)
\label{geom_analysis:d=4:de_Rahm_integer_cohomology}
Since $\BZ$ is a smooth manifold, the \emph{de Rahm Theorem} \cite[Theorem 4.3]{Chern_Chen_Lam:diff_geo:2000} states that $H_{\mathrm{dR}}^k(\BZ) \simeq H^k(\BZ,\R)$. Furthermore, the canonical injection $\Z \hookrightarrow \R$ induces a homomorphism 
\begin{align*}
    \jmath : H^k(\BZ,\Z) \longrightarrow H^k(\BZ,\R) \simeq H^k_{\mathrm{dR}}(\BZ)
    . 
\end{align*}
As the homology of $\BZ$ is torsion free, we deduce from the Universal Coefficient Theorems for homology and cohomology that $H^k(\BZ,\R) \simeq H^k(\BZ,\Z)\otimes\R$, which implies that $\jmath$ is injective.
The \emph{Chern-Weil Theory} \cite[Appendix C]{milnor-stasheff-74} proves the relevant fact that  $\jmath$ relates topological and differential Chern classes, \ie $\tilde{c}_j(\bundle_{\specrel}) = \jmath \bigl ( c_j(\bundle_{\specrel}) \bigr )$. The injectivity of $\jmath$ assures that $c_j(\bundle_{\specrel}) = 0$ if and only if $\tilde{c}_j(\bundle_{\specrel}) = 0$. In the following we will refer to $\tilde{c}_j(\bundle_{\specrel})$ is the \emph{differential representative} of the $j$-th Chern class.

\paragraph{Criteria for triviality} % (fold)
\label{geom_analysis:d=4:cirteria}
Now we come back to the case $d = 4$ and $m \geqslant 2$. Then the differential Chern class $\tilde{c}_1(\bundle_{\specrel}) = \jmath ({c}_1(\bundle_{\specrel})) = 0$ vanishes by Theorem~\ref{geo_analysis:thm:triviality_odd_c_j} which assures ${c}_1(\bundle_{\specrel})=0$. This justifies the first equality in \eqref{eq:difc_2}.
Hence, we have the following criterion for the triviality of the Bloch bundle: 
\begin{thm}\label{theo:CC_d4}
	Assume $d = 4$ and $m \geqslant 2$. Then the Bloch bundle $\bundle_{\specrel}$ is trivial if and only if 
	\begin{align}\label{eq:int_ist_charg}
		\int_{\mathcal{B}^4} \dd k \, \mathrm{Tr}_{L^2(\WS)} \bigl ( \tilde{W}_{\specrel}(k) \bigr ) = 0
		.
	\end{align}
\end{thm}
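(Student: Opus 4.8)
The plan is to assemble the classification results established earlier into a chain of equivalences that converts the topological statement ``$\bundle_{\specrel}$ is trivial'' into the concrete scalar condition \eqref{eq:int_ist_charg}. First I would invoke the stable rank condition: since $m \geqslant 2$ we have $d = 4 \leqslant 2m$, so Corollary~\ref{geo_analysis:cor:triviality_Bloch_bundle_Peterson} applies and tells us that $\bundle_{\specrel}$ is trivial if and only if all its Chern classes vanish. Because $c_j(\bundle_{\specrel}) \in H^{2j}(\BZ,\Z)$ and $H^{2j}(\BZ,\Z) = \{0\}$ whenever $2j > d = 4$, only $c_1$ and $c_2$ can be non-zero. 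Theorem~\ref{geo_analysis:thm:triviality_odd_c_j} forces the odd class $c_1(\bundle_{\specrel})$ to vanish by (magnetic) time-reversal symmetry, and thus the triviality of $\bundle_{\specrel}$ is equivalent to the single condition $c_2(\bundle_{\specrel}) = 0$.

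Next I would pass from the integral Chern class to its differential representative. By Chern--Weil theory and the injectivity of the homomorphism $\jmath : H^4(\BZ,\Z) \to H^4_{\mathrm{dR}}(\BZ)$ (both discussed in the paragraph on the link between differential and topological Chern classes), we have $\tilde{c}_2(\bundle_{\specrel}) = \jmath\bigl(c_2(\bundle_{\specrel})\bigr)$ and hence $c_2(\bundle_{\specrel}) = 0$ if and only if $\tilde{c}_2(\bundle_{\specrel}) = 0$ as a de Rham class in $H^4_{\mathrm{dR}}(\BZ)$. Now $\tilde{c}_2(\bundle_{\specrel})$ is a closed $4$-form on the closed, connected, oriented $4$-manifold $\BZ \cong \T^4$, so it sits in top degree. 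For such manifolds integration induces an isomorphism $H^4_{\mathrm{dR}}(\BZ) \cong \R$, which means a top-degree form represents the zero cohomology class precisely when its integral over $\BZ$ vanishes. Applying this to $\tilde{c}_2$ and substituting the explicit expression \eqref{eq:difc_2}, I obtain
\begin{align*}
	\tilde{c}_2(\bundle_{\specrel}) = 0
	\quad \Longleftrightarrow \quad
	\int_{\BZ} \tilde{c}_2(\bundle_{\specrel}) = 0
	\quad \Longleftrightarrow \quad
	\frac{1}{(2\pi)^2} \int_{\mathcal{B}^4} \dd k \, \mathrm{Tr}_{L^2(\WS)} \bigl ( \tilde{W}_{\specrel}(k) \bigr ) = 0,
\end{align*}
which is exactly condition~\eqref{eq:int_ist_charg} up to the nonzero constant $(2\pi)^{-2}$. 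Chaining the equivalences $\bundle_{\specrel}$ trivial $\Leftrightarrow$ $c_2(\bundle_{\specrel}) = 0$ $\Leftrightarrow$ $\tilde{c}_2(\bundle_{\specrel}) = 0$ $\Leftrightarrow$ vanishing of the integral then completes the argument.

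The routine algebra (deriving \eqref{eq:difc_2} from the structure equation and the Berry connection) has already been carried out in the preceding section, so it is not the difficulty here. The conceptually load-bearing step is the identification $H^4_{\mathrm{dR}}(\T^4) \cong \R$ via integration, combined with the injectivity of $\jmath$: it is precisely this that lets one replace the abstract vanishing of the second Chern class by a single scalar integral. I would therefore state explicitly that $\BZ \cong \T^4$ is compact, connected and orientable, so that its top-degree de Rham cohomology is one-dimensional and faithfully detected by the integral; this is the only genuinely geometric input beyond the bookkeeping of characteristic classes. One should also record that the hypothesis $m \geqslant 2$ is used twice — to secure the stable rank condition $d \leqslant 2m$ and to guarantee that the second Chern class $c_2(\bundle_{\specrel})$ is even defined.
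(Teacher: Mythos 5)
Your proof is correct and follows essentially the same chain of equivalences as the paper: Corollary~\ref{geo_analysis:cor:triviality_Bloch_bundle_Peterson} together with Theorem~\ref{geo_analysis:thm:triviality_odd_c_j} reduces triviality to $c_2(\bundle_{\specrel}) = 0$, the injectivity of $\jmath$ (Chern--Weil) converts this to $\tilde{c}_2(\bundle_{\specrel}) = 0$, and the vanishing of that de~Rham class is detected by the integral in \eqref{eq:int_ist_charg} via \eqref{eq:difc_2}. The only point where you diverge is the justification of the final equivalence: the paper rewrites $\int_{\mathcal{B}^4} \tilde{c}_2(\bundle_{\specrel})$ as the Kronecker pairing $\bscpro{\tilde{c}_2(\bundle_{\specrel})}{[\mathcal{B}^4]}$ with the fundamental class and then invokes exactness of that pairing (Universal Coefficient Theorem plus torsion-freeness of the (co)homology of the torus), whereas you invoke the standard fact that integration induces an isomorphism $H^4_{\mathrm{dR}}(\T^4) \cong \R$ for a compact, connected, oriented $4$-manifold. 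These are two phrasings of the same underlying fact --- a top-degree class is faithfully detected by pairing it against the generator of top homology --- so your step is sound; the paper's version merely stays inside the Kronecker-pairing formalism it already set up in the proof of Theorem~\ref{geo_analysis:thm:triviality_odd_c_j}, while yours is the more common differential-geometric formulation. Your closing remarks (that \eqref{eq:difc_2} itself presupposes $\tilde{c}_1(\bundle_{\specrel}) = 0$, and that $m \geqslant 2$ is used both for stability and for $c_2$ to be defined) match the paper's own bookkeeping in the paragraph preceding the theorem.
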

\begin{proof}
	By Corollary~\ref{geo_analysis:cor:triviality_Bloch_bundle_Peterson} and  Theorem~\ref{geo_analysis:thm:triviality_odd_c_j}  the triviality of the Bloch bundle is equivalent to $c_2(\bundle_{\specrel}) = 0$. In view of the relation between differential and topological Chern classes, $\tilde{c}_2(\bundle_{\specrel}) = 0$ is equivalent to $c_2(\bundle_{\specrel}) = 0$. To complete the proof we need only shows that \eqref{eq:int_ist_charg} is equivalent to the vanishing of $\tilde{c}_2(\bundle_{\specrel})$.
	
	Since \eqref{eq:difc_2} differs from \eqref{eq:int_ist_charg} only by a constant, we obtain 
	\begin{align*}
		0 = \int_{\mathcal{B}^4}\tilde{c}_2(\bundle_{\specrel}) = \bscpro{\tilde{c}_2(\bundle_{\specrel})}{[\mathcal{B}^4]} 
	\end{align*}
	where in the second equality we have rewritten the integration over $\mathcal{B}^4$ as the Kronecker pairing between $H^4(\mathcal{B}^4,\R)\simeq H_{\mathrm{dR}}^4(\BZ)$ and $H_4(\mathcal{B}^4,\R) \simeq H_4(\mathcal{B}^4,\Z)\otimes\R$. The so-called fundamental class $[\mathcal{B}^4]$ denotes the cycle associated to the orientable manifold $\mathcal{B}^4$ which generates $H_4(\mathcal{B}^4,\Z)$.
	The Universal Coefficient Theorem (homology and cohomology of $\mathcal{B}^4$ are torsion free) assure that the pairing is exact (\cf equation \eqref{geo_analysis:eqn:homology_cohomology}), namely $\bscpro{\tilde{c}_2(\bundle_{\specrel})}{[\mathcal{B}^4]}=0$ if and only if $\tilde{c}_2(\bundle_{\specrel})=0$ and this concludes the proof.
\end{proof}
The quantity on the right-hand side of equation~\eqref{eq:int_ist_charg} (even if it is non-zero) is known as the \emph{instanton charge} in physics. In view of equation~\eqref{geom_analysis:eqn:tilde_c_2}, up to a factor it is equal to $\mathrm{ch}_2(\bundle_{\specrel}) = - \tfrac{1}{2} \tilde{c}_1(\bundle_{\specrel})^2 + \tilde{c}_2(\bundle_{\specrel})$ \cite{Atiyah_Hitchen_Singer:self_duality:1978} and $\bundle_{\specrel}$ is trivial if and only if it has zero instanton charge. 
% paragraph cirteria (end)
% subsection partially_localized_Wannier (end)

\subsection{Partially localized Wanier systems} % (fold)
\label{geom_analysis:partial_loc}
When $d\geqslant 5$, we are generally unable to prove the triviality of the Bloch bundle. Nevertheless, assuming stable rank, we can at least ensure the existence of $l \leqslant m$ continuous linearly independent sections $\{ \psi_1 , \ldots , \psi_l\}$. By usual arguments (the Oka principle and the Paley-Wiener Theorem), this translates immediately into the existence of $l$ linearly independent and exponentially localized Wannier functions which generate a \emph{partially localized Wannier system} for $\ran P_{\specrel}$.

Our analysis is based on two standard results concerning classification theory of vector bundles. The first is a classical result from K-theory:
\begin{lem}[Theorem 1.2, Chapter 9 of \cite{Husemoller:fiber_bundles:1966}]
\label{geom_analysis:cor:one_extra_section_0}
       Let $\bundle = (\bspace , X , \pi)$ be a rank $m$ hermitian vector bundle over a CW-complex of dimension $d$. Assume that $\bundle$ satisfies the stable rank condition, \ie $\sigma = m - \lfloor \nicefrac{d}{2} \rfloor \geqslant 0$. Then
       \begin{align}
         \bundle \simeq \bundle' \oplus \epsilon^{\sigma},
         \label{geom_analysis:eqn:bundle_splitting}
       \end{align}
       namely $\bundle$ splits into the direct sum of a rank $\sigma$ trivial vector bundle $\epsilon^{\sigma}$ and a non-trivial part $\bundle'$.
\end{lem}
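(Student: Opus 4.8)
The plan is to prove the splitting by repeatedly peeling off a trivial line subbundle, relying on the following obstruction-theoretic fact: a rank $r$ hermitean vector bundle over a CW-complex $X$ with $\dim X = d$ admits a nowhere-vanishing section whenever $2r > d$. To see this, I would pass to the associated unit-sphere bundle, whose fibre is $S^{2r-1}$, and build a section skeleton by skeleton. The obstruction to extending a section from the $k$-skeleton to the $(k+1)$-skeleton lies in $H^{k+1}(X,\pi_k(S^{2r-1}))$. For $k < 2r-1$ this group vanishes because $S^{2r-1}$ is $(2r-2)$-connected, while for $k \geqslant 2r-1$ the cohomological degree satisfies $k+1 \geqslant 2r > d = \dim X$, so it vanishes again. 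Hence all obstructions vanish and a global nowhere-vanishing section $s$ exists.

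First I would convert such a section into a splitting. Since $\bundle$ is hermitean, the fibrewise metric lets me normalize $s$ to a unit section, which spans a trivial line subbundle $\epsilon^1 \subseteq \bundle$; its fibrewise orthogonal complement $s^{\perp}$ is a subbundle of rank $r-1$, and the metric yields an isomorphism $\bundle \simeq \epsilon^1 \oplus s^{\perp}$. I would then iterate this extraction. Starting from $\bundle$ of rank $m$, one peels off a trivial line bundle at each stage, producing a descending chain of orthogonal complements of ranks $m, m-1, \ldots$.

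To extract exactly $\sigma = m - \lfloor \nicefrac{d}{2} \rfloor$ trivial factors, the fact above must be applied to bundles of rank $r \in \{ m, m-1, \ldots, m-\sigma+1 \}$. The smallest rank encountered is $m - \sigma + 1 = \lfloor \nicefrac{d}{2} \rfloor + 1$, and for it $2r = 2 \lfloor \nicefrac{d}{2} \rfloor + 2 \geqslant d+1 > d$, so the hypothesis $2r > d$ holds at every step. After $\sigma$ iterations one obtains $\bundle \simeq \epsilon^{\sigma} \oplus \bundle'$ with $\bundle'$ of rank $m - \sigma = \lfloor \nicefrac{d}{2} \rfloor$, which is precisely the splitting \eqref{geom_analysis:eqn:bundle_splitting}.

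The main obstacle is the obstruction-theory input, and in particular the bookkeeping that the inequality $2r > d$ persists through all $\sigma$ extractions. The key point is the high connectivity of the fibre: all groups $\pi_k(S^{2r-1})$ with $k < 2r-1$ vanish, so there are no low-dimensional obstructions, and the single remaining relevant obstruction in degree $2r$ — together with all obstructions of higher degree — dies because $2r$ exceeds $\dim X = d$. Verifying that the chain of complements never drops below rank $\lfloor \nicefrac{d}{2} \rfloor + 1$ before the final step is exactly what pins down the precise value of $\sigma$ and shows the construction cannot in general be pushed further (leaving $\bundle'$ possibly non-trivial).
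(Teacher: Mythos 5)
Your proof is correct. For comparison: the paper itself offers no proof of this lemma at all --- it is imported wholesale as Theorem~1.2, Chapter~9 of Husemoller's book --- so your argument is a self-contained replacement for that citation, and it is in substance the classical one. The organizational difference from Husemoller's treatment is that there the $\sigma$ trivial summands are produced in one stroke, by finding a section of the associated bundle whose fibre is the complex Stiefel manifold $V_{\sigma}(\C^{m})$ of unitary $\sigma$-frames: this fibre is $2(m-\sigma)$-connected, \ie $2\lfloor \nicefrac{d}{2} \rfloor$-connected, so every obstruction group $H^{k+1}\bigl ( X , \pi_k ( V_{\sigma}(\C^m) ) \bigr )$ dies either because the coefficients vanish (for $k \leqslant 2 \lfloor \nicefrac{d}{2} \rfloor$) or because $k+1 \geqslant 2\lfloor \nicefrac{d}{2} \rfloor + 2 > d$ exceeds the dimension of $X$. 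Your version peels off one trivial line at a time with sphere fibres $S^{2r-1}$, replacing the single connectivity estimate by the bookkeeping $2 \bigl ( \lfloor \nicefrac{d}{2} \rfloor + 1 \bigr ) \geqslant d+1 > d$, which you carry out correctly; the two arguments are equivalent in strength, yours being marginally more elementary (only homotopy groups of spheres are needed) at the cost of an induction. Two small points you should make explicit in a write-up: first, obstruction cohomology a priori carries local (twisted) coefficients, but the coefficient system is constant here because the structure group $U(r)$ is path-connected; second, the precise extension statement is that a section over the $k$-skeleton extends to the $(k+1)$-skeleton after a possible modification over the $k$-cells, not verbatim that ``the obstruction to extending lies in $H^{k+1}$'' --- harmless in your situation, since all the relevant groups vanish identically, but worth stating correctly.
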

In other words, the above result states that there exists an isomorphism between $\mathrm{Vec}^{m}_{\C}(X)$ and $\mathrm{Vec}^{m - \sigma}_{\C}(X)$ provided that $\sigma\geqslant 0$. Moreover, according to the stability property of Chern classes \cite[Lemma~14.3]{milnor-stasheff-74}, $c_j(\bundle) = c_j(\bundle') \in H^{2j}(X,\Z)$.

The second result is a consequence of obstruction theory:
\begin{lem}[Theorem 22 of \cite{luke-mishchenko-98}]\label{geom_analysis:cor:one_extra_section}
       Let $\bundle = (\bspace , X , \pi)$ be a rank $m$ hermitian vector bundle over a CW-complex of dimension $d=2m$. If $c_{m}(\bundle) = 0$ holds in $H^{2m}(X,\Z)$, then there exists a continuous non-vanishing section.
\end{lem}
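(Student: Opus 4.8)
The statement is a classical fact of obstruction theory, and I would prove it by reducing the existence of a nowhere-vanishing section to a section of the associated sphere bundle and then computing the single relevant obstruction. First I would fix a hermitian metric on $\bundle$ (which exists because $X$ is a CW-complex, \cf the discussion after Definition~\ref{defi_herm_vec_bun}) and observe that, after normalising $s \mapsto s/\snorm{s}$, a nowhere-vanishing continuous section of $\bundle$ is the same datum as a continuous section of the unit sphere bundle $S(\bundle)$, whose typical fiber is the sphere $S^{2m-1} \subset \C^m \cong \R^{2m}$. It therefore suffices to produce a section of $S(\bundle)$, and for this I would build the section skeleton by skeleton over $X$.

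The decisive numerical input is that $S^{2m-1}$ is $(2m-2)$-connected, \ie $\pi_i(S^{2m-1}) = 0$ for $0 \leqslant i \leqslant 2m-2$ while $\pi_{2m-1}(S^{2m-1}) \cong \Z$; moreover the structure group $U(m)$ is connected and therefore acts trivially on $\pi_{2m-1}(S^{2m-1})$, so the relevant coefficient system is untwisted. The obstruction to extending a section from the skeleton $X^{(i-1)}$ to $X^{(i)}$ is a cocycle in $C^i(X;\pi_{i-1}(S^{2m-1}))$, which vanishes identically for every $i \leqslant 2m-1$; hence a section exists over $X^{(2m-1)}$ with no obstructed choices. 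Because $\dim X = 2m$ forbids cells of dimension larger than $2m$, the only genuine obstruction is the cocycle $\mathfrak{o} \in C^{2m}(X;\Z)$ encountered when extending over the top cells, and its class $[\mathfrak{o}] \in H^{2m}(X,\Z)$ is the primary obstruction to a section of $S(\bundle)$.

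The heart of the matter is the identification $[\mathfrak{o}] = c_m(\bundle)$: the primary obstruction to a section of the sphere bundle is by definition the Euler class of the underlying oriented real rank-$2m$ bundle $\R(\bundle)$, and for a complex bundle this Euler class equals the top Chern class \cite[Chapter~14]{milnor-stasheff-74}; I would cite this rather than reprove it. Granting the hypothesis $c_m(\bundle) = 0$, the primary obstruction vanishes as a cohomology class, and the last step is to upgrade this to the vanishing of an honest cocycle. For this I would invoke the difference-cochain technique: altering the section only on the open $(2m-1)$-cells (keeping it fixed on $X^{(2m-2)}$) changes $\mathfrak{o}$ by the coboundary $\delta\mathfrak{d}$ of a freely prescribable difference cochain $\mathfrak{d} \in C^{2m-1}(X;\Z)$, so that writing $\mathfrak{o} = \delta\mathfrak{e}$ and choosing $\mathfrak{d} = -\mathfrak{e}$ yields a modified section over $X^{(2m-1)}$ whose obstruction cocycle is identically zero. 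This section then extends over all of $X^{(2m)} = X$, producing the desired nowhere-vanishing section.

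The main obstacle is conceptual rather than computational: it lies in the correct bookkeeping of obstruction theory (including the verification that the coefficients are untwisted) and, above all, in the identification of the primary obstruction of $S(\bundle)$ with the top Chern class $c_m(\bundle)$. Once the connectivity of $S^{2m-1}$, the dimension bound $\dim X = 2m$, and the equality $e(\R(\bundle)) = c_m(\bundle)$ are available, the conclusion is formal. Since the statement is precisely Theorem~22 of \cite{luke-mishchenko-98}, in practice I would simply cite that reference.
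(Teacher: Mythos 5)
Your proposal is correct, but note that the paper itself offers no proof of this lemma at all: it is stated purely as a citation (Theorem~22 of \cite{luke-mishchenko-98}) and immediately put to use in the proof of Theorem~\ref{intro:thm:partial_loc_Wannier}. What you have written is the standard obstruction-theoretic argument that underlies that reference, and all the key steps are in order: the reduction of a nowhere-vanishing section to a section of the unit sphere bundle $S(\bundle)$ with fiber $S^{2m-1}$; the vanishing of all intermediate obstructions because $S^{2m-1}$ is $(2m-2)$-connected; the observation that the connectedness of the structure group $U(m)$ makes the coefficient system untwisted, so the primary obstruction is a well-defined class in $H^{2m}(X,\Z)$; the identification of this class with the Euler class of $\R(\bundle)$ and hence with $c_m(\bundle)$ \cite[Chapter~14]{milnor-stasheff-74}; and the difference-cochain argument that upgrades the vanishing of the cohomology class to the vanishing of an actual obstruction cocycle after modifying the section on the $(2m-1)$-cells, at which point the dimension hypothesis $\dim X = 2m$ finishes the extension. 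The only caveat is the one you flag yourself: the identification of the primary obstruction with the Euler class is a nontrivial theorem that you cite rather than prove, which is exactly the level of rigor the paper adopts by citing \cite{luke-mishchenko-98} wholesale. In short, your argument fills in what the paper delegates to the literature, and it does so correctly.
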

Combining Lemma~\ref{geom_analysis:cor:one_extra_section_0} and Lemma~\ref{geom_analysis:cor:one_extra_section} we can give a proof of Theorem~\ref{intro:thm:partial_loc_Wannier}:
\begin{proof}[Theorem~\ref{intro:thm:partial_loc_Wannier}]
       By the Oka principle, the existence of $l$ continuous linearly independent sections $\{ \psi_1 , \ldots , \psi_l \}$ already implies the existence of $l$ analytic sections which for fixed $k$ form an orthonormal system. Then, according to the Paley-Wiener Theorem, the preimage of $\{ \psi_1 , \ldots , \psi_l \}$ under $\BF$ is defines $l$ exponentially localized generators of the Wannier system associated to $P_\specrel$.
       \begin{enumerate}[(i)]
               \item If $\sigma = 0$, \ie in the unstable rank regime and in the borderline case $d = 2m$, nothing needs to be proven. Hence, assume $\sigma > 0$. The splitting \eqref{geom_analysis:eqn:bundle_splitting}
                   \begin{align*}
                     \bundle_{\specrel} \simeq \bundle_{\specrel}' \oplus \epsilon^{\sigma}
                   \end{align*}
               assures the existence of $l=\sigma$ continuous and independent sections for the Bloch bundle $\bundle_{\specrel}$.
               \item If $d = 4k+2$, then $\bundle_{\specrel}'$ is a rank $\nicefrac{d}{2} = 2k + 1$ vector bundle. Since the Chern classes of $\bundle_{\specrel}$ and $\bundle_{\specrel}'$ agree and $c_{2k+1}(\bundle_{\specrel}) = 0$ by Theorem~\ref{geo_analysis:thm:triviality_odd_c_j}, we invoke Lemma~\ref{geom_analysis:cor:one_extra_section} to conclude that there exists at least one more continuous section in $\bundle_{\specrel}$, \ie $l=\sigma + 1$ in total.
       \end{enumerate}
\end{proof}
%
% subsection partially_localized_Wannier (end)

\subsection{Failure of the Chern classes approach in the unstable case} % (fold)
\label{geom_analysis:limits_characteristic_classes}
The purpose of this section is to construct an explicit example of a vector bundle over $\mathcal{B}^d$ which is non-trivial but whose Chern classes all vanish. In view of Peterson's theorem, $d = 5$ and $m = 2$ is the simplest possible case. 
Such an example proves that in the unstable rank regime, it is not possible to ensure triviality of the Bloch bundle simply by showing the vanishing of all Chern classes. 
The construction of this vector bundle is based on the ideas of Ekendahl \cite{Ekendahl:counterexample_nontrivial_T5_bundle:2010}, but since his original construction makes use of a lot of advanced concepts of algebraic topology, we will sketch a simpler proof here. 

The idea is to define the vector bundle over $\mathcal{B}^5$ as pullback over a non-trivial rank $2$ bundle over $S^5$. Due to the particularly simple structure of the cohomology ring of spheres, the vector bundle has trivial Chern classes by design. 

Up to isomorphism, rank $m$ hermitean vector bundles over $S^d$ are uniquely determined by how the local trivializations over Northern and Southern hemispheres are glued together at the equator by means of a transition function \cite[Proposition~1.11]{hatcher-09}. Topologically, the equator is isomorphic to $S^{d-1}$ and we have to study the homotopy class $[S^{d-1},U(m)]$ which is the homotopy group $\pi_{d-1} \bigl ( U(m) \bigr )$. These homotopy groups have been computed explicitly \cite[Example~4.53]{hatcher-02}, and for $d = 5$ and $m = 2$, we obtain 
\begin{align*}
  \mathrm{Vec}^2_{\C}(S^5) \simeq [S^4,U(2)] = \pi_4 \bigl ( U(2) \bigr ) = \Z_2 
  . 
\end{align*}
Hence, up to isomorphism there are only two rank $2$ bundles over $S^5$, one is trivial, the other one is not; we denote \emph{the} non-trivial vector bundle over $S^5$ by $\eta$. 

Furthermore, all Chern classes of all bundles over $S^5$ are trivial, indeed the Universal Coefficient Theorem for homology \cite[Theorem~3A.3]{hatcher-02} shows that with the exception of $j = 0 , 5$, the cohomology groups with respect to any unital abelian ring $\mathcal{R}$ of $S^5$ are trivial, 
\begin{align*}
  H^j(S^5,\mathcal{R}) = 
  \begin{cases}
    \{ 0 \} & j \in \N \setminus \{ 0 , 5 \} \\
    \mathcal{R} & j = 0 , 5 \\
  \end{cases}
	.
\end{align*}
This implies $H^2(S^5,\mathcal{R})$ and $H^4(S^5,\mathcal{R})$ are trivial and thus, the Chern classes of any vector bundle over $S^5$ vanish. 
\medskip

\noindent
Any continuous map $g : \mathcal{B}^5 \to S^5$ gives rise to a pullback vector bundle $g^*(\eta)$ over  $\mathcal{B}^5$. By functoriality, the Chern classes of the pullback bundle are trivial, 
\begin{align*}
  c_j \bigl ( g^*(\eta) \bigr ) &= g^* \bigl ( c_j(\eta) \bigr ) = g^*(0) = 0 
  , 
  &&
  j = 1 , 2
  . 
\end{align*}
Lastly, we need to show that there exists a $g$ such that the pullback bundle $g^*(\eta)$ is non-trivial. In other words, we need to study the homotopy classes $[\mathcal{B}^5,S^5] \simeq \Z$. It turns out that these homotopy classes are indexed by the so-called \emph{degree} of their elements  and the pullback of a degree $1$ map yields a non-trivial bundle. For an orientable manifold the degree of a map \cite[Chapter 5]{Hirsch:diff_top:1976} is a generalization of the concept of winding number; for our purposes we can simply state that $g$ has degree $1$ if and only if the induced homomorphism $g_{\ast} : H_5(\mathcal{B}^5,\Z) \to H_5(S^5,\Z)$ is an isomorphism. 

The last ingredient needed to prove the non-triviality of $g^\ast(\eta)$ is the notion of \emph{characteristic classes} which generalize the concept of Chern classes \cite[Chapter~20]{Husemoller:fiber_bundles:1966}. They are maps $\gamma_j$ that associate to any rank $m$ bundle $\bundle$ over $X$ an element of $H^j(X,\mathcal{R})$ where $\mathcal{R}$ is an abelian unital ring which are functorial in the following sense: for any $f : X \to Y$ the relation $\gamma_j \circ f^* = f^* \circ \gamma_j$ holds. Here $f^*$ on the left is the pullback and maps $\vecm(Y)$ to $\vecm(X)$ while the $f^*$ on the right is the induced homomorphism between the $j$-th cohomology groups. The set of the $j$-th characteristic classes with coefficients $\mathcal{R}$ is in one-to-one correspondence with elements of $H^j(G^m_{\C},\mathcal{R})$: if $f : X \to G^m_{\C}$ is such that its pullback of the universal vector bundle is isomorphic to $\bundle$, then we obtain  $\gamma_j$ by pulling back elements of $H^j(G^m_{\C},\mathcal{R})$ with respect to $f$ \cite[Chapter~20]{Husemoller:fiber_bundles:1966}. Since trivial bundles can be seen as the pullback with respect to a constant map, the induced homomorphisms are trivial. Thus, a vector bundle is trivial if and only if there are no non-trivial characteristic classes. 

Coming back to the construction, this implies for some ring $\mathcal{R}$ (indeed for $\mathcal{R} = \Z_2$) there exists a non-trivial characteristic class 
\begin{align*}
	\gamma_5 : \mathrm{Vec}^2_{\C}(S^5) \longrightarrow H^5(S^5,\mathcal{R})
\end{align*}
which detects the non-triviality of the bundle $\eta$ by $\gamma_5(\eta) \neq 0$ (\ie the \emph{obstruction}). Using once again that the integer homology groups of $S^5$ and $\mathcal{B}^5$ are torsion free, we obtain from the Universal Coefficient Theorem that 
\begin{align*}
	H^5(X,\mathcal{R}) \simeq \mathrm{Hom} \bigl ( H_5(X,\Z) , \mathcal{R} \bigr ) 
	, 
	&&
	X = S^5 , \mathcal{B}^5 
	. 
\end{align*}
In view of this identification and the fact that $g$ has degree $1$, it is easy to see that the induced homomorphism $g^*$ defines an isomorphism between the two cohomology groups via 
\begin{align*}
	H^5(S^5,\mathcal{R}) \ni \alpha \mapsto g^*(\alpha) := \alpha \circ g_{\ast} \in H^5(\mathcal{B}^5,\mathcal{R})
\end{align*}
and we conclude $g^* \circ \gamma_5 : \mathrm{Vec}^2_{\C}(\mathcal{B}^5) \to H^5(\mathcal{B}^5,\mathcal{R})$ is a non-trivial characteristic class for the vector bundle $g^*(\eta)$. This means the pullback bundle $g^*(\eta)$ with respect to degree $1$ maps is non-trivial. 
% subsection limits_of_characteristic_classes (end)

% \printbibliography
\bibliographystyle{alpha}
\bibliography{denittis}

\end{document}